\definecolor{myurlcolor}{rgb}{0,0,0.7}
\definecolor{myurlcolor1}{rgb}{0,0.7,0.1}
\definecolor{myrefcolor}{rgb}{0,0,0.7}
\definecolor{mybox}{rgb}{0.5,0.5,0.3}
\providecommand{\U}[1]{\protect\rule{.1in}{.1in}}
\newtheorem{theorem}{Theorem}
\newtheorem{claim}{Claim}
\newtheorem{corollary}{Corollary}
\newtheorem{definition}{Definition}
\newtheorem{lemma}{Lemma}
\newtheorem{proposition}{Proposition}
\newenvironment{proof}[1][Proof]{\noindent\textbf{#1.} }{\ \rule{0.5em}{0.5em}}
\newcommand{\wt}[1]{\widetilde{#1}}
\newcommand{\mch}{\mathcal{H}_S}
\newcommand{\mh}{\hat{H}_S}
\newcommand{\psh}{\mathsf{P}(S)}
\newcommand{\Op}[1]{\ket{#1}\bra{#1}}
\newcommand{\pmax}{\mathrm{P}_{\max}(\rho)}
\newcommand{\psuc}{p_{\mathrm{succ}}\left(\mathcal{T},\rho,\{M_b\}\right)}
\newcommand{\psucc}{p_{\mathrm{succ}}\left(\mathcal{T},\rho\right)}
\newcommand{\psuccpassive}{p_{\mathrm{succ}}^{\mathrm{P}}\left(\mathcal{T}\right)}
\newcommand{\mc}[1]{\mathcal{#1}}
\newcommand{\msc}[1]{\mathscr{#1}}
\newcommand{\roa}[1]{\mathsf{A}_r\left(#1\right)}
\newcommand{\troa}[1]{\mathsf{A}'_r\left(#1\right)}
\newcommand{\aw}{\mathsf{A}_w\left(\rho\right)}
\newcommand{\taw}{\mathsf{A}'_w\left(\rho\right)}
\newcommand{\awe}[1]{\mathsf{A}_w\left(#1\right)}
\newcommand{\atryp}{\mathsf{A}'_{\mathrm{try}}\left(\rho,H,n\right)}
\newcommand{\atry}{\mathsf{A}_{\mathrm{try}}\left(\rho,H\right)}
\begin{document}

\title{A  resource theory of activity for quantum thermodynamics in the absence of heat baths}

\author{Swati}
\email{swati@cs.hku.hk}
\affiliation{QICI Quantum Information and Computation Initiative, Department of Computer Science,\\
The University of Hong Kong, Pokfulam Road, Hong Kong}
\affiliation{Institute for Quantum Science and Engineering, Department of Physics,\\
Southern University of Science and Technology (SUSTech), Shenzhen 518055, China}
\author{Uttam Singh}
\email{uttam@cft.edu.pl}
\affiliation{Center for Theoretical Physics, Polish Academy of Sciences,\\ Aleja Lotnik\'ow 32/46, 02-668 Warsaw, Poland}
\affiliation{Centre of Quantum Science and Technology, International Institute of Information Technology,\\
Hyderabad, Gachibowli, Telangana 500032, India}

\author{Giulio Chiribella}
\email{giulio@cs.hku.hk}
\affiliation{QICI Quantum Information and Computation Initiative, Department of Computer Science,\\
The University of Hong Kong, Pokfulam Road, Hong Kong}
\affiliation{Department of Computer Science, University of Oxford, Wolfson Building, Parks Road, Oxford, UK}
 \affiliation{HKU-Oxford Joint Laboratory for Quantum Information and Computation}
\affiliation{Perimeter Institute for Theoretical Physics, 31 Caroline Street North, Waterloo, Ontario, Canada}

\begin{abstract}
Active states, from which work can be extracted by time-dependent perturbations, are an important resource for quantum thermodynamics in the absence of heat baths.
Here we characterize this resource, establishing a resource theory that captures the operational scenario where an experimenter manipulates a quantum system by means of energy-preserving operations and resets to non-active states. Our resource theory comes with simple conditions for state convertibility and an experimentally accessible  resource quantifier that determines the maximum advantage of active states in the task of producing approximations of the maximally coherent state by means of energy-preserving quantum operations. 
\end{abstract}

\maketitle

{\em Introduction.} Quantum technologies have brought new ways to address and control individual quantum systems,  motivating  an extension of the framework of thermodynamics  to the quantum domain \cite{Geusic1967, Faucheux1995, Howard1997, Rousselet1994, Scully2002, Hanggi2009,Vinjanampathy2016, Goold2016, Lostaglio2019, Binder2019}.  A successful approach to quantum thermodynamics is provided by  quantum resource theories \cite{Chitambar2019,coecke2016mathematical}.  Quantum resource theories are  built on the notions   of free states and free operations, that is,  states and operations that are not regarded as  resources. An important example is the resource theory of athermality  \cite{Michal2013, Brandao2013, Faist2015,Brandao2015b}, in which the free states are the Gibbs states, and the free operations are the thermal operations, that is, operations achievable by letting the system interact with a heat bath at a given temperature through an energy-preserving  unitary evolution \cite{Brandao2013}.   

In the absence of heat baths,  the development of a resource-theoretic framework for quantum thermodynamics is a challenging problem. A clue comes from considerations about the maximum work extractable by subjecting the system's Hamiltonian to time-dependent perturbations \cite{Pusz1978, Lenard78}  (see also \cite{Marcin2021}).  
 In this setting, the perturbed evolution generally changes the  system's energy, and the energy difference  can in principle be converted into work.  The maximum energy difference achievable for a given state $\rho$, hereafter denoted by ${\sf Erg} (\rho)$,  is known as the   {\em ergotropy} \cite{Allahverdyan2004}   and is given by
 \begin{align}
\label{ergotropy}
{\sf Erg} (\rho)=  \langle H  \rangle_\rho    -   \min_{U}   \langle H \rangle_{U\rho U^\dag}  \,,  
\end{align}
where  $\langle H \rangle_\sigma  = \Tr [  H\,  \sigma]$ denotes the expectation value of the system's Hamiltonian $H$ on a state $\sigma$, and the minimization runs over arbitrary unitary operators $U$.   
 The states with zero ergotropy  are called {\em passive}, while the other states are  called {\em active}.     These sets of states  has been  studied extensively  throughout the development of quantum thermodynamics, see e.g. Refs. \cite{Allahverdyan2004, Skrzypczyk2015, Perarnau2015, Perarnau-Llobet2015, Avijit2016,  Brown2016,  Amit2016, Sparaciari2017, Uttam2019, Perarnau-Llobet2019, Mir2019, Mir2020, Salvia2020, Raffaele2020, Salvatore2021, Kornikar2021,  Nikolao2021, Tamal2022,  Samgeeth2022}. 
 

Since active states are an important thermodynamical resource,  a natural first step is to develop a resource theory of activity. 
In this resource theory,  the passive states are the obvious choice of free states.   
The choice of free operations, however, is far from obvious.  A minimal requirement  is that free operations should  transform passive states into passive states, {\em i.e.} they should be {\em passivity-preserving} \cite{Uttam2021}.  However, the set of  passivity-preserving operations is difficult to characterize and, more importantly, is not consistent with notion of ergotropy as a resource: as we will show later in this paper, there exist passivity-preserving operations that increase the ergotropy. 
 A fix to this problem is to restrict the attention to the smaller set of {\em ergotropy non-increasing  operations}.     Unfortunately, however, also  the ergotropy non-increasing operations  are  difficult to characterize.  
Moreover, the condition of ergotropy non-increase depends on  the  eigenvalues of the Hamiltonian  in a fine-tuned way which does not seem to correspond to any operationally relevant scenario. 

In this paper, we formulate a resource theory of activity where the free operations can be implemented through   {\em energy-preserving channels and passive resets  (EPCPR)}, the latter being the operations that reset the system to  passive states.    We focus on the setting where  the Hamiltonian is non-degenerate, showing that in this case  all EPCPR operations are ergotropy non-increasing.  We also provide necessary and sufficient  conditions for the convertibility of  relevant sets of quantum states under EPCPR operations,  and we  define a broad class of resource quantifiers, including several variants of the relative entropy distance from the set of passive states. The latter represent an analogue of the second law of quantum thermodynamics,   in a similar way as relative entropy distances to the Gibbs state have been used as a quantum second laws in the presence of heat baths \cite{Brandao2015b}.    Among the entropic quantifiers, we consider   in particular the max relative entropy of activity, which we show to be experimentally accessible through the measurement of a suitable observable  witnessing the presence of activity.    Finally, we show that the max relative entropy of activity admits an operational interpretation as the maximum advantage offered by active states in the task of generating approximations of the maximally coherent state by means of energy-preserving quantum operations.   Overall, our resource theory of activity   provides basic tools for the study of single-system quantum thermodynamics in the absence of heat baths, and a solid framework for connecting activity with other quantum resources, such as quantum coherence and quantum entanglement.

{\em Resource theory of activity.} Quantum resource theories \cite{Chitambar2019,coecke2016mathematical} play a central role in quantum information, see e.g. Refs. \cite{Baumgratz2014, Brandao2015d,Streltsov2015, Singh2015, Bu2016,  Dana2017, Anshu2018, Albarelli2018, Anand2019, Wang2019, Wang2019b, Liu2020, Kristjansson2020, Bhattacharya2021, Berk2021,  Wu2021, Swati2022, Roy2022, Milz2022, Halpern2022, Patra2022}.  A quantum resource theory is defined  by  a set of free states and a set of free operations, usually motivated by physical considerations or by  mathematical convenience.  In the following  we  formulate a resource theory of activity whose  free operations are both physically motivated and mathematically tractable.

Let $S$ be   a $d$-dimensional quantum system with Hamiltonian  $H$, and let 
 ${\sf{St}} (S)$ be the state space of  system $S$.    In this paper we take the Hamiltonian to be non-degenerate, meaning that $H$ has $d$ distinct eigenvalues  $\{E_i\}_{i=1}^d$, corresponding to   a basis of eigenstates $(|i\rangle)_{i=1}^d$ satisfying the condition $H  |i\rangle  =  E_i\,  |i\rangle$.   A state $\tau \in {\sf St} (S)$ is called passive if its ergotropy is zero.  Passive states have a simple characterization \cite{Lenard78}: the state $\tau$    is passive   if and only if it  is of the form 
$\tau  =  \sum_{i=1}^{d}  \,    p_{i} \,  |i\rangle\langle i|$,  
where  $(p_i)_{i=1}^{d}$ is a probability distribution satisfying the condition
$p_i  \le  p_j$ for every $i$ and $j$ such that $\Delta E_{ij}  :  =E_i - E_j \ge 0$.
In the following we will always assume that the eigenvalues of the Hamiltonian have been listed in increasing order, as $E_1  <E_2  < \dots  < E_d$. 


The set $\psh$ of passive states  is the set of free states in our resource theory.   Since in general the tensor product of two passive states is  not a passive state \cite{Lenard78, Sparaciari2017}, our resource theory has to be regarded as  a single-system resource theory, in the sense that it does not consider parallel  composition  of resources.   Nevertheless, one can always consider the situation where the system $S$ is multipartite, and ergotropy  is defined  with respect  to the set of all global unitary operations performed on the composite system.  
The fact that the tensor product of two free states may not be free corresponds to the fact that  local unitary operations may not be able to extract any amount of work even if the global state is active  \cite{Brown2016, Perarnau-Llobet2019, Salvia2020}.

Let us now specify the free operations.  A basic requirement in  every resource theory is that the free operations should not transform free states into non-free states \cite{Chitambar2019}. In a resource theory of activity,  the condition is that free operations must be passivity-preserving \cite{Uttam2021}, that is, they should map $\psh$ into itself.   
General passivity-preserving operations, however, are not consistent with the notion of ergotropy as a resource: for example, consider the quantum channel $\mc C$ defined by  
\begin{align}
\nonumber
&\mc C( \rho )  =  \sum_{k=1}^4  \,   \langle k|   \rho  |k\rangle   \,    \gamma_k    \\ 
\nonumber 
&\gamma_1 =|1\rangle \langle 1|   \qquad  \gamma_2 =  \frac 13 \,  |2\rangle \langle 2|  +  \frac 23  |1\rangle \langle 1| \\
&  \gamma_3 =  \frac 23 \,  |2\rangle \langle 2|  +  \frac 13  |1\rangle \langle 1|  \qquad \gamma_4 =|2\rangle \langle 2|    \, .  \label{counterexample}
\end{align}
It is easy to check that this channel maps all passive states into passive states.  On the other hand, it  changes the ergotropy of  the state  $\rho  =    (|1\rangle \langle 1|   +  |3\rangle \langle 3|   +  |4\rangle \langle 4| )/3$ from    ${\sf Erg}(\rho)   =  \Delta E_{42}/3$ to  ${\sf Erg} (\mc C  (\rho))  =  \Delta E_{21}/9$, which is larger than ${\sf Erg}(\rho)$    whenever $\Delta E_{21}  >  3\,  \Delta E_{42}$. 


A more fitting set of operations for a resource theory of activity is the set of ergotropy non-increasing channels, that is, quantum channels $\mc C$ satisfying the condition $  {\sf Erg}  (\mc C  (\rho))  \le {\sf Erg}(\rho)$  for every state $\rho \in {\sf St} (S)$.   Note that every ergotropy non-increasing channel is automatically passivity-preserving. A limitation of the  ergotropy non-increasing channels, however, is that they are  difficult to characterize.  Moreover, the condition of ergotropy non-increase amounts to a rather fine-tuned relation between the channel $\mc C$ and the eigenvalues of the Hamiltonian, which does not seem to correspond to any operationally relevant scenario.

We now introduce the set of quantum channels  achievable by energy-preserving  channels and passive resets (EPCPR).   
An EPCPR channel is a  quantum channel  $\mc C$  of the form 
\begin{align}
\mc C  (\rho)    =  p  \,  \mc E   (\rho)  + (1-p)  \,  \tau \, , 
\end{align}
where $p \in[0,1]$ is a probability, $\tau$ is an arbitrary passive state, and $\mc E$ is an arbitrary   energy-preserving channel  \cite{chiribella2017optimal}, that is, a quantum channel that preserves the probability distribution of the energy for every possible state; in formula,    $ \langle i|       \mc E (\rho) |i\rangle  = \langle i|   \rho | i\rangle$ for every state $\rho$ and for every  $i\in  \{  1,\dots,  d\}$. Operationally, EPCPR channels  can be implemented by randomly choosing whether to reset the system to a given passive state  or to perform an energy-preserving channel.  In turn, energy-preserving channels can be operationally characterized as the quantum channels that can be  implemented by setting up an energy-preserving interaction between the system and  an auxiliary system with fully degenerate Hamiltonian \cite{chiribella2017optimal}.     Notably, the dimension of the auxiliary system can be chosen without loss of generality to be equal to the dimension of  system $S$.

  
As anticipated, every EPCPR channel is ergotropy non-increasing. The proof is as follows: first, the definition of ergotropy (\ref{ergotropy}) implies the inequality ${\sf Erg}  (\mc C (\rho))  \le      p  \,  {\sf Erg}  (\mc E  (\rho)) + (1-p)\,   {\sf Erg}  (\tau)=   p\,  {\sf Erg}  (\mc E  (\rho))  \le  {\sf Erg}  (\mc E  (\rho)).$  On the other hand, all energy-preserving channels are unital  \cite{chiribella2017optimal}, namely $\mc E (I)  =  I$.        Since the state  $\rho$ can be converted into the state $\mc E (\rho) $ by means of a unital channel, it can also be converted into it by means of random unitary operations \cite{gour2015resource,chiribella2017microcanonical}, say $\mc E(\rho)  =  \sum_j \,  q_j\,  \mc U_j  (\rho)$ for some probabilities $q_j$ and some unitary channels $\mc U_j$. Hence, we have ${\sf Erg}  (  \mc E  (\rho))  =  \Tr[ H \mc E(\rho) ]-  \min_{\mc U} \Tr[ H \mc U \mc E(\rho) ]  =  \Tr[  H  \,  \rho]  - \min_{\mc U}  \sum_j q_j  \Tr[ H\,  \mc U \mc U_j(\rho) ]  \le  \Tr[  H  \,  \rho]  -  \sum_j q_j  \min_{\mc U}   \Tr[ H\,  \mc U \mc U_j(\rho) ]    =  {\sf Erg} (\rho)$.  

We note in passing that the above proof holds also if the set of energy-preserving channels  is replaced by the larger set of  {\rm unital, average energy non-increasing  channels (UAENIC)}, namely unital channels  $\mc E$ satisfying the condition $\langle H\rangle_{\mc E   (\rho)}  \le   \langle H  \rangle_\rho\, , \forall \rho  \in  {\sf St}  (S)$. By the above proof, the set of operations generated by random mixtures of UAENIC operations and passive resetting (UAENICPR) are all ergotropy non-increasing.  

{\em Convertibility conditions.}    A key question in any resource theory is when a given initial  $\rho$ can be converted into a final state $\rho'$ via free operations.  In the resource theory of activity, the answer to this question has an easy-to-test form  in several   relevant cases.  The simplest case is when the input state is diagonal in the energy eigenbasis.  In this case, the convertibility is controlled by two numbers, $p_+$ and $p_-$, defined as follows 
\begin{align}
p_+   :=   \min\left .\left \{  \frac{ \Delta'_{m,m+1} }{\Delta _{m,m+1}} ~\right|~  1\le m \le d-1  \, ,  \Delta_{m,m+1}  \ge 0\right\}   \\
p_-   :=   \max\left .\left \{  \frac{ \Delta'_{m,m+1} }{\Delta _{m,m+1}} ~\right|~  1\le m \le d-1  \, ,  \Delta_{m,m+1}  < 0\right\}  \, ,   
\end{align}
with $\Delta_{m,n}  :  =  \rho_{mm}   -  \rho_{nn}$ and $\Delta'_{m,n}  :  =  \rho'_{mm}   -  \rho'_{nn}$ (here we adopt the convention $\Delta'_{m,n}/\Delta_{m,n}  =  {\sf sign}  (\Delta'_{m,n}) \times \infty$ when $\Delta_{m,n}  =  0$.)   
  With this notation, we can state a necessary and sufficient condition for convertibility:  the state transition $\rho  \mapsto  \rho'$ is achievable by EPCPR channels if and only if $\rho'$ is diagonal in the energy eigenbasis,  $p_{-}\le p_+$, and  $[p_-,  p_+]  \cap  [0,1]\not   =  \emptyset$.

Another relevant case is that of input states  that have coherence across all energy levels, namely  $\rho_{mn}  \not =  0$ for every $m$ and $n$.  In this case, the state transition $\rho \mapsto  \rho'$ is achievable by EPCPR if and only if $p_{-}\le p_+$,    $[p_-,  p_+]  \cup  [0,1]\not   =  \emptyset$,  and  $\min{\rm eigv}  (R)  \ge  -  \min\{  p_+,  1\}$, where $\min{\rm eigv}  (R)$ is the minimum eigenvalue of the Hermitian matrix defined by $R_{mn}  :  =  \rho'_{mn}/\rho_{mn}$.   The derivations of these conditions is  provided in Appendix \ref{append:state-transformation}.

{\em Activity measures.} We have seen that  ergotropy is a resource monotone in our resource theory of activity.  Another class of examples are geometric quantities associated to the set of passive states.  In particular, here we define  the {\em $\alpha$-relative entropies of activity}   $R^{\rm act}_\alpha  (\rho)   :=  \min_{\tau \in  \psh}  \widetilde D_\alpha   \big(\rho  \|  \tau  \big) $, where $\widetilde D_\alpha   \big(\rho  \|  \tau  \big)  $ is the  sandwiched R\'enyi relative entropy  \cite{Muller-Lennert2013, Beigi2013, Wilde2014}, defined as $\widetilde D_\alpha   \big(\rho  \|  \tau  \big)  : =   \Tr \left[   \left(  \tau^{\frac{1-\alpha}{2\alpha}}  \rho  \tau^{\frac{1-\alpha}{2\alpha}}\right)^{\alpha} \right]/(\alpha-  1)$ when ${\sf Supp}  (\rho)  \subseteq {\sf Supp (\tau)}$ and $\infty$ otherwise.     The definition and the data processing inequality for the sandwiched R\'enyi entropies imply that   $R^{\rm act}_\alpha$ is monotonically non-increasing under arbitrary passivity-preserving operations for every $\alpha  \ge 1/2$.   Hence, it is a resource monotone for the resource theory of activity.  Two important limiting cases  are  $\alpha  \to 1$ and $\alpha \to \infty$, in which case $R^{\rm act}_\alpha$ becomes the {\em relative entropy of activity} $R^{\rm act} (\rho)   :  =  \min_{\tau  \in  \psh}  \Tr [  \rho   (\log \rho  -  \log \tau)]$  and the {\em max relative entropy of activity} $R_{\max}^{\rm act}  (\rho)  :  =  \min_{\tau  \in  \psh}  D_{\max}  (\rho \|  \tau)$, with $D_{\max}  (A\|  B)   =  \min \{s\geq 0: A \leq 2^s B\}$ for arbitrary positive operators $A$ and $B$.   

Similar monotones are the {\em inverse $\alpha$-relative entropies of activity}  $  \cancel{R}_\alpha^{\rm act}  (\rho)   :=  \min_{\tau \in  \psh}  \widetilde D_\alpha   \big(  \tau  \|  \rho  \big) $, which are also non-increasing under arbitrary passivity-preserving operations for every $\alpha \ge 1/2$.    The non-increase of    $R^{\rm act}_\alpha$ and  $\cancel{R}_\alpha^{\rm act}$ under energy-preserving channels and passive resets can be seen as an analogue of the second law of thermodynamics  in the absence of heat baths, similarly to the quantum second laws put forward in Ref. \cite{Brandao2015b}.   

In the following we will focus on the max relative entropies  $R^{\rm act}_{\max}  (\rho)$ and $\cancel R_{\max}^{\rm act}  (\rho)  :  =  \min_{\tau \in\psh}  D_{\max}  (\tau\| \rho)$, which play a role in one-shot tasks.    Here,   $R_{\max}^{\rm act}  (\rho)$ is related to  the robustness of  activity with respect to randomizations: indeed, one has  the relation   $R^{\rm act}_{\max}   (\rho)  =    \log  [   \roa \rho + 1   ]$, where  $\roa \rho  := \min_{\sigma\in {\sf St  (S)}}\left\{t\geq 0 ~|~  (\rho+t\sigma)/(1+t)  \in \psh\right\}$ quantifies the minimum amount of randomization needed to turn $\rho$ into a passive state.   Following similar terminology in general resource theories (see e.g. \cite{Takagi2019b}), we call $\roa \rho$ the {\em (generalized) robustness of activity} of the state $\rho$.     Similarly, $\cancel R_{\max}^{\rm act} (\rho)$ is related to the maximum weight of a passive state in a convex decomposition of  $\rho$: indeed, we have the relation   $\cancel R_{\max}^{\rm act}  (\rho)   =   - \log [ 1-  \aw  ]$, with $\aw :=\min_{\sigma\in\sf{St}  (S) ,~ \tau\in\psh}\{t\geq 0:\rho= t\sigma+(1-t)\tau\}$. We call $\aw$ the {\em activity weight} of the state $\rho$.    Due to their relations with the corresponding max relative entropies,  the activity weight and the robustness of activity are non-increasing under arbitrary passivity-preserving operations, and, in particular, under all EPCPR operations.  
 In Appendix \ref{append:as-SDPs-monotone}  we show that the  activity weight and the robustness of activity satisfy the necessary requirements to serve as   {\em bona fide} resource quantifiers \cite{Chitambar2019} and provide upper bounds to the ergotropy.    

\medskip 

{\em Activity witnesses.} A way to experimentally detect a resource is to measure a witness,  that is, an observable that has expectation value in a given interval for all free states, and outside that interval for some non-free states (see e.g. \cite{Horodecki1996, Lewenstein2000, Terhal2000, Guhne2009}).  Here, we define an {\em activity  witness} for a  state $\rho  \in  {\sf St}  (S)\setminus \psh$  as an observable $W\ge  0$ such that  $\langle W \rangle_\rho  >  1$  and $\langle W \rangle_\tau  \le 1$ for every passive state $\tau\in\psh$.
   In Appendix \ref{append:witness-characterization}, we provide a characterization of all possible activity witnesses. Examples   are the observables $W_j  =   j\,  |j\rangle \langle j|$ for $j  \ge 2  $.  Another example is $W_{\rm coh}  =  d \,  |\phi_+\rangle \langle \phi_+|$,  where $|\phi_+\rangle   =  \sum_{i=1}^d  \,  |i\rangle/\sqrt d$ is the canonical maximally coherent state.   

The optimal witness for a given state $\rho$ is the witness $W_{\rm opt}$ satisfying the condition  $\Tr [ W_{\rm opt}  \, \rho]   = \max_W  \Tr[  W \,  \rho]$, where the maximum is over all possible $W\ge 0$ satisfying the condition $ \langle W \rangle_\tau  \le 1$ for every passive state $\tau$.   For the optimal witness, the expectation value can be expressed in terms of the max relative entropy of activity, as follows: 
\begin{align}
\langle W_{\rm opt}  \rangle_\rho    =  2^{R^{\rm act}_{\max}  (\rho)} \, .
\end{align}
This relation, proven in Appendix \ref{append:optimal-witness-and-operational},  shows that the max relative entropy of activity can be estimated from experimental data from the measurement of the optimal activity witness $W_{\rm opt}$.

{\em Activity as a resource for the  approximate generation of maximally coherent states.}   In convex resource theories, the max relative entropy is known to quantify the maximum advantage of non-free states over free states in subchannel discrimination tasks \cite{Takagi2019, Takagi2019b}. 
   For the resource theory of activity, we can  provide  an additional  characterization,   in terms of  the task of transforming a given input  state into the maximally coherent state $|\phi_+\rangle$.   
 In general, we allow the task to be achieved approximately ({\em i.e.} with non-unit fidelity) and probabilistically ({\rm i.e.} with non-unit probability of success) and we take the figure of merit to be the product between the fidelity and the success probability.  For a given quantum operation  $\mc Q$, the advantage of an active state $\rho$ over all  passive states is given by $\min_{\tau \in  \psh}  \langle \phi_+  |    \mc Q  (\rho)   |\phi_+\rangle/   \langle \phi_+  |    \mc Q  (\tau)   |\phi_+\rangle $.   We now consider the maximum advantage over all possible    energy-preserving operations  \cite{chiribella2017optimal}, that is,  all completely positive trace non-increasing maps  achievable by setting up an energy-preserving interaction between the system and an auxiliary system with fully degenerate Hamiltonian, measuring the auxiliary system, and postselecting a subset of the measurement outcomes.
 Notably, we obtain the following relation 
\begin{align}\label{optimaladvantage}
\max_{\mc Q \in  {\sf EPO}}  \min_{\tau \in  \psh}  \frac{\langle \phi_+  |    \mc Q  (\rho)   |\phi_+\rangle}{   \langle \phi_+  |    \mc Q  (\tau)   |\phi_+\rangle}   =  2^{R^{\rm act}_{\max} (\rho)}\, , 
\end{align}
where $ {\sf EPO}$ denotes the set of all energy-preserving operations (see Appendix \ref{append:optimal-witness-and-operational} for the derivation). Eq. (\ref{optimaladvantage})  provides an operational interpretation of the max relative entropy of activity:  $R_{\max}^{\rm act}  (\rho)  $ is the maximum advantage offered by the state $\rho$ in the  probabilistic approximate generation of the maximally coherent state through energy-preserving operations.  

  Interestingly, if the maximization  in Eq. (\ref{optimaladvantage}) is restricted to the set of energy-preserving channels  (energy-preserving operations that are also trace-preserving), we obtain the relation  
  \begin{align}
 \nonumber  \max_{\mc E \in {\sf EPC}}  \min_{\tau \in  \psh}  \frac{\langle \phi_+  |    \mc E  (\rho)   |\phi_+\rangle}{   \langle \phi_+  |    \mc E  (\tau)   |\phi_+\rangle}    &   =d \, \max_{\mc E \in  {\sf EPC}}  \langle \phi_+  |    \mc E  (\rho)   |\phi_+\rangle  \\
&  =   2^{R^{\rm coh}_{\max} (\rho)} \, ,  \label{optimaladvantage1}
  \end{align}  
where $R^{\rm coh}_{\max}  (\rho)   =   \min\{  D_{\max}  (  \rho \|  \gamma)~|~  \gamma \ge 0  \,, \Tr [  \gamma] =1 \, ,  [  \gamma,  H]  =  0 \}$ is the max relative entropy of coherence \cite{Bu2017}, {\em i.e.}
 the minimum of the max relative entropy between $\rho$ and an arbitrary incoherent state  $\gamma$. 
  Here, the  equality in the third line follows from the relations   $\langle \phi_+  | \tau  |\phi_+\rangle = 1/d$ and  $\mc E  (\tau)  =  \tau$, valid for every passive state $\tau$ and for every energy-preserving channel $\mc E$.   Finally, the equality in the second line is proved in Appendix \ref{append:operational-coherence}.   Eq. (\ref{optimaladvantage1})    is also relevant to the resource theory of coherence, as it provides an alternative characterization of the max relative entropy of coherence, in the spirit of Ref. \cite{Bu2017} but different from the results therein.

{\em Discussion.}
In this paper we formulated a resource theory of activity based on  set of EPCPR channels.   It is interesting to consider alternative choices based on larger sets of  operations.   A possible enlargement is the set of UAENICPR  defined earlier in the paper.    Another approach is to follow the lead of the resource theory of coherence and define an analogue of the set of dephasing  covariant operations \cite{Marvian2016,Chitambar2016}, that is, operations $\mc C$ that commute with the complete dephasing map $\Delta (\rho)  :=  \sum_i  \langle i|  \rho  |i\rangle  \,  |i\rangle\langle i|$.   For this purpose, one would need to define the analogue of the complete dephasing in a resource theory of activity.  In Appendix \ref{append:pcos} we show that a perfect analogue is not possible, but some sort of weak analogy is provided by the quantum channel   $\Pi (\rho) :  =  \sum_i \,  \langle i|  \rho  |  i\rangle  \,  \tau_i$, with $\tau_i  =  \sum_{j\le i}  |j  \rangle \langle  j|/i$.   We call $\Pi$ the {\em canonical passivization}. 
 We then define the set of quantum operations that are {\em passivization-covariant},  that is, operations $\mc C$ such that $\mc C \circ \Pi  =  \Pi \circ \mc C$.       In Appendix \ref{append:pcos} we show that all passivization-covariant maps are passivity-preserving, and include the set of  energy-preserving operations as a subset.  Moreover,  we find that Eqs.  (\ref{optimaladvantage}) and (\ref{optimaladvantage1}) still hold when the maximization is extended to the sets of passivization-covariant operations and passivization-covariant channels, respectively.  
 On the other hand, a major drawback of the set of passivization-covariant operations is that they can generally increase the ergotropy: indeed, the example of ergotropy-increasing operation  in Eq. (\ref{counterexample}) is also passivization-covariant. For this reason, passivization-covariant operations do not appear to be an appropriate set of free operations for a resource theory of activity.  
 
We conclude by discussing  the extension  of our results to   Hamiltonians with degenerate spectrum.  In this case, the set of all energy-preserving channels cannot be chosen as the set of free operations, because it includes operations that are not passivity-preserving.  For example, the energy-preserving channel  $\mc E (\rho)   = \langle 1|  \rho  |1\rangle \,  |1\rangle\langle 1|  +  \langle 2|  \rho  |2\rangle \,  |2\rangle \langle 2|    +  \langle 3|  \rho  |3\rangle \,  |2\rangle \langle 2|$ transforms the passive state $\rho =  I/3$ into the active state  $\mc E  (\rho) =  2/3 \,  |2\rangle \langle 2|  + 1/3  |1\rangle \langle 1|$ for a three-dimensional system with degenerate Hamiltonian $H  =  |2\rangle \langle 2|  +  |3\rangle \langle 3|$.   To address this problem, one can consider the subset of  random unitary energy-preserving channels, which are guaranteed to be ergotropy non-increasing by the argument provided earlier in this paper.  Another possible choice is the larger set of unital, energy-preserving channels, which are also ergotropy non-increasing.     A third, intermediate choice is to consider noisy energy-preserving operations, that is, operations generated by setting up an energy-preserving interaction between the system and an auxiliary system with fully degenerate energy levels, initially in the maximally mixed state.    Combined with passive resets, these three sets of operations define three valid candidates for a resource theory of activity in the  degenerate setting.  The most appropriate choice  between these sets is likely to depend on the applications, as random unitary and noisy energy-preserving channels have a clearer physical interpretation, while unital energy-preserving channels have a simpler mathematical characterization.

{\em Conclusions.}  In this paper we introduced a resource theory of activity, providing a resource-theoretic framework for single-system quantum thermodynamics in the absence of heat baths. A natural avenue of future research is to apply our framework to  composite systems where one part is regarded as a work medium and the other part is regarded as a finite heath bath, thereby retrieving a subset of the thermal operations  as effective evolutions of the work medium, and interpolating between the fully bath-less scenario of this paper and the arbitrarily large heath baths considered in earlier studies of quantum thermodynamics.

\begin{acknowledgements}
S. and G.C. acknowledge support from the Hong Kong Research Grant Council through grant 17300918, and through the Senior Research Fellowship Scheme SRFS2021-7S02. 
  U.S. acknowledges  support from Polish National Science Center (NCN) (Grant No. 2019/35/B/ST2/01896). This publication was made possible through the support of the ID$\#$ 62312 grant
from the John Templeton Foundation, as part of the ‘The Quantum Information Structure of Spacetime’
  Project (QISS). The opinions expressed in this project/publication are those of the author(s) and do not necessarily reflect the views of the John Templeton Foundation.   
\end{acknowledgements}

\bibliography{activity}

\appendix

\bigskip

\section{State transitions via EPCPR operations}  
\label{append:state-transformation}

In the non-degenerate case considered in this paper, the set of energy-preserving channels on a $d$-dimensional quantum system  is in one-to-one correspondence with the set of $d\times d$ {\em correlation matrices}, that is,  positive semidefinite matrices $\xi   \ge0 $ satisfying the condition $\xi_{ii}=1$ for every $i\in  \{1,\dots, d\}$.   The correspondence between energy-preserving channels and correlation matrices is established by the following theorem, proven in  Ref. \cite{buscemi2005inverting}: 
\begin{theorem}\label{theo:schur} 
For every energy-preserving channel $\mc E $  there exists   a correlation matrix $\xi$ such that 
\begin{align}\label{schur}
\mc E  (\rho)   =   \xi  \odot \rho  \qquad\forall \rho \in  {\sf St} (S) \, ,
\end{align} 
where $\odot$ denotes the Schur product, defined by the relation $(\xi \odot \rho)_{ij}:  = \xi_{ij}  \,  \rho_{ij}$, $\forall i,j$.      {\em Vice-versa}, every correlation matrix $\xi$ defines an energy-preserving channel $\mc E$ via Eq. (\ref{schur}). 
\end{theorem}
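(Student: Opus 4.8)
The plan is to prove the two implications separately, using throughout that an energy-preserving channel is in particular completely positive, so that its Choi matrix $J=\sum_{k,l}\ket{k}\bra{l}\otimes\mc E(\ket{k}\bra{l})$ is positive semidefinite. I would dispatch the converse first. Given a correlation matrix $\xi\ge 0$ with $\xi_{ii}=1$, define $\mc E$ by Eq.~(\ref{schur}). Trace preservation is immediate from $\Tr[\xi\odot\rho]=\sum_i\xi_{ii}\rho_{ii}=\sum_i\rho_{ii}$, and the energy-preserving property follows from $\bra{i}(\xi\odot\rho)\ket{i}=\xi_{ii}\rho_{ii}=\bra{i}\rho\ket{i}$. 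Complete positivity is a consequence of the Schur product theorem; more explicitly, the Choi matrix of $\rho\mapsto\xi\odot\rho$ is supported on $\mathrm{span}\{\ket{k}\otimes\ket{k}\}_{k=1}^{d}$, where it coincides with $\xi$ and is therefore positive semidefinite.

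For the forward direction I would reconstruct $\xi$ from the action of $\mc E$ on matrix units. The starting observation is that for each $k$ the operator $\mc E(\ket{k}\bra{k})$ is positive semidefinite and, by the energy-preserving condition, has diagonal entries $\bra{i}\mc E(\ket{k}\bra{k})\ket{i}=\delta_{ik}$. I would then invoke the elementary fact that a positive semidefinite matrix with a vanishing diagonal entry has the entire corresponding row and column equal to zero (seen from the nonnegativity of its $2\times 2$ principal minors). This forces $\mc E(\ket{k}\bra{k})=\ket{k}\bra{k}$, and in particular $\xi_{kk}:=\bra{k}\mc E(\ket{k}\bra{k})\ket{k}=1$.

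The crux is to show that the off-diagonal units collapse in the same way. Here I would work with the Choi matrix, whose entries are $(J)_{(k,i),(l,j)}=\bra{i}\mc E(\ket{k}\bra{l})\ket{j}$. The previous step gives $(J)_{(k,i),(k,i)}=\delta_{ik}$, so every diagonal entry of $J$ indexed by a pair $(k,i)$ with $i\ne k$ vanishes; applying the same vanishing-row lemma to $J\ge 0$ annihilates all of its entries outside $\mathrm{span}\{\ket{k}\otimes\ket{k}\}$. Consequently $\bra{i}\mc E(\ket{k}\bra{l})\ket{j}=0$ unless $i=k$ and $j=l$, i.e. $\mc E(\ket{k}\bra{l})=\xi_{kl}\ket{k}\bra{l}$ with $\xi_{kl}:=\bra{k}\mc E(\ket{k}\bra{l})\ket{l}$; extending by linearity yields $\mc E(\rho)=\xi\odot\rho$. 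Finally, restricting $J$ to $\mathrm{span}\{\ket{k}\otimes\ket{k}\}$ reproduces exactly the matrix $\xi$, so $J\ge 0$ forces $\xi\ge 0$, and together with $\xi_{kk}=1$ this certifies that $\xi$ is a correlation matrix.

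The main obstacle is precisely this last collapse: diagonal preservation is a condition on populations only and says nothing directly about coherences, so the content of the theorem lies in showing that complete positivity propagates the population constraint into a rigid constraint on the off-diagonal blocks. Encoding everything in the single positive semidefinite Choi matrix is what lets one turn the population data into the Schur structure and the positivity of $\xi$ in one stroke; the remaining steps are routine.
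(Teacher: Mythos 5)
Your proof is correct, and it takes a genuinely different route from the one the paper relies on. The paper does not prove Theorem \ref{theo:schur} inline; it invokes Ref.~\cite{buscemi2005inverting}, and the same technique reappears in Appendix \ref{append:optimal-witness-and-operational} (the implication $3 \Rightarrow 4$ in the characterization of energy-preserving operations): there one starts from the structure theorem that every Kraus representation of an energy-preserving channel consists of operators commuting with $H$, writes each Kraus operator as $Q_j = \sum_i \alpha_{ij}\, |i\rangle\langle i|$, and reads off $\mc E(\rho) = \xi \odot \rho$ with $\xi_{ii'} = \sum_j \alpha_{ij}\,\overline{\alpha}_{i'j}$, a Gram matrix, hence positive semidefinite, with unit diagonal by trace preservation. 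You instead bypass that structure theorem entirely: you feed only the $d$ conditions $\langle i |\, \mc E(|k\rangle\langle k|)\, |i\rangle = \delta_{ik}$ into the Choi matrix and let the elementary lemma that a positive semidefinite matrix with a vanishing diagonal entry has a vanishing row and column collapse $J$ onto $\mathrm{span}\{|k\rangle\otimes|k\rangle\}$, where $J$ literally is $\xi$, so positivity of $\xi$ and the Schur form come out in one stroke (and the converse direction follows from the same identification of $J$ with $\xi$). This buys two things: your argument is self-contained, needing nothing beyond complete positivity, whereas the commutation theorem for Kraus operators is itself a nontrivial imported result; and it shows that diagonal preservation on the $d$ energy eigenstates alone, together with complete positivity, already forces the Schur-multiplier structure, with the full energy-preserving condition then holding automatically. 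What the Kraus route buys in exchange is an explicit parametrization of $\xi$ as a Gram matrix, which ties directly into the physical dilation picture (an energy-preserving interaction with a degenerate ancilla) used elsewhere in the paper.
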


Using Theorem \ref{theo:schur}, we can write a generic EPCPR channel as 
\begin{align}\label{schurEPCPR}
\mc C  (\rho)   =  p  \,   \xi  \odot \rho  +  (1-p ) \,  \tau  \, ,
\end{align}  
  where $p$ is a probability, $\xi$ is an arbitrary correlation matrix, and $\tau$ is an arbitrary passive state.   

Now, suppose that the state transition $\rho  \mapsto  \rho'$ can be achieved by EPCPR channels, namely $\rho'  =  \mc C  (\rho)$ for some EPCPR channel $\mc C$, written  as in Eq. (\ref{schurEPCPR}).  The equality  $ \rho'  = p  \,   \xi  \odot \rho  +  (1-p ) \,  \tau $ is equivalent to  the conditions
\begin{align}
\nonumber \rho'_{mm}  &=  p\,  \rho_{mm} +  (1-p) \,  \tau_{mm}  \qquad & \forall  m\in  \{1,\dots, d\}\\
\rho_{mn}'   &=  p  \,  \xi_{mn} \rho_{mn}  &  \forall m\not =  n  \, .   \label{EPCPRtransitions}
\end{align}

Since the state $\tau$ in Eq. (\ref{schurEPCPR}) is an arbitrary passive state,  the first condition in Eq.  (\ref{EPCPRtransitions}) is equivalent to the condition  
\begin{align}
\rho_{mm}'   -  p  \, \rho_{mm} \ge \rho_{nn}'   -  p  \, \rho_{nn} 
\qquad \forall m\le n\, ,
\end{align}
which in turn is equivalent to 
\begin{align}
\rho_{mm}' - \rho_{nn}'  \ge  p\,  ( \rho_{mm}   -  \rho_{nn})  \qquad  \forall m\le n \,,
\end{align}
and to 
\begin{align}\label{almostthere}
\rho_{mm}' - \rho_{m+1, m+1}'  \ge  p\,  ( \rho_{mm}   -  \rho_{m+1, m+1})   \qquad \forall m \, .
\end{align}
Another way to express this condition is to define the sets   ${\sf  S}_+   :=  \{   m~|~      \rho_{mm} - \rho_{m+1, m+1}  \ge 0  \}$ and ${\sf  S}_-   :=  \{   m~|~    \rho_{mm} - \rho_{m+1, m+1}  < 0  \}$.   With these definitions, the existence of a  $p\in  [0,1]$ satisfying  Eq. (\ref{almostthere}) is equivalent to existence of a $p\in  [0,1]$ satisfying the condition 
\begin{align}
p_-  \le p  \le p_+
\end{align}
with
\begin{align}
\nonumber &p_{+}  :=   \min_{m\in   {\sf S}_+}  \frac{  \rho'_{mm}  -  \rho'_{m+1 ,  m+1}  }{ \rho_{mm}  -  \rho_{m+1 ,  m+1}  }\\  
\nonumber &p_{-}:  =  \max_{m\in   {\sf S}_-}  \frac{  \rho'_{m+1, m+1}  -  \rho'_{m ,  m}  }{ \rho_{m+1,  m+1}  -  \rho_{m ,  m}  } \, 
\end{align} 
(here we use the convention  $(\rho'_{mm}  -  \rho'_{m+1 ,  m+1})/(  \rho_{mm}  -  \rho_{m+1 ,  m+1}  )  =    {\sf sign}  (\rho'_{mm}  -  \rho'_{m+1 ,  m+1})  \times   \infty$   if $ \rho_{mm}  -  \rho_{m+1 ,  m+1}=0$, where ${\sf sign} (x)  =  x/|x|$.) 

Now, let us suppose that   $\rho_{mn}  \not  =  0$ for every $m$ and $n$  in  $\{1,\dots,  d\}$. In this case, the second condition in Eq. (\ref{EPCPRtransitions})   is equivalent to $\rho'_{mn}/\rho_{mn}  =  p  \xi_{mn}$ for every $m\not =  n$.    Since the matrix $\xi$ in Eq. (\ref{schurEPCPR})  is an arbitrary correlation matrix, the condition  $\rho'_{mn}/\rho_{mn}  =  p  \xi_{mn}  \, \forall m\not  =  n$   is equivalent to nonnegativity of the matrix $p\,  \xi    =  R  +  p\,  I$, with  
\begin{align}\label{ratiomatrix}  R_{mn} :  =  \begin{cases}  \frac{\rho_{mn}'}{ \rho_{mn}}  \qquad &   m\not  =  n  \, ,  \rho_{mn} \not  =  0\\
0    &  m  =  n   \, .
\end{cases}
\end{align}
Since $\rho_{mn}\not =  0$ for every $m,n$, the above equation completely specifies the matrix $R$. Note that the matrix $R$ is Hermitian, and therefore diagonalizable.  
 Hence, nonnegativity of the matrix $R  +  p\,  I$ is equivalent to the condition $\min {\rm eigv}  (R)  \ge  -  p$, where $\min {\rm eigv}  (R)   :=  \min_{|\psi\rangle  : \,  \| |\psi\>\|  =1}  \,  \langle\psi | \xi  |\psi\rangle $ is the minimum eigenvalue of $R$.   
In summary,  Eqs. (\ref{EPCPRtransitions}) are equivalent to the conditions 
\begin{align}
\nonumber &p_+  \ge p_- \\
 \nonumber  &[p_-,  p_+]  \cup  [0,1]  \not  = \emptyset\\
  \label{convertibilityconditions}  &\min {\rm eigv}  (R)   \ge -  \min\{p_{+},  1\} \, .
\end{align}
These conditions completely characterize the state transitions achievable  by EPCPR in the case $\rho_{mn}\not= 0\, , \forall m,n$.   

Suppose that some of the off-diagonal elements $\rho_{mn}$, $m\not =  n$, are zero.  In this case,     Eq. (\ref{ratiomatrix})   does not completely determine the matrix $ R$.  The transition $\rho  \mapsto \rho'$ can be achieved by EPCPR if and only if  the conditions Eq. (\ref{convertibilityconditions}) are satisfied for some Hermitian matrix $R$ satisfying  Eq. (\ref{ratiomatrix}).  For small $d$, the existence of the matrix $R$ can be tested numerically, by randomly picking the missing matrix elements $R_{mn}=  |R_{mn}| e^{i\theta_{mn}}$ (for $m$ and $n$ such that $\rho_{mn} = 0$) with an arbitrary modulus  $|R_{mn}|\in [0,1]$ and an arbitrary phase $\theta_{mn}  \in  [0,2\pi]$.

\section{Properties of the activity weight and robustness of activity}
\label{append:as-SDPs-monotone}

Here we derive some useful properties of the activity weight and robustness of activity.

\subsection{Semidefinite programs}

We start by providing semidefinite programs for the activity weight and robustness of activity, respectively.  The semidefinite programs are expressed in terms of a fixed (but otherwise arbitrary)  channel   $\mc P$  satisfying the condition $\mc P  ({\sf St}  (S))  =   \psh$, that is, a quantum channel that (1)  maps all  states into  passive states and  (2)  is surjective over the set of passive states. The characterization of such channels is provided in the end of this subsection.

\begin{proposition}[SDP for the  activity weight]
\label{prop:weightSDP}
For a given state $\rho$, the activity weight $\aw$ can be equivalently written as 
\begin{align}
\label{eq:aw-sdp-appen}
\aw =1-\max_{\substack{\Gamma\geq 0 \\ \mc{P}(\Gamma) \leq \rho}} \Tr[\Gamma] \, ,
\end{align}
where $\mc P$ is any arbitrary channel satisfying the condition $\mc P  ({\sf St}  (S))  =   \psh$, 
and 
\begin{align}
\label{eq:aw-sdp-appen1}
\aw  =  1-\min_{\substack{\xi\geq 0 \\ \mc{P}^{\dagger}(\xi) \geq \mathbb{I}}} \Tr[\xi\rho] \, ,
\end{align}
where $\mc P^\dag$ is the adjoint of $\mc P$,  uniquely defined by the equation $\Tr  [   A   \mc P  (B)  ]  =  \Tr[  \mc P^\dag  (A) \,   B]$ for arbitrary  $d\times d$ matrices $A$ and $B$.
\end{proposition}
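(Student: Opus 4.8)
The plan is to prove Eq.~(\ref{eq:aw-sdp-appen}) first by recasting the activity weight as a maximization over the \emph{passive} part of a convex decomposition of $\rho$, then to transfer the passivity constraint onto an auxiliary positive operator by exploiting the two defining properties of $\mc P$ (it maps all states to passive states and is surjective onto $\psh$), and finally to obtain Eq.~(\ref{eq:aw-sdp-appen1}) by Lagrangian SDP duality.

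\textbf{Reformulation.} Starting from $\aw=\min\{t\ge 0 : \rho=t\sigma+(1-t)\tau,\ \sigma\in\mathsf{St}(S),\ \tau\in\psh\}$, I would set $\Gamma':=(1-t)\tau$. Then $\Gamma'\ge 0$, $\Tr[\Gamma']=1-t$, the normalized operator $\Gamma'/\Tr[\Gamma']$ is passive whenever $\Tr[\Gamma']>0$, and the requirement that $\sigma=(\rho-\Gamma')/t$ be a genuine state (it automatically has trace one) is exactly $\Gamma'\le\rho$. Minimizing $t$ is the same as maximizing $\Tr[\Gamma']$, so that $1-\aw=\max\{\Tr[\Gamma'] : \Gamma'\ge 0,\ \Gamma'\le\rho,\ \Gamma'/\Tr[\Gamma']\in\psh\}$, with $\Gamma'=0$ always feasible.

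\textbf{Absorbing passivity into $\mc P$.} I would then show the feasible set of the previous maximization can be replaced by $\{\Gamma\ge 0 : \mc P(\Gamma)\le\rho\}$ without changing the optimal value. For one direction, given any $\Gamma\ge 0$ with $\mc P(\Gamma)\le\rho$, the operator $\Gamma'':=\mc P(\Gamma)$ is positive, satisfies $\Gamma''\le\rho$, has $\Tr[\Gamma'']=\Tr[\Gamma]$ because $\mc P$ is trace-preserving, and $\Gamma''/\Tr[\Gamma'']=\mc P(\Gamma/\Tr[\Gamma])$ is passive by the property $\mc P(\mathsf{St}(S))=\psh$; hence $\Gamma''$ is feasible in the reformulation with the same objective, so $\Tr[\Gamma]$ never exceeds its optimum. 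For the converse, given an optimal $\Gamma'$ with $\Tr[\Gamma']>0$, surjectivity of $\mc P$ onto $\psh$ yields a state $\sigma_0$ with $\mc P(\sigma_0)=\Gamma'/\Tr[\Gamma']$, so $\Gamma:=\Tr[\Gamma']\,\sigma_0$ obeys $\Gamma\ge 0$, $\mc P(\Gamma)=\Gamma'\le\rho$ and $\Tr[\Gamma]=\Tr[\Gamma']$ (and $\Gamma'=0$ is matched by $\Gamma=0$). This establishes Eq.~(\ref{eq:aw-sdp-appen}).

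\textbf{Dualization and main obstacle.} The primal in Eq.~(\ref{eq:aw-sdp-appen}) is a standard SDP. Introducing a multiplier $\xi\ge 0$ for the constraint $\rho-\mc P(\Gamma)\ge 0$, the Lagrangian rearranges, via the definition of the adjoint, to $\Tr[\xi\rho]+\Tr[(\mathbb{I}-\mc P^{\dagger}(\xi))\Gamma]$; maximizing over $\Gamma\ge 0$ forces $\mc P^{\dagger}(\xi)\ge\mathbb{I}$ (otherwise the supremum diverges) and leaves $\Tr[\xi\rho]$, producing the dual $\min\{\Tr[\xi\rho] : \xi\ge 0,\ \mc P^{\dagger}(\xi)\ge\mathbb{I}\}$. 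The reformulation and this Lagrangian computation are routine; the two load-bearing points are the $\Tr[\Gamma]=0$ boundary case (where the normalization is undefined but $\Gamma=0$ is consistent on both sides) and, more importantly, upgrading weak duality to \emph{strong} duality. For the latter I would verify Slater's condition on the dual: since $\mc P$ is trace-preserving its adjoint is unital, $\mc P^{\dagger}(\mathbb{I})=\mathbb{I}$, so $\xi=c\,\mathbb{I}$ with $c>1$ is strictly feasible ($\xi>0$ and $\mc P^{\dagger}(\xi)-\mathbb{I}=(c-1)\mathbb{I}>0$). This closes the duality gap, and the common optimal value gives Eq.~(\ref{eq:aw-sdp-appen1}). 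I expect checking strict feasibility (hence the unitality of $\mc P^{\dagger}$) to be the one genuinely essential step.
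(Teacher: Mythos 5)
Your proof is correct and follows essentially the same route as the paper: recast $\aw$ via the decomposition $\rho = t\eta + (1-t)\tau$ as a maximization over subnormalized passive operators dominated by $\rho$, use trace preservation and surjectivity of $\mc{P}$ onto $\psh$ to identify that feasible set with $\{\Gamma \ge 0 : \mc{P}(\Gamma)\le \rho\}$, and then pass to the dual SDP. The only substantive difference is that you explicitly verify strong duality via Slater's condition (strict dual feasibility of $\xi = c\,\mathbb{I}$ with $c>1$, using unitality of $\mc{P}^\dagger$), a point the paper's proof asserts without justification, so your argument is in fact slightly more complete on that step.
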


\begin{proof}
The definition of activity weight implies the alternative expression  
\begin{align}
\label{eq-append:acw1}
\aw&=\min_{\tau\in\psh}\{t\geq 0:\rho\geq (1-t)\, \tau\},
\end{align}  
following  from the fact that the operator inequality $\rho \ge   (1-t) \,\tau$ is equivalent to the equality $\rho  =  (1-t) \tau +  t \eta$ for a suitable density matrix $\eta$. 

We then have  the following chain of equalities:
\begin{align*}
\aw &=\min_{\tau\in\psh}\{t\geq 0:\rho\geq (1-t)\, \tau\}\\
&=\min_{\sigma\in{\sf St} (S)}\{t\geq 0:\rho\geq (1-t)\mc{P}(\sigma)\}\\
&=\min_{\Gamma  \geq 0}\{1- \Tr[\Gamma]:\rho\geq \mc{P}(\Gamma)\}\\
&=1-\max_{\substack{\Gamma \geq 0\\ \mc{P}(\Gamma) \leq \rho}} \Tr[\Gamma].
\end{align*}
where the  second equality follows  from the condition  $\mc P  ({\sf St}  (S))  =   \psh$,  and the third equality follows from defining $\Gamma   :=  (1-t)\, \sigma$. We have thus proved  Eq. (\ref{eq:aw-sdp-appen}). 
Eq. (\ref{eq:aw-sdp-appen1})  then follows from the duality of semidefinite programming, which in this particular problem yields an equality between the solutions of the primal and dual problems. 
\end{proof}

\medskip  

The above proposition has a simple consequence:
\begin{corollary}
For a given state $\rho$, the activity weight $\aw$ can be equivalently written as 
\begin{align}\label{weightSDPconceptual}
\aw  =  1-\min_{\substack{\xi\geq 0 \\   \Tr [\xi \tau]   \ge 1\, ~   \forall \tau  \in  \psh}} \Tr[\xi\rho] \, ,
\end{align}
\end{corollary}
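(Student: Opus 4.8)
The plan is to derive the corollary directly from the dual semidefinite program in Eq.~(\ref{eq:aw-sdp-appen1}), by showing that the single operator constraint $\mc P^\dagger(\xi)\ge \mathbb{I}$ can be traded for the family of scalar constraints $\Tr[\xi\tau]\ge 1$ ranging over all passive states $\tau\in\psh$. Since both programs minimize the same objective $\Tr[\xi\rho]$ over the same cone $\xi\ge 0$, it suffices to prove that their feasible sets coincide; the equality of the two expressions for $\aw$ then follows at once.

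First I would observe that, because $\xi\ge 0$ is Hermitian and the adjoint $\mc P^\dagger$ carries Hermitian operators to Hermitian operators, the operator $\mc P^\dagger(\xi)-\mathbb{I}$ is Hermitian. Consequently the operator inequality $\mc P^\dagger(\xi)\ge \mathbb{I}$ is equivalent to the collection of expectation-value inequalities $\Tr[(\mc P^\dagger(\xi)-\mathbb{I})\,\sigma]\ge 0$ for every state $\sigma\in{\sf St}(S)$, i.e.\ to $\Tr[\mc P^\dagger(\xi)\,\sigma]\ge \Tr[\sigma]=1$ for every such $\sigma$. This is the familiar characterization of positive semidefiniteness of a Hermitian operator through its expectation values on states.

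Next I would invoke the defining relation of the adjoint, $\Tr[\mc P^\dagger(\xi)\,\sigma]=\Tr[\xi\,\mc P(\sigma)]$, to recast the constraint as $\Tr[\xi\,\mc P(\sigma)]\ge 1$ for all $\sigma\in{\sf St}(S)$. At this point I would use the surjectivity hypothesis $\mc P({\sf St}(S))=\psh$: as $\sigma$ ranges over all states, $\mc P(\sigma)$ ranges over exactly the set of passive states, so the constraint is equivalent to $\Tr[\xi\,\tau]\ge 1$ for every $\tau\in\psh$. This is precisely the feasibility condition appearing in Eq.~(\ref{weightSDPconceptual}), so the two feasible sets agree and the minima, and hence the two formulas for $\aw$, coincide.

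I do not anticipate a genuine obstacle here: the whole argument reduces to the standard equivalence between a positive-semidefinite operator inequality and its family of expectation-value inequalities, combined with the adjoint relation and the surjectivity of $\mc P$. The only point meriting a moment's care is to confirm that $\{\mc P(\sigma):\sigma\in{\sf St}(S)\}$ equals $\psh$ exactly---neither larger nor smaller---which is guaranteed directly by the hypothesis $\mc P({\sf St}(S))=\psh$, so no extremal-point or approximation argument is required.
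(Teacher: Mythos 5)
Your proposal is correct and follows essentially the same route as the paper's own proof: both convert the operator constraint $\mc P^\dagger(\xi)\ge \mathbb{I}$ into the expectation-value conditions $\Tr[\mc P^\dagger(\xi)\,\sigma]\ge 1$ over all states $\sigma$, pass through the adjoint relation to get $\Tr[\xi\,\mc P(\sigma)]\ge 1$, and invoke the surjectivity $\mc P({\sf St}(S))=\psh$ to replace $\mc P(\sigma)$ by an arbitrary passive state $\tau$. No gaps; the argument is complete.
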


\begin{proof}  
The condition  $\mc P^\dag (\xi)  \ge I$ in Eq.  (\ref{eq:aw-sdp-appen1}) is equivalent to the condition $\Tr  [  \sigma  \,  \mc P^\dag  (\xi)]  \ge  \Tr [\sigma] =1$ for every quantum state $\sigma \in  {\sf St  }(S)$.  
 In turn, this condition is equivalent to $\Tr [  \mc P( \sigma) \,  \xi  ]   \ge 1$, $\forall \sigma \in  {\sf St  }(S) $.  Since $\mc P    ( {\sf St  }(S))  = \psh$, the last condition is equivalent to $\Tr  [  \tau  \,  \xi]  \ge 1$ for every passive state $\tau$.      
\end{proof}
\medskip 

We now provide analogous expressions for the robustness of activity.

\begin{proposition}[SDP for the  robustness of activity]
\label{prop:robustnessSDP}
For a given state $\rho$, the robustness of activity  $\roa \rho$ can be equivalently written as 
 \begin{align}
\label{eq:roa-sdp} \roa \rho  =\min_{\substack{\Gamma\geq 0 \\ \mc{P}(\Gamma) \geq \rho}} \Tr[\Gamma]   -1  \, ,
\end{align}
where $\mc P$ is an arbitrary channel satisfying the condition $\mc P  ({\sf St}  (S))  =   \psh$, 
and 
\begin{align}
\label{eq:roa-sdp1}
\roa \rho=\max_{\substack{\xi\geq 0 \\ \mc{P}^{\dagger}(\xi) \leq \mathbb{I}}} \Tr[\rho\xi]  -1 \, .
\end{align}
\end{proposition}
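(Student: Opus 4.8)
The plan is to mirror the proof of Proposition~\ref{prop:weightSDP} for the activity weight, with the operator inequality reversed. First I would rewrite the generalized robustness in a form that eliminates the auxiliary state $\sigma$. For $t\ge 0$, the condition $(\rho+t\sigma)/(1+t)\in\psh$ for some state $\sigma$ is equivalent to the existence of a passive state $\tau\in\psh$ with $\rho\le(1+t)\,\tau$: if $(\rho+t\sigma)/(1+t)=\tau$ then $\rho=(1+t)\tau-t\sigma\le(1+t)\tau$ because $\sigma\ge0$, and conversely the operator $(1+t)\tau-\rho\ge0$ has trace $t$, so for $t>0$ the normalized operator $\sigma:=[(1+t)\tau-\rho]/t$ is a legitimate state recovering $\tau$ (while $t=0$ forces $\rho=\tau\in\psh$). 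This gives the analogue of Eq.~(\ref{eq-append:acw1}),
\begin{align}
\roa{\rho}=\min_{\tau\in\psh}\{t\ge0:\rho\le(1+t)\,\tau\}.
\end{align}

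Next I would use the fixed channel $\mc P$ with $\mc P({\sf St}(S))=\psh$ to parametrize $\tau=\mc P(\sigma)$ for an arbitrary state $\sigma$, and then substitute $\Gamma:=(1+t)\,\sigma$. Linearity gives $(1+t)\,\mc P(\sigma)=\mc P(\Gamma)$ and $\Tr[\Gamma]=1+t$, so the constraint becomes $\mc P(\Gamma)\ge\rho$ with $\Gamma\ge0$ and the objective reads $t=\Tr[\Gamma]-1$. The requirement $t\ge0$ is automatic, since $\mc P$ is trace-preserving and $\mc P(\Gamma)\ge\rho$ forces $\Tr[\Gamma]=\Tr[\mc P(\Gamma)]\ge\Tr[\rho]=1$. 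This delivers the primal program in Eq.~(\ref{eq:roa-sdp}).

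Finally I would obtain Eq.~(\ref{eq:roa-sdp1}) by semidefinite duality. Dualizing the constraint $\mc P(\Gamma)-\rho\ge0$ with a multiplier $\xi\ge0$ and using $\Tr[\xi\,\mc P(\Gamma)]=\Tr[\mc P^\dagger(\xi)\,\Gamma]$, the Lagrangian becomes $\Tr[(\mathbb I-\mc P^\dagger(\xi))\,\Gamma]+\Tr[\xi\rho]$. Minimizing over $\Gamma\ge0$ returns $\Tr[\xi\rho]$ when $\mc P^\dagger(\xi)\le\mathbb I$ and $-\infty$ otherwise, so the dual problem is $\max_{\xi\ge0,\,\mc P^\dagger(\xi)\le\mathbb I}\Tr[\rho\,\xi]$; subtracting $1$ produces Eq.~(\ref{eq:roa-sdp1}).

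The step I expect to be the main obstacle is certifying that there is no duality gap for \emph{every} admissible $\mc P$. I would close this via Slater's condition by exhibiting a strictly feasible primal point. Since the maximally mixed state $\mathbb I/d$ is passive, surjectivity of $\mc P$ yields a state $\sigma_0$ with $\mc P(\sigma_0)=\mathbb I/d$; taking $\Gamma=c\,\sigma_0$ with $c>d$ gives $\mc P(\Gamma)=c\,\mathbb I/d>\mathbb I\ge\rho$, a strict inequality. The subtlety here is that $\mc P(\mathbb I)$ may be rank-deficient (a passive state can be a ground-state projector), so the strictly feasible point must be built from the preimage of $\mathbb I/d$ rather than from $\mathbb I$ itself; the only remaining care is the $t=0$ boundary case handled in the first step.
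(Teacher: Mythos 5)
Your proposal is correct and follows essentially the same route as the paper's proof: the paper also starts from the reformulation $\roa{\rho}=\min_{\tau\in\psh}\{t\geq 0:\rho\leq (1+t)\,\tau\}$ (which it asserts directly from the definition; your two-direction argument, including the $t=0$ boundary case, is the standard justification), then uses surjectivity of $\mc{P}$ to write $\tau=\mc{P}(\sigma)$, substitutes $\Gamma:=(1+t)\,\sigma$, and obtains Eq.~(\ref{eq:roa-sdp1}) by invoking semidefinite-programming duality without further detail. Your Lagrangian derivation of the dual, and your observation that $t\geq 0$ is automatic (trace preservation of $\mc{P}$ together with $\mc{P}(\Gamma)\geq\rho$ forces $\Tr[\Gamma]\geq 1$), fill in steps the paper leaves implicit.

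The one flaw is in your Slater certificate. By Proposition~\ref{prop:surjectiveactbreak}, every admissible $\mc{P}$ has the form $\mc{P}(\cdot)=\sum_j\langle\psi_j|\cdot|\psi_j\rangle\,\tau_j$ with the $\tau_j$ linearly independent, so the preimage of $\mathbb{I}/d=\tau_d$ is forced to be the pure state $\sigma_0=|\psi_d\rangle\langle\psi_d|$. Consequently $\Gamma=c\,\sigma_0$ has rank one, and the textbook form of Slater's condition for semidefinite programs requires the variable itself to be positive definite ($\Gamma>0$) in addition to the strict operator inequality $\mc{P}(\Gamma)>\rho$; your point satisfies the second requirement but not the first (you flagged rank deficiency of $\mc{P}(\mathbb{I})$, but the same issue afflicts your own certificate). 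The repair is immediate: take $\Gamma=c\,\sigma_0+\epsilon\,\mathbb{I}$ with small $\epsilon>0$, which is positive definite and still satisfies $\mc{P}(\Gamma)\geq (c/d)\,\mathbb{I}>\rho$ because $\mc{P}(\mathbb{I})\geq 0$. Even simpler, certify strict feasibility on the maximization side instead: $\xi=\tfrac{1}{2}\mathbb{I}$ is positive definite and, since $\mc{P}$ is trace-preserving (so $\mc{P}^\dagger$ is unital), satisfies $\mc{P}^\dagger(\xi)=\tfrac{1}{2}\mathbb{I}<\mathbb{I}$; together with feasibility of the minimization problem this closes the duality gap uniformly over all admissible $\mc{P}$, with no need for the preimage construction at all.
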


\begin{proof}
The definition of robustness of activity implies the alternative expression  
\begin{align}
\label{eq-append:roa2}
\roa{\rho}=\min_{\tau\in\psh}\{t\geq 0:\rho\leq (1+t) \, \tau\}.
\end{align}
 We then have  the following chain of equalities:
\begin{align*}
\roa{\rho} &=\min_{\tau\in\psh}\{t\geq 0:\rho\leq (1+t) \, \tau\}\\
&=\min_{\sigma \in {\sf St} (S) }\{t\geq 0:\rho\leq (1+t)\mc{P}(\sigma)\}\\
&=\min_{\Gamma \geq 0 }\{\Tr[\Gamma]-1:\rho\leq \mc{P}(\Gamma)\}\\
&=\min_{\substack{\Gamma \geq 0\\ \mc{P}(\Gamma)\geq \rho }}\Tr[\Gamma]-1 \, 
\end{align*}
where the  second equality follows  from the condition  $\mc P  ({\sf St}  (S))  =   \psh$,  and the third equality follows from defining $\Gamma   :=  (1+t)\, \sigma$.
This proves  Eq. (\ref{eq:roa-sdp}). 
Again, Eq. (\ref{eq:roa-sdp1})  follows from the duality of semidefinite programming. 
\end{proof}

\medskip

\begin{corollary}
For a given state $\rho$, the robustness of activity $\roa \rho$ can be equivalently written as 
\begin{align}\label{robustnessSDPconceptual1}
\roa \rho  = \max_{\substack{\xi\geq 0 \\   \Tr [\xi \tau]   \le 1\, ~   \forall \tau  \in  \psh}} \Tr[\xi\rho]    -1\, ,
\end{align}
Similarly, the max relative entropy of activity is given by  
\begin{align}\label{robustnessSDPconceptual}
R_{\max}^{\rm act}  (\rho)   = \log  \max_{\substack{\xi\geq 0 \\   \Tr [\xi \tau]   \le 1\, ~   \forall \tau  \in  \psh}} \Tr[\xi\rho]   \, ,
\end{align}
\end{corollary}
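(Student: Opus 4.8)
The plan is to reduce both identities to the dual semidefinite program for the robustness of activity, Eq. (\ref{eq:roa-sdp1}), in exactly the same way as the preceding corollary reduced the activity-weight identity to Eq. (\ref{eq:aw-sdp-appen1}). The whole argument rests on rewriting the feasibility constraint appearing in the dual SDP in terms of activity witnesses.

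First I would show that the operator inequality $\mc{P}^{\dagger}(\xi) \le \mathbb{I}$ appearing in Eq. (\ref{eq:roa-sdp1}) is equivalent to the witness condition $\Tr[\xi\tau] \le 1$ for all $\tau \in \psh$. The inequality $\mc{P}^{\dagger}(\xi) \le \mathbb{I}$ holds if and only if $\Tr[\sigma\, \mc{P}^{\dagger}(\xi)] \le \Tr[\sigma] = 1$ for every state $\sigma \in {\sf St}(S)$. Using the defining property of the adjoint, $\Tr[\sigma\, \mc{P}^{\dagger}(\xi)] = \Tr[\mc{P}(\sigma)\, \xi]$, so the constraint becomes $\Tr[\mc{P}(\sigma)\, \xi] \le 1$ for all $\sigma$. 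Finally, invoking the surjectivity property $\mc{P}({\sf St}(S)) = \psh$, this is equivalent to $\Tr[\tau\, \xi] \le 1$ for every passive state $\tau \in \psh$. This is the mirror image, with reversed inequality, of the argument used in the proof of the previous corollary. Substituting this equivalent constraint into Eq. (\ref{eq:roa-sdp1}) immediately yields Eq. (\ref{robustnessSDPconceptual1}).

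For the second identity I would combine Eq. (\ref{robustnessSDPconceptual1}) with the relation $R^{\rm act}_{\max}(\rho) = \log[\roa{\rho} + 1]$ established in the main text. Adding $1$ to both sides of Eq. (\ref{robustnessSDPconceptual1}) gives
\begin{align}
\roa{\rho} + 1 = \max_{\substack{\xi \ge 0 \\ \Tr[\xi\tau] \le 1\, ~ \forall \tau \in \psh}} \Tr[\xi\rho],
\end{align}
and taking the logarithm of both sides produces Eq. (\ref{robustnessSDPconceptual}).

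I do not expect a deep obstacle here: the only substantive step is the constraint rewriting, which hinges on the surjectivity of $\mc{P}$ onto $\psh$ and on the strong duality already secured in Proposition \ref{prop:robustnessSDP}. The one point worth checking carefully is that the supremum in the witness program is actually attained, so that the logarithm in the last step is applied to a genuine maximum rather than a supremum; this follows because the feasible set $\{\xi \ge 0 : \Tr[\xi\tau] \le 1\, ~ \forall \tau \in \psh\}$ is compact (the witness constraints bound $\xi$ from above, as passive states span enough directions to control its norm), or equivalently from the strong duality of the underlying SDP, which guarantees primal and dual optima coincide and are achieved.
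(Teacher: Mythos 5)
Your proposal is correct and follows essentially the same route as the paper: the constraint $\mc{P}^{\dagger}(\xi)\le \mathbb{I}$ in Eq.~(\ref{eq:roa-sdp1}) is rewritten via the adjoint and the surjectivity condition $\mc{P}({\sf St}(S))=\psh$ as the witness condition $\Tr[\xi\tau]\le 1$ for all $\tau\in\psh$, and the second identity then follows from $R_{\max}^{\rm act}(\rho)=\log[\roa{\rho}+1]$. Your extra remark on attainment of the maximum is a reasonable additional care point, but it does not change the argument, which matches the paper's proof step for step.
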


\begin{proof}  
The condition  $\mc P^\dag (\xi)  \le I$ in Eq.  (\ref{eq:roa-sdp1}) is equivalent to the condition $\Tr  [  \sigma  \,  \mc P^\dag  (\xi)]  \le  \Tr [\sigma] =1$ for every quantum state $\sigma \in  {\sf St  }(S)$.  
 In turn, this condition is equivalent to $\Tr [  \mc P( \sigma) \,  \xi  ]   \le 1$, $\forall \sigma \in  {\sf St  }(S) $.  Since $\mc P    ( {\sf St  }(S))  = \psh$, the last condition is equivalent to $\Tr  [  \tau  \,  \xi]  \le 1$ for every passive state $\tau$.   Eq.  (\ref{robustnessSDPconceptual}) then follows from the relation    $R_{\max}^{\rm act}  (\rho)   =    \log \left[  \roa \rho  + 1\right]$.      
\end{proof}

\medskip 

The above corollary shows that the max relative entropy of activity of a given state $\rho$ is equal to the expectation value of the optimal activity witness for the state $\rho$.

We conclude this section by characterizing all the quantum  channels $\mc P$ satisfying the conditions $\mc P  ({\sf St}  (S))  =  \psh$. 
To this purpose, we start from the following proposition:  
\begin{proposition}\label{prop:entbreak}
Suppose that the Hamiltonian of system $S$ has non-degenerate spectrum.  Then,  a quantum  channel  $\mc P$ on system $S$ is activity breaking if and only if it is  of the form  
\begin{align}\label{measprep}
\mc P  (\rho)  =   \sum_{j=1}^d  \, \Tr[  P_j\,  \rho]\,  \tau_j \, , 
\end{align}  
where $(P_j)_{j=1}^d$ is a positive operator-valued measure (POVM)  (that is,  $P_j\ge 0  \, ,\forall j$ and $\sum_j  P_j  =  I$)  and
\begin{align}\label{tauj}
\tau_j  : =  \frac{  \sum_{i=1}^j \,  |i\rangle\langle i|}j \, .
\end{align}
   \end{proposition}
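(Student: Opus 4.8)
The plan is to exploit the special geometry of the passive set $\psh$ in the non-degenerate case. The crucial structural fact is that, because $H$ is non-degenerate, a state is passive if and only if it is diagonal in the energy eigenbasis with eigenvalues $p_1 \ge p_2 \ge \cdots \ge p_d \ge 0$, and the set of such probability vectors is a polytope whose extreme points are exactly the uniform ``prefix'' distributions realized by the states $\tau_j$ of Eq. (\ref{tauj}). First I would record two elementary facts about the $\tau_j$: (i) they are the vertices of $\psh$, obtained by saturating all but one of the defining inequalities $p_m \ge p_{m+1}$ and $p_d \ge 0$; and (ii) they are linearly independent, since written as diagonal vectors they form a lower-triangular array with nonzero diagonal entries $1, 1/2, \ldots, 1/d$. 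Together these give that $\psh$ is a $(d-1)$-simplex: every passive state has a \emph{unique} decomposition $\tau = \sum_{j=1}^d c_j \, \tau_j$ with $c_j \ge 0$ and $\sum_j c_j = 1$.

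For the ``if'' direction I would check that the map in Eq. (\ref{measprep}) is a legitimate channel with passive output. Trace preservation is immediate from $\sum_j P_j = I$, complete positivity holds because the map is of measure-and-prepare form, and for any $\rho$ the numbers $\Tr[P_j \rho]$ are nonnegative and sum to one, so $\mc P(\rho)$ is a convex combination of the passive states $\tau_j$ and hence passive by convexity of $\psh$.

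The substance is in the ``only if'' direction. Assume $\mc P$ is activity breaking, i.e. $\mc P(\rho) \in \psh$ for every state $\rho$. Since density matrices span the real space of Hermitian operators and each $\mc P(\rho)$ is diagonal, $\mc P$ sends every Hermitian operator to a diagonal one; expanding the output in the basis $\{\tau_j\}$ of the diagonal matrices defines coefficients $c_j(\rho)$ through $\mc P(\rho) = \sum_j c_j(\rho)\, \tau_j$. Both $\mc P$ and the coordinate-extraction map are linear, so each $c_j$ is a real linear functional of $\rho$, whence $c_j(\rho) = \Tr[P_j\,\rho]$ for a unique Hermitian $P_j$. Trace preservation together with $\Tr[\tau_j] = 1$ forces $\sum_j c_j(\rho) = \Tr[\mc P(\rho)] = 1$ for all $\rho$, i.e. $\sum_j P_j = I$. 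Finally, because $\mc P(\rho)$ lies in the simplex $\psh$, its expansion coefficients $c_j(\rho)$ \emph{are} its barycentric coordinates, and by uniqueness of the simplex decomposition they must coincide with the nonnegative convex weights; hence $c_j(\rho) = \Tr[P_j\rho] \ge 0$ for every state $\rho$, which gives $P_j \ge 0$. This exhibits $\mc P$ in the claimed form with $(P_j)_{j=1}^d$ a POVM.

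The main obstacle is establishing the positivity $P_j \ge 0$, and this is precisely where the simplex structure does the work: a priori the unique \emph{linear} decomposition of $\mc P(\rho)$ in the $\tau_j$ need not have nonnegative coefficients, but affine independence of the vertices forces the linear decomposition to agree with the convex (barycentric) one, which is nonnegative. It is worth emphasizing that non-degeneracy of $H$ is essential here: it is what guarantees that passive states are diagonal with strictly ordered weights, so that $\psh$ is a simplex with exactly $d$ affinely independent vertices; in the degenerate case this geometry breaks down and the measure-and-prepare characterization fails.
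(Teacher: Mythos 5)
Your proof is correct and takes essentially the same route as the paper's: both arguments hinge on the fact that $\psh$ is a simplex whose extreme points $\tau_j$ are linearly independent, so the coefficients in $\mc P(\rho)=\sum_j c_j(\rho)\,\tau_j$ are simultaneously nonnegative (barycentric coordinates of a passive state) and linear in $\rho$, hence of the form $\Tr[P_j\,\rho]$ with $P_j\ge 0$ and $\sum_j P_j=I$. The only difference is presentational: where you invoke the abstract Riesz representation of the coordinate functionals, the paper realizes it concretely by exhibiting a dual basis $A_j$ satisfying $\Tr[A_j\,\tau_k]=\delta_{jk}$ and setting $P_j=\mc P^\dag(A_j)$.
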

\begin{proof}
Since the states $\{\tau_j\}_{j=1}^d$ are passive, it is immediate to see that every channel of the form (\ref{measprep}) is activity breaking.  We now prove the converse.  
In the nondegenerate case, the convex set of all  passive states is a simplex, with the states $\{  \tau_j~|~  i\in\{1,\dots,  d\}\}$ as the extreme points. Hence, the state  $\mc P  (\rho)$ must be a convex combination of these states for every activity breaking channel $\mc P$ and for every state $\rho$:   in formula, let us write  
\begin{align}\label{convmix}\mc P  (\rho)   =  \sum_j  \,  p_j  (\rho)  \,  \tau_j\, ,
\end{align} where each $p_j(\rho)$ is a probability. 

{\em A priori,} the probability $p_j (\rho)$ could have any dependence on $\rho$.  In our case, however, the dependence is linear, as we show in the following.  Consider  the operators 
\begin{align}
\nonumber &A_j   :  =  j    (  |j\rangle \langle j|   -   |j+1\rangle \langle j+1|  ) \qquad j\in  \{1,\dots,  d-1\}\\
&A_d: =  d  \,  |d\rangle\langle d|  
\end{align} 
and note that they satisfy  the condition  
\begin{align}
\Tr [  A_j\,   \tau_k ]  =  \delta_{jk}  \qquad \forall j,k  \, . 
\end{align}
Hence, Eq. (\ref{convmix}) implies 
\begin{align}
p_j (\rho)   =   \Tr [A_j\,  \mc P  (\rho)]  =  \Tr  [  P_j \,   \rho ]   \,,
\end{align}
with $P_j:  = \mc P^\dag  (A_j)$.   Since $p_j(\rho)$ is positive for every state $\rho$, we must have $P_j \ge 0$. Moreover, the normalization condition  $\sum_j   p_j (\rho)  = 1$ for every $\rho$ implies $\sum_j  P_j  =  I$.  Hence, the operators $(P_j)_{j=1}^d$ form a POVM.
\end{proof}

\medskip  

Proposition \ref{prop:entbreak} implies that every activity breaking channel must be entanglement-breaking (in the case of nondegenerate Hamiltonians.)    We now restrict our attention to activity breaking channels that are  surjective on the set of passive states.

\begin{proposition}\label{prop:surjectiveactbreak}
Suppose that the Hamiltonian of system $S$ has non-degenerate spectrum. Then, a  quantum channel   $\mc P$  on system $S$ satisfies  the condition $\mc P ({\sf St}  (S))   =  \psh$  if and only if it is  of the form 
\begin{align}\label{vnmeasprep}
\mc P (\rho)  =  \sum_{j=1}^d \,  \langle \psi_j|  \,\rho  \,  |\psi_j\rangle \,  \tau_j\, ,
\end{align} 
where the vectors $\{  |\psi_j\rangle\}_{j=1}^d$ form an orthonormal basis.  
\end{proposition}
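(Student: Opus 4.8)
The plan is to build directly on Proposition \ref{prop:entbreak}. The condition $\mc P({\sf St}(S)) = \psh$ in particular forces $\mc P$ to be activity breaking, so Proposition \ref{prop:entbreak} already guarantees that $\mc P$ has the form $\mc P(\rho) = \sum_{j=1}^d \Tr[P_j\,\rho]\,\tau_j$ for some POVM $(P_j)_{j=1}^d$, with $\tau_j = \big(\sum_{i=1}^j |i\rangle\langle i|\big)/j$. The whole problem thus reduces to identifying which POVMs make the resulting channel surjective onto $\psh$, and showing that this occurs precisely when $P_j = |\psi_j\rangle\langle\psi_j|$ for an orthonormal basis $\{|\psi_j\rangle\}_{j=1}^d$. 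The easy converse direction I would dispatch separately: for $P_j = |\psi_j\rangle\langle\psi_j|$ with $\{|\psi_j\rangle\}$ orthonormal, feeding in $\rho = \sum_j q_j\,|\psi_j\rangle\langle\psi_j|$ gives $\mc P(\rho) = \sum_j q_j\,\tau_j$, so every passive state lies in the image.

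For the forward direction, the first step is to translate surjectivity into a statement about the probability vector $\big(\Tr[P_j\rho]\big)_{j=1}^d$. Since the passive states form a simplex with the $\tau_j$ as its vertices (as established in Proposition \ref{prop:entbreak}), and since $\rho \mapsto \big(\Tr[P_j\rho]\big)_j$ is a linear map sending the convex set ${\sf St}(S)$ into the probability simplex, the image $\mc P({\sf St}(S))$ is a convex subset of $\psh$. A convex subset of a simplex equals the entire simplex if and only if it contains all of its vertices. Hence $\mc P({\sf St}(S)) = \psh$ is equivalent to the attainability of each vertex: for every $j$ there must exist a state $\rho_j$ with $\Tr[P_k\,\rho_j] = \delta_{jk}$.

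The key step is to extract the structure of the $P_j$ from this vertex-attainability condition. Because every POVM element satisfies $0 \le P_j \le I$, the equality $\Tr[P_j\,\rho_j] = 1$ forces the support of $\rho_j$ to lie in the eigenspace of $P_j$ for eigenvalue $1$; in particular each $P_j$ must admit a unit eigenvector $|\psi_j\rangle$ with $P_j|\psi_j\rangle = |\psi_j\rangle$. (Note the off-diagonal conditions are then automatic: $\sum_{k\neq j} P_k |\psi_j\rangle = (I - P_j)|\psi_j\rangle = 0$, and positivity of each $P_k$ forces $\langle\psi_j|P_k|\psi_j\rangle = 0$ for $k\neq j$.) Now I would invoke a trace-counting argument: $\sum_j \Tr[P_j] = \Tr[I] = d$, while each $P_j \ge 0$ possessing an eigenvalue $1$ has $\Tr[P_j] \ge 1$. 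Equality of the sum at $d$ therefore forces $\Tr[P_j] = 1$ for every $j$, and combined with $0 \le P_j \le I$ this means each $P_j$ has a single nonzero eigenvalue equal to $1$, i.e. $P_j = |\psi_j\rangle\langle\psi_j|$ is rank-one. Finally, the completeness relation $\sum_j |\psi_j\rangle\langle\psi_j| = I$ yields orthonormality: sandwiching it between $\langle\psi_k|$ and $|\psi_k\rangle$ gives $1 + \sum_{j\neq k}|\langle\psi_k|\psi_j\rangle|^2 = 1$, hence $\langle\psi_k|\psi_j\rangle = 0$ for $j\neq k$, so $\{|\psi_j\rangle\}$ is an orthonormal basis.

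I expect the main obstacle to be the middle step rather than the concluding computations: one must argue cleanly that surjectivity onto a simplex is equivalent to attaining each vertex, and that vertex-attainability forces each $P_j$ to carry an eigenvalue $1$. Once that structural fact is in hand, the trace summation pinning down rank-one projectors and the short orthogonality calculation are routine.
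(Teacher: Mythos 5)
Your proposal is correct, and its first half coincides with the paper's own proof: both invoke Proposition \ref{prop:entbreak} to reduce to the measure-and-prepare form $\mc P(\rho) = \sum_j \Tr[P_j\,\rho]\,\tau_j$, and both convert surjectivity into the existence of states $\rho_j$ with $\Tr[P_k\,\rho_j] = \delta_{jk}$ (you via uniqueness of barycentric coordinates in the simplex $\psh$, the paper via linear independence of the $\tau_k$ --- the same fact in different clothing). The two arguments then genuinely diverge in how they extract the structure. The paper analyzes the \emph{preimage states}: the condition $\Tr[P_k\,\rho_j]=\delta_{jk}$ means the $\rho_j$ are perfectly distinguishable, hence have mutually orthogonal supports, and since $d$ mutually orthogonal supports in a $d$-dimensional space must each be one-dimensional, the $\rho_j$ are pure and form an orthonormal basis $\{|\psi_j\rangle\}$; the final identification $P_j = |\psi_j\rangle\langle\psi_j|$ is then asserted rather tersely, without spelling out why. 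You instead analyze the \emph{POVM elements}: $\Tr[P_j\,\rho_j]=1$ together with $0\le P_j\le I$ yields a unit eigenvector of $P_j$ with eigenvalue $1$, the trace count $\sum_j \Tr[P_j] = \Tr[I] = d$ pins each $P_j$ down to a rank-one projector, and completeness $\sum_j |\psi_j\rangle\langle\psi_j| = I$ forces orthonormality. Your route never needs purity of the $\rho_j$ and makes fully explicit exactly the step the paper leaves implicit (that the POVM elements, not just the preimage states, are rank-one projectors onto an orthonormal basis); the paper's route buys the complementary structural insight that the preimages of the extreme passive states must themselves form a perfectly distinguishable orthonormal set of pure states. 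Both arguments are complete and elementary.
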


\begin{proof}  It is immediate to see that every quantum channel of the form (\ref{vnmeasprep}) satisfies the condition  $\mc P ({\sf St}  (S))   =  \psh$.  Now, we prove the converse implication. First, note that the condition    $\mc P ({\sf St}  (S))   \subseteq  \psh$ implies that $\mc P$ must be of the entanglement-breaking form (\ref{measprep}). Now, the surjectivity condition $\mc P ({\sf St}  (S))   =  \psh$ implies that there exist $d$ states $\{\rho_j\}_{j=1}^d$ such that 
\begin{align}
\tau_j  =  \mc P  (\rho_j)     =  \sum_k \Tr [   P_k \, \rho_j]\,  \tau_k  \qquad\forall j\in  \{1,\dots,  d\} \,.  
\end{align}
Since the states $\{\tau_j\}_{j=1}^d$ are linearly independent, the above equality implies the condition 
$\Tr  [  P_k \, \rho_{j}]= \delta_{j,k}$ for every $j$ and $k$. Hence, the states $\{\rho_{j}\}_{j=1}^d$  are perfectly distinguishable, which means that they have orthogonal supports. But for a quantum system of dimension $d$,  any $d$ states with orthogonal support must necessarily be pure. Hence, there exists an orthonormal basis $\{|\psi_j\rangle\}_{j=1}^d$ such that $\rho_j  =  |\psi_j\rangle\langle \psi_j|  =  P_j$.   
\end{proof}

\medskip  

An interesting technical result, which will be used later in the paper, is the following characterization of the set of correlation matrices: 
\begin{proposition}\label{characterizationcorrelation}
A positive matrix $\xi$ is a correlation matrix if and only   $\mc P^\dag  (\xi)   =   I$ for some channel $\mc P$  satisfying the condition $\mc P  ({\sf St}  (S))  =   \psh$.  
\end{proposition}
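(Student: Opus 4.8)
The plan is to reduce everything to the explicit normal form of surjective activity-breaking channels established in Proposition \ref{prop:surjectiveactbreak}, and then to read off both implications from a single computation of the adjoint. First I would invoke that proposition to write any channel $\mc P$ satisfying $\mc P({\sf St}(S)) = \psh$ as $\mc P(\rho) = \sum_{j=1}^d \langle\psi_j|\rho|\psi_j\rangle\,\tau_j$ for some orthonormal basis $\{|\psi_j\rangle\}_{j=1}^d$, with $\tau_j$ as in Eq. (\ref{tauj}). Computing the adjoint from the defining relation $\Tr[A\,\mc P(B)] = \Tr[\mc P^\dag(A)\,B]$ then yields the compact expression $\mc P^\dag(A) = \sum_{j=1}^d \Tr[A\,\tau_j]\,|\psi_j\rangle\langle\psi_j|$, so that in particular $\mc P^\dag(\xi) = \sum_{j=1}^d \Tr[\xi\,\tau_j]\,|\psi_j\rangle\langle\psi_j|$. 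The entire argument then hinges on the single scalar $\Tr[\xi\,\tau_j] = \tfrac1j\sum_{i=1}^j \xi_{ii}$, which is immediate from the definition of $\tau_j$.

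For the forward direction I would assume $\xi$ is a correlation matrix, so that $\xi_{ii} = 1$ for every $i$, whence $\Tr[\xi\,\tau_j] = \tfrac1j\sum_{i=1}^j 1 = 1$ for every $j$. Since $\{|\psi_j\rangle\}$ is an orthonormal basis, this gives $\mc P^\dag(\xi) = \sum_j |\psi_j\rangle\langle\psi_j| = I$. I would note that this holds for \emph{every} channel $\mc P$ with $\mc P({\sf St}(S)) = \psh$, so the existential ``for some $\mc P$'' in the statement is satisfied automatically.

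For the converse I would assume $\mc P^\dag(\xi) = I$ for one such channel $\mc P$ and use its normal form. The two sides $\sum_j \Tr[\xi\,\tau_j]\,|\psi_j\rangle\langle\psi_j|$ and $I = \sum_j |\psi_j\rangle\langle\psi_j|$ are spectral decompositions in the same orthonormal basis, so comparing eigenvalues forces $\Tr[\xi\,\tau_j] = 1$, i.e. $\sum_{i=1}^j \xi_{ii} = j$, for every $j$. The one genuinely structural step — which I would flag as the crux of the argument — is to extract the pointwise diagonal constraint from these cumulative sums: subtracting the relation at $j-1$ from the one at $j$ telescopes to $\xi_{jj} = 1$ for each $j$ (with base case $\xi_{11} = 1$ at $j=1$). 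Together with the standing hypothesis $\xi \ge 0$, this shows that $\xi$ is a correlation matrix. The only point requiring care is that the converse assumes the identity for a single $\mc P$; this is harmless, since the derived conclusion $\Tr[\xi\,\tau_j] = 1$ depends only on the fixed states $\tau_j$ and not on the particular orthonormal basis realizing $\mc P$.
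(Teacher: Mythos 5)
Your proof is correct and follows essentially the same route as the paper's: both invoke Proposition \ref{prop:surjectiveactbreak} to write $\mc P(\rho) = \sum_j \langle\psi_j|\rho|\psi_j\rangle\,\tau_j$, compute the adjoint $\mc P^\dag(\xi) = \sum_j \Tr[\xi\,\tau_j]\,|\psi_j\rangle\langle\psi_j|$, and compare coefficients in the orthonormal basis $\{|\psi_j\rangle\}$ to reduce both implications to the equivalence between $\Tr[\xi\,\tau_j]=1$ for all $j$ and $\xi_{ii}=1$ for all $i$. The only difference is that you spell out the telescoping step $\sum_{i\le j}\xi_{ii}=j \Rightarrow \xi_{jj}=1$, which the paper leaves implicit.
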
 

\begin{proof}
Let $\mc P$ be an arbitrary channel satisfying the condition $\mc P  ({\sf St}  (S))  =   \psh$.   If $\xi$ is a correlation matrix,  then   $\mc P^\dag (\xi)   =  \sum_i  \, \Tr [  \xi  \, \tau_i]\,  |  \psi_i\rangle \langle \psi_i|     =   \sum_i  \, |  \psi_i\rangle \langle \psi_i|  =  I$, where we used Eq. (\ref{vnmeasprep}) and the condition $\Tr [\xi  \tau_i]=1 \, ,\forall i$ valid for every correlation matrix $\xi$. 

Conversely, suppose that $\mc P^\dag  (\xi)  =  I$ for some channel $\mc P$  satisfying the condition $\mc P  ({\sf St}  (S))  =   \psh$.   Then,  we must have $\mc P^\dag (\xi)   =  \sum_i  \, \Tr [  \xi  \, \tau_i]\,  |  \psi_i\rangle \langle \psi_i|      =  I  =     \sum_i  \, |  \psi_i\rangle \langle \psi_i| $, which implies $\Tr [  \xi  \, \tau_i]  =1$ for every $i$.  In turn, this condition implies $\xi_{ii}=1$ for every $i$. Hence, $\xi$ is a correlation matrix. 
\end{proof}

\subsection{Faithfulness, convexity, and monotonicity under passivity-preserving operations}  

We now show that the activity weight and robustness of activity are two {\em bona fide} measures of non-passivity: they are faithful, convex, and non-increasing under arbitrary passivity-preserving channels.

\begin{proposition}[Properties of the activity weight]
\label{claim:a-w-mon}
For every system $S$, the activity weight ${\sf A}_w$ is 
\begin{enumerate}
\item contained in the interval  $[0,1]$
\item  {\em faithful}, {\em i.e.} $\aw =0$ if and only if   $\rho$ is passive. 
\item  {\em convex}, {\em i.e.} $\awe{p \rho_1+(1-p)\rho_2}\leq p \awe{\rho_1}+(1-p)\awe{\rho_2}$, for every probability $p\in [0,1]$ and every pair of states $\rho_1,\rho_2\in {\sf St} (S)$.
\item  {\em non-increasing under passivity-preserving channels}, {\em i.e.}    $\awe{\msc{C}(\rho)}\leq \aw$ for every  passivity-preserving channel  $\msc{C}$ and every state $\rho$. 
\end{enumerate}
\end{proposition}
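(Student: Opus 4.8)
The plan is to establish the four properties of the activity weight $\aw$ in the order stated, leaning on the SDP characterization (Proposition \ref{prop:weightSDP}) and its dual/conceptual form (Eq.~(\ref{weightSDPconceptual})) wherever it makes the argument cleanest. For property (1), I would note that $\aw \ge 0$ is immediate from the definition as a minimum of nonnegative $t$, while $\aw \le 1$ follows because taking $t=1$ (so that $\rho = 1\cdot\sigma + 0\cdot\tau$) is always feasible for any state $\rho$ with any passive $\tau$; hence $\aw \in [0,1]$.

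**For faithfulness (2)**, the ``if'' direction is trivial: if $\rho$ is passive, then $\rho = 0\cdot\sigma + 1\cdot\tau$ with $\tau = \rho$ gives $t=0$, so $\aw = 0$. For the ``only if'' direction, $\aw = 0$ means there is a passive $\tau$ with $\rho = (1-t)\tau + t\sigma$ at $t=0$, i.e. $\rho = \tau \in \psh$; more carefully, one should argue that the minimum is attained (the feasible set is compact, $\psh$ being closed and bounded), so $\aw=0$ genuinely forces $\rho$ to be passive rather than merely a limit of passive states.

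**For convexity (3)**, I would use the dual conceptual form $\aw = 1 - \min_{\xi} \Tr[\xi\rho]$ of Eq.~(\ref{weightSDPconceptual}), where the minimization is over $\xi \ge 0$ with $\Tr[\xi\tau]\ge 1$ for all passive $\tau$. Equivalently $1 - \aw = \min_\xi \Tr[\xi\rho]$ is a pointwise minimum of linear functions of $\rho$, hence \emph{concave} in $\rho$; therefore $\aw$ itself is convex. Concretely, for the optimal witnesses $\xi_1,\xi_2$ of $\rho_1,\rho_2$, the operator $\xi_2$ (say, whichever is feasible) bounds the mixture, giving $1 - \awe{p\rho_1 + (1-p)\rho_2} \ge p(1-\awe{\rho_1}) + (1-p)(1-\awe{\rho_2})$ after choosing a common feasible $\xi$, which rearranges to the claimed inequality.

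**For monotonicity (4)**, which I expect to be the main point, I would work from the primal expression $\aw = 1 - \max\{\Tr[\Gamma] : \Gamma \ge 0,\ \mc P(\Gamma) \le \rho\}$, or more transparently from Eq.~(\ref{eq-append:acw1}), $\aw = \min_{\tau\in\psh}\{t\ge 0 : \rho \ge (1-t)\tau\}$. Let $\mc C$ be passivity-preserving and let $\tau$ achieve the minimum for $\rho$, so $\rho \ge (1-\aw)\tau$. Applying $\mc C$, which is completely positive and hence operator-monotone on this inequality, gives $\mc C(\rho) \ge (1-\aw)\,\mc C(\tau)$; since $\mc C$ is passivity-preserving, $\mc C(\tau) \in \psh$ is a valid passive state. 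Thus $t = \aw$ is feasible in the minimization defining $\awe{\mc C(\rho)}$, so $\awe{\mc C(\rho)} \le \aw$. The one subtlety to check is that $\mc C$ preserves the trace so that $\mc C(\rho)$ is a genuine state and the inequality $\mc C(\rho) \ge (1-\aw)\mc C(\tau)$ is the correct feasibility condition; this holds because $\mc C$ is a channel.
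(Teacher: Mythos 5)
Your proof is correct, and for properties (1), (2), and (4) it is essentially the paper's argument. Properties (1) and (2) are read off the definition (your extra observation that the minimum is actually attained, by compactness of the feasible set, is a point the paper leaves implicit); for monotonicity (4), the paper applies $\msc{C}$ to the optimal decomposition $\rho = \aw\,\eta^* + (1-\aw)\,\tau^*$ and notes that $\msc{C}(\tau^*)\in\psh$, while you apply positivity of $\msc{C}$ to the equivalent operator inequality $\rho \ge (1-\aw)\,\tau^*$ of Eq.~(\ref{eq-append:acw1}); these are interchangeable, since the paper itself proves the two formulations equivalent, and your positivity step $\msc{C}(\rho-(1-\aw)\tau^*)\ge 0$ is exactly what makes the paper's decomposition argument work. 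The one genuine divergence is convexity (3): the paper works with the \emph{primal} SDP of Eq.~(\ref{eq:aw-sdp-appen}), combining feasible points $\Gamma_1,\Gamma_2$ for $\rho_1,\rho_2$ into the feasible point $p\,\Gamma_1+(1-p)\,\Gamma_2$ for the mixture, whereas you use the \emph{dual} witness form of Eq.~(\ref{weightSDPconceptual}), whose feasible set does not depend on $\rho$, so that $1-\aw$ is a pointwise minimum of linear functionals and hence concave. Both are one-line arguments; yours is arguably cleaner, but it leans on SDP strong duality (needed to establish Eq.~(\ref{weightSDPconceptual})), while the paper's primal argument needs only Eq.~(\ref{eq:aw-sdp-appen}), which follows directly from the definition. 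One small slip in your concrete elaboration of (3): the operator that bounds the mixture should be the optimal witness $\xi^*$ \emph{for the mixture} --- which is automatically feasible for $\rho_1$ and $\rho_2$ separately, precisely because feasibility does not depend on the state --- rather than ``whichever of $\xi_1,\xi_2$ is feasible''; as written that sentence does not quite parse, but your abstract concavity argument already carries the proof, so this is a cosmetic rather than a substantive defect.
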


\begin{proof}
Containment in the interval $[0,1]$ and faithfulness follow directly from the definition of the activity weight.

Let us now prove  convexity.  For every pair of  states $\rho_1$ and $\rho_2$ and for every probability $p_1$, we have 
\begin{align}
\nonumber \awe  {p\, \rho_1  +   (1-p) \, \rho_2 }   &=1-\max_{\substack{\Gamma\geq 0 \\ \mc{P}(\Gamma) \leq p\rho_1  +  (1-p  \rho_2)}} \Tr[\Gamma]   \\
\nonumber &  \le 1  -  p  \,  \max_{\substack{\Gamma_1\geq 0 \\ \mc{P}(\Gamma_1) \leq \rho_1  }} \Tr[\Gamma_1  ]  \\
 \nonumber  &  \qquad  -  (1-p)    \,  \max_{\substack{\Gamma_2\geq 0 \\ \mc{P}(\Gamma_2) \leq \rho_2  }} \Tr[\Gamma_2  ] \\
  &     = p   \,   \awe   {\rho_1}  +  (1-p) \,  \awe{  \rho_2 }  \, ,
\end{align}
where the first and last equality follow from  Eq. (\ref{eq:aw-sdp-appen}). 

For the proof of monotonicity under passivity-preserving channels,  consider an optimal decomposition such that $\rho = \aw\eta^* +(1-\aw)\tau^*$, where  $\tau^*$ is some passive state and $\eta^*$ is an arbitrary state. Then, we have $\msc{C}(\rho) = \aw \msc{C}  (\eta^*   ) +(1-\aw) \widetilde{\tau}$, 
where $\widetilde{\tau}:=\msc{C}  (\tau^*)$ is a passive state  since  $\msc{C}$ is a passivity-preserving channel. Thus, $ \awe{\msc{C}(\rho)}\leq \aw$. 
\end{proof}

\medskip 

We now provide the analogous result  for the robustness of activity.

\begin{proposition}[Properties of the robustness of activity]
\label{claim:roa-mon}
For every system $S$, the robustness of activity ${\sf A}_r$ is 
\begin{enumerate}
\item contained in the interval $[0,d-1]$.
\item  {\em faithful}, {\em i.e.} $\roa \rho=0$ if and only if   $\rho$ is passive. 
\item  {\em convex}, {\em i.e.} $\roa{p \rho_1+(1-p)\rho_2}\leq p \roa{\rho_1}+(1-p)\roa{\rho_2}$, for every probability $p\in [0,1]$ and every pair of states $\rho_1,\rho_2\in {\sf St} (S)$.
\item  {\em non-increasing under passivity-preserving channels}, {\em i.e.}    $\roa{\msc{C}(\rho)}\leq \roa{\rho}$ for every  passivity-preserving channel  $\msc{C}$ and every state $\rho$. 
\end{enumerate}
The maximum value of the robustness of activity is $d-1$.  
\end{proposition}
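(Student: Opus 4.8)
The plan is to mirror the proof of Proposition \ref{claim:a-w-mon} for the activity weight, exploiting the alternative expression in Eq. (\ref{eq-append:roa2}), namely $\roa{\rho} = \min_{\tau \in \psh}\{t \geq 0 : \rho \leq (1+t)\tau\}$, together with the semidefinite-program forms (\ref{eq:roa-sdp}) and (\ref{robustnessSDPconceptual1}). For the range, nonnegativity $\roa{\rho} \geq 0$ is immediate from the definition, since the minimum runs over $t \geq 0$. For the upper bound I would test the feasible point $\tau = I/d$, which is passive: because $\rho \leq I = d\,(I/d)$ for every state $\rho$, the value $t = d-1$ is admissible in Eq. (\ref{eq-append:roa2}), giving $\roa{\rho} \leq d-1$.

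For faithfulness, one direction is trivial: if $\rho \in \psh$, then $t = 0$ and $\tau = \rho$ satisfy $\rho \leq \rho$, so $\roa{\rho} = 0$. Conversely, if $\roa{\rho} = 0$ there is a passive $\tau$ with $\rho \leq \tau$; taking traces and using $\Tr[\rho] = \Tr[\tau] = 1$ forces the positive operator $\tau - \rho$ to have zero trace, hence $\rho = \tau \in \psh$. Convexity I would obtain from the primal SDP (\ref{eq:roa-sdp}): given feasible optimizers $\Gamma_1, \Gamma_2$ for $\rho_1, \rho_2$, the operator $\Gamma = p\Gamma_1 + (1-p)\Gamma_2$ is positive and satisfies $\mc{P}(\Gamma) \geq p\rho_1 + (1-p)\rho_2$, so it is feasible for $\rho = p\rho_1 + (1-p)\rho_2$ and yields $\roa{\rho} \leq \Tr[\Gamma] - 1 = p\roa{\rho_1} + (1-p)\roa{\rho_2}$, the additive offsets combining correctly because $p + (1-p) = 1$.

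Monotonicity follows even more directly, without invoking the SDP. Let $t^* = \roa{\rho}$ with an optimal passive $\tau^*$, so that $(1+t^*)\tau^* - \rho \geq 0$. A passivity-preserving channel $\mc{C}$ is in particular a positive map, hence it preserves this operator inequality, giving $\mc{C}(\rho) \leq (1+t^*)\,\mc{C}(\tau^*)$ with $\mc{C}(\tau^*) \in \psh$; therefore $t^*$ is admissible for $\mc{C}(\rho)$ and $\roa{\mc{C}(\rho)} \leq t^* = \roa{\rho}$.

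Finally, to show that the bound $d-1$ is attained I would use the witness form (\ref{robustnessSDPconceptual1}) on the highest-energy eigenstate $\rho = |d\rangle\langle d|$ (equivalently the maximally coherent state $|\phi_+\rangle\langle\phi_+|$). Choosing the witness $W_d = d\,|d\rangle\langle d|$, every passive $\tau = \sum_i p_i\,|i\rangle\langle i|$ has nonincreasing populations $p_1 \geq \dots \geq p_d$, so $p_d \leq 1/d$ and $\Tr[W_d\,\tau] = d\,p_d \leq 1$, making $W_d$ feasible; since $\Tr[W_d\,\rho] = d$, Eq. (\ref{robustnessSDPconceptual1}) gives $\roa{\rho} \geq d-1$, which together with the upper bound yields equality. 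I expect the only real care needed to be bookkeeping of the additive ``$-1$'' offsets across the various SDP and witness forms, rather than any conceptual obstacle, since every ingredient is already supplied by the characterizations derived earlier in this appendix.
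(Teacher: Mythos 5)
Your proof is correct and follows essentially the same route as the paper's: nonnegativity and the bound $\roa{\rho}\le d-1$ via the feasible point $I/d$, convexity via the primal SDP of Eq.~(\ref{eq:roa-sdp}), and monotonicity by pushing the optimal decomposition through the channel (your operator-inequality phrasing $\rho\le(1+t^*)\tau^*$ plus positivity of $\msc{C}$ is just the form (\ref{eq-append:roa2}) of the paper's own argument, which applies $\msc{C}$ to the convex decomposition). Two remarks. First, you supply something the paper only asserts: a proof that the maximum $d-1$ is actually attained. Your witness argument via Eq.~(\ref{robustnessSDPconceptual1}) with $W_d = d\,|d\rangle\langle d|$ is valid, since every passive $\tau=\sum_i p_i |i\rangle\langle i|$ has $p_1\ge\dots\ge p_d$, hence $p_d\le 1/d$ and $\Tr[W_d\,\tau]\le 1$, while $\Tr[W_d\,|d\rangle\langle d|]=d$. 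Second, your parenthetical ``equivalently the maximally coherent state'' is wrong as stated: for $\rho=|\phi_+\rangle\langle\phi_+|$ one has $\Tr[W_d\,\rho]=1$, not $d$, so the witness $W_d$ certifies nothing for that state. The maximally coherent state does also attain $\roa{\cdot}=d-1$, but this requires the different witness $W_{\rm coh}=d\,|\phi_+\rangle\langle\phi_+|$, which is feasible because $\langle\phi_+|\tau|\phi_+\rangle=1/d$ for every diagonal (in particular every passive) state $\tau$. This slip does not affect the proposition, since the paper's claim of attainment concerns $|d\rangle\langle d|$, which you do handle correctly.
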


\begin{proof}
Nonnegativity and faithfulness follow immediately from the definition.    For the  upper bound  $\roa \rho  \le d-1$,   note  that every density matrix $\rho$ satisfies the operator inequality $\rho\leq \mathbb{I}=(1+t)\frac{\mathbb{I}}{d}$ with $t=  d-1$. Hence, Eq. \eqref{eq-append:roa2} implies $\roa {\rho} \le d-1$.  The upper bound is achieved by the maximally active state $\rho  =  |  d\rangle \langle d|$.  

We now prove convexity.  For every pair of  states $\rho_1$ and $\rho_2$ and for every probability $p_1$, we have 
\begin{align}
\nonumber \roa  {p\, \rho_1  +   (1-p) \, \rho_2 }   &=\min_{\substack{\Gamma\geq 0 \\ \mc{P}(\Gamma) \geq p\rho_1  +  (1-p  \rho_2)}} \Tr[\Gamma]    -1  \\
\nonumber &  \le    p  \,  \min_{\substack{\Gamma_1\geq 0 \\ \mc{P}(\Gamma_1) \geq \rho_1  }} \Tr[\Gamma_1  ]  \\
 \nonumber  &  \qquad  + (1-p)    \,  \min_{\substack{\Gamma_2\geq 0 \\ \mc{P}(\Gamma_2) \geq \rho_2  }} \Tr[\Gamma_2  ]  -1 \\
  &     = p   \,   \roa   {\rho_1}  +  (1-p) \,  \roa{  \rho_2 }  \, ,
\end{align}
where the first and last equality follow from  Eq. (\ref{eq:roa-sdp}).

To prove monotonicity,  consider an optimal decomposition  $\tau^* =  \frac{\rho+ \roa{\rho}\eta^*}{1+\roa{\rho}}$, where  $\tau^*$ is a passive state and $\eta^*$ is a suitable state. Then, we have   $\widetilde  \tau =  \frac{\msc C(\rho)+ \roa{\rho} \,  \msc  C(\eta^*) }{1+\roa{\rho}}$,
where $\widetilde{\tau}:=\msc{C}   (\tau^*)$ is a passive state, since  $\msc{C}$ is passivity-preserving. Hence, we obtained the inequality $ \roa{\msc{C}(\rho)}\leq \roa{\rho}$. \end{proof}

\subsection{Relations between activity weight, robustness of activity, and ergotropy}

The activity weight and the robustness of activity are related by an elementary inequality:  
\begin{proposition}
For every quantum  state $\rho  \in  {\sf St}  (S)$, one has  the bound
\begin{align}
\aw  \geq \frac{\roa{\rho}}{d-1}   \, .\label{aw-ar}
\end{align}
\end{proposition}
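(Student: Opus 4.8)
The plan is to bound the robustness of activity from above by $(d-1)\aw$, which is exactly equivalent to the claimed inequality. I would start from an optimal decomposition realizing the activity weight: by the definition of $\aw$, there exist a passive state $\tau^*\in\psh$ and a state $\eta^*\in{\sf St}(S)$ such that $\rho=\aw\,\eta^*+(1-\aw)\,\tau^*$.

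The key observation is that every density matrix obeys the operator inequality $\eta^*\le I$. Since $\aw\ge 0$, this immediately yields $\rho\le \aw\,I+(1-\aw)\,\tau^*$. The right-hand side is a positive (unnormalized) combination of passive operators, and the idea is to repackage it as a scalar multiple of a single passive state, so that the robustness characterization can be applied directly.

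To do so, I would write $\aw\,I=\aw d\,(I/d)$ and note that the maximally mixed state $I/d$ is passive, since the uniform distribution trivially satisfies the passivity ordering. Then $\aw\,I+(1-\aw)\,\tau^*=\bigl(1+(d-1)\aw\bigr)\,\tau$, where
\[
\tau:=\frac{\aw d}{1+(d-1)\aw}\,\frac{I}{d}+\frac{1-\aw}{1+(d-1)\aw}\,\tau^*
\]
is a genuine convex combination of the passive states $I/d$ and $\tau^*$ (the two coefficients are nonnegative and sum to one), and is therefore passive by convexity of $\psh$.

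Combining the two steps gives $\rho\le\bigl(1+(d-1)\aw\bigr)\,\tau$ with $\tau\in\psh$. By the alternative expression for the robustness of activity in Eq.~(\ref{eq-append:roa2}), namely $\roa{\rho}=\min_{\tau\in\psh}\{t\ge 0:\rho\le(1+t)\,\tau\}$, taking $t=(d-1)\aw$ shows $\roa{\rho}\le(d-1)\aw$, which rearranges to Eq.~(\ref{aw-ar}). The only point requiring a moment's care, and the nearest thing to an obstacle, is verifying that the repackaged operator is indeed a multiple of an honest passive state; this reduces to the elementary facts that $I/d\in\psh$ and that $\psh$ is convex, so no real difficulty arises.
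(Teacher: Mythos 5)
Your proof is correct, and it takes a genuinely different, more self-contained route than the paper's. Both arguments begin with the same optimal decomposition $\rho=\aw\,\eta^*+(1-\aw)\,\tau^*$, but from there the paper invokes three previously established properties of the robustness: its convexity, its faithfulness (so that $\roa{\tau^*}=0$), and the global bound $\roa{\eta^*}\le d-1$, concluding via $\roa{\rho}\le\aw\,\roa{\eta^*}\le (d-1)\,\aw$. You instead stay entirely at the level of operator inequalities: from $\eta^*\le I$ you obtain $\rho\le \aw\,I+(1-\aw)\,\tau^*$, and you repackage the right-hand side as $\bigl(1+(d-1)\aw\bigr)\,\tau$ for an explicit passive state $\tau$ (a convex mixture of $I/d$ and $\tau^*$), which is exactly a feasible point for the minimization in Eq.~(\ref{eq-append:roa2}). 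In effect you have inlined the lemmas the paper relies on: the paper's own proof that $\roa{\eta}\le d-1$ rests on the same observation $\eta\le I=d\,(I/d)$, and your repackaging step does the work that convexity of the robustness does in the paper. The trade-off is clear: the paper's argument is shorter and modular once the properties of the robustness have been established, whereas yours needs neither convexity nor faithfulness as prerequisites and produces an explicit witness $\tau$ certifying $t=(d-1)\aw$ in the robustness minimization; your algebra (the two coefficients being nonnegative and summing to one, and the identification of the prefactor $1+(d-1)\aw$) checks out, including the boundary cases $\aw=0$ and $\aw=1$.
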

\begin{proof}Consider the optimal decomposition $\rho=\aw \eta^*+(1-\aw)\tau^*$ for $\rho$, with $\tau^* \in \psh$. Since  $\roa{\rho}$ is convex in $\rho$,  we have
\begin{align*}
\roa{\rho}
&\leq \aw\roa{\eta^*}+ (1-\aw)\roa{\tau^*}\\
&=\aw\roa{\eta^*}\\
& \leq   (d-1)  \, \aw .
\end{align*}
where we have used the fact that $\roa{\tau^*}=0$ in the second line, and the upper bound  $\roa{\eta^*}\leq d-1$ in the third line. 
\end{proof}

\medskip

We now provide two upper bounds on the ergotropy in terms of the activity weight and robustness of activity, respectively. 
Note that in general the ergotropy  ${\sf Erg}(\rho)$ is bounded as 
\begin{align}
       \langle H \rangle_\rho   -   \frac{E_1  +  E_2  + \dots  +  E_d  }d  \le     {\sf Erg}(\rho)  \le  \langle H \rangle_\rho   -   E_1   
  \, 
\end{align}
for a Hamiltonian with nondegenerate eigenvalues $E_1 <  E_2  <   \cdots  <  E_d$.        The lower  bound follows from the fact that the maximally mixed state $I/d$ is the passive state with maximum average energy.

\begin{proposition}
For every quantum  state $\rho  \in  {\sf St}  (S)$, one has  the upper  bounds
\begin{align}
\label{eq:lem-work}
   {\sf Erg}  (\rho) \leq     \aw  \,        \max_{\sigma :  {\rm Supp}  (\sigma) \subseteq {\rm Supp}  (\rho)}  {\sf Erg}(\sigma)     
  \,  ,\end{align}
and
\begin{align}\label{eq:lem-work-low}
{\sf Erg}(\rho)  \le    \min\{\roa \rho,  1\}   \,         (E_{i_{\max}}  -  E_1 )\, , 
\end{align}   
with $i_{\max} :  = \max\{  i~|~  \langle i|  \rho |i \rangle \not  =  0    \}$. 
\end{proposition}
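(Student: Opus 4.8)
The statement packages two independent upper bounds, so the plan is to prove each separately; note that (\ref{eq:lem-work-low}) is the smaller of two bounds, since if I establish both ${\sf Erg}(\rho)\le \roa{\rho}\,(E_{i_{\max}}-E_1)$ and ${\sf Erg}(\rho)\le E_{i_{\max}}-E_1$, then (because $E_{i_{\max}}-E_1\ge 0$) their minimum is exactly $\min\{\roa{\rho},1\}(E_{i_{\max}}-E_1)$.

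For the weight bound (\ref{eq:lem-work}) I would exploit convexity of the ergotropy. The functional ${\sf Erg}$ is convex, being the sum of the linear map $\rho\mapsto\langle H\rangle_\rho$ and the convex map $\rho\mapsto-\min_U\langle H\rangle_{U\rho U^\dagger}$ (a negated pointwise minimum of linear functionals). Taking the optimal weight decomposition $\rho=\aw\,\eta^*+(1-\aw)\,\tau^*$ with $\tau^*\in\psh$, as in the proof of Proposition \ref{claim:a-w-mon}, convexity gives ${\sf Erg}(\rho)\le \aw\,{\sf Erg}(\eta^*)+(1-\aw)\,{\sf Erg}(\tau^*)=\aw\,{\sf Erg}(\eta^*)$, the second term vanishing by passivity of $\tau^*$. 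The only point needing care is that $\eta^*$ is supported in $\rho$: from $\aw\,\eta^*\le\rho$ one gets $\ker\rho\subseteq\ker\eta^*$, i.e. ${\rm Supp}(\eta^*)\subseteq{\rm Supp}(\rho)$, after which bounding ${\sf Erg}(\eta^*)$ by the maximum over all states supported in ${\rm Supp}(\rho)$ yields (\ref{eq:lem-work}).

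For (\ref{eq:lem-work-low}) the bound ${\sf Erg}(\rho)\le E_{i_{\max}}-E_1$ is elementary: the diagonal entries $\langle i|\rho|i\rangle$ vanish for $i>i_{\max}$, so $\langle H\rangle_\rho\le E_{i_{\max}}$, while $H\ge E_1 I$ forces $\min_U\langle H\rangle_{U\rho U^\dagger}\ge E_1$ (the case $i_{\max}=1$ being trivial, as then $\rho$ is passive). The substantive bound ${\sf Erg}(\rho)\le \roa{\rho}\,(E_{i_{\max}}-E_1)$ I would obtain from the witness characterization (\ref{robustnessSDPconceptual1}), $\roa{\rho}=\max_{\xi\ge0,\ \Tr[\xi\tau]\le1\ \forall\tau\in\psh}\Tr[\xi\rho]-1$: it suffices to exhibit one feasible witness. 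Let $U_\rho$ be a unitary achieving $\min_U\langle H\rangle_{U\rho U^\dagger}$, chosen to preserve $\mathcal{K}={\rm span}\{|1\rangle,\dots,|i_{\max}\rangle\}$ and to act as the identity on $\mathcal{K}^\perp$ (possible since ${\rm Supp}(\rho)\subseteq\mathcal{K}$ and $\mathrm{rank}\,\rho\le i_{\max}$, so the energy-minimizing rearrangement stays inside $\mathcal{K}$), and set
\[
W:=I+\frac{H-U_\rho^\dagger H U_\rho}{E_{i_{\max}}-E_1}.
\]
Writing $G:=H-U_\rho^\dagger H U_\rho$ one has $\Tr[G\rho]=\langle H\rangle_\rho-\langle H\rangle_{U_\rho\rho U_\rho^\dagger}={\sf Erg}(\rho)$, hence $\Tr[W\rho]=1+{\sf Erg}(\rho)/(E_{i_{\max}}-E_1)$, and feasibility of $W$ then gives $\roa{\rho}\ge {\sf Erg}(\rho)/(E_{i_{\max}}-E_1)$.

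The crux is verifying that $W$ is a legitimate witness. Positivity $W\ge0$ needs $\lambda_{\min}(G)\ge-(E_{i_{\max}}-E_1)$; this is where the block-diagonal choice of $U_\rho$ matters, since it makes $G$ vanish on $\mathcal{K}^\perp$, while for $|v\rangle\in\mathcal{K}$ both $\langle v|H|v\rangle$ and $\langle U_\rho v|H|U_\rho v\rangle$ lie in $[E_1,E_{i_{\max}}]$, giving $\langle v|G|v\rangle\ge E_1-E_{i_{\max}}$. The witness inequality $\Tr[W\tau]\le1$ for every $\tau\in\psh$ reduces to $\Tr[G\tau]=\langle H\rangle_\tau-\langle H\rangle_{U_\rho\tau U_\rho^\dagger}\le0$, which holds precisely because $\tau$ is passive: $\langle H\rangle_\tau=\min_U\langle H\rangle_{U\tau U^\dagger}\le\langle H\rangle_{U_\rho\tau U_\rho^\dagger}$. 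I expect this witness verification — in particular pinning the spectrum of $G$ to $[-(E_{i_{\max}}-E_1),E_{i_{\max}}-E_1]$ via the support structure — to be the main obstacle, the rest being bookkeeping. Combining the two displayed bounds yields (\ref{eq:lem-work-low}).
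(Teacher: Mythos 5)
Your proof is correct, and for the robustness bound it takes a genuinely different route from the paper's. For Eq.~(\ref{eq:lem-work}) you and the paper argue identically: convexity of ${\sf Erg}$ applied to the optimal weight decomposition $\rho = \aw\,\eta^* + (1-\aw)\,\tau^*$ with $\tau^*\in\psh$, plus ${\sf Erg}(\tau^*)=0$; your explicit check that $\aw\,\eta^*\le\rho$ forces ${\rm Supp}(\eta^*)\subseteq{\rm Supp}(\rho)$ fills in a step the paper uses silently (just split off the trivial case $\aw=0$, where $\rho$ is passive, before arguing about kernels). For Eq.~(\ref{eq:lem-work-low}), however, the paper works on the primal side: it takes the optimal decomposition $\tau_* = (\rho+t_*\eta_*)/(1+t_*)$ with $t_*=\roa{\rho}$, uses passivity of $\tau_*$ to get ${\sf Erg}(\rho)\le t_*\left[\langle H\rangle_{U_*\eta_*U_*^\dag}-\langle H\rangle_{\eta_*}\right]$, and then needs a delicate minimality argument (compressing the decomposition with the projector $P$ onto ${\sf Span}\{|1\rangle,\dots,|i_{\max}\rangle\}$ and invoking optimality of $t_*$ to conclude $\Tr[\eta_*P]=1$) to show that the bracket is at most $E_{i_{\max}}-E_1$. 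You instead work on the dual side: you exhibit the explicit feasible witness $W=I+(H-U_\rho^\dag H U_\rho)/(E_{i_{\max}}-E_1)$ in the SDP characterization of Eq.~(\ref{robustnessSDPconceptual1}) and read off $\roa{\rho}\ge\Tr[W\rho]-1={\sf Erg}(\rho)/(E_{i_{\max}}-E_1)$. Both arguments hinge on the same structural fact, namely that an ergotropy-optimal unitary for $\rho$ can be chosen block-diagonal, acting inside ${\sf Span}\{|1\rangle,\dots,|i_{\max}\rangle\}$ and trivially on its complement (your ``can be chosen'' phrasing is in fact more accurate than the paper's claim that any optimal $U_*$ \emph{must} satisfy $U_*PU_*^\dag=P$), but your certificate argument replaces the paper's optimality-of-$t_*$ reasoning with a direct spectral verification: $G=H-U_\rho^\dag HU_\rho$ vanishes on the complementary block and is bounded below by $-(E_{i_{\max}}-E_1)$ on the subspace, giving $W\ge0$, while passivity of $\tau$ gives $\Tr[G\tau]\le0$, i.e.\ $\Tr[W\tau]\le1$. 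What the dual route buys is robustness of the argument: any feasible witness certifies the bound, so nothing about the optimal decomposition is needed beyond the duality already established in Appendix~\ref{append:as-SDPs-monotone}; what the paper's route buys is self-containedness at the level of convex decompositions, without leaning on the witness characterization. Your treatment of the edge cases ($i_{\max}=1$, and combining with the trivial bound ${\sf Erg}(\rho)\le E_{i_{\max}}-E_1$ to produce the $\min\{\roa{\rho},1\}$ factor) matches the paper's final step.
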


  \begin{proof}   
 Let $\rho  =  t \,  \eta+   (1-t)  \, \tau  $ be a convex decomposition where  $\tau$ is some passive state and $\eta$ is a suitable state.   Since the ergotropy is a convex function, we have 
  \begin{align}
  \nonumber {\sf Erg}(\rho)  &  \le     t \,    {\sf Erg}  (\eta)  +   (1-t)  \,    {\sf Erg} (\tau)  \\
    & =   t\,  {\sf Erg} (\eta)  \, .      
  \end{align}
Picking the decomposition with $t=  \aw$ and using the relation  ${\sf Erg}  (\eta)  \le \max_{\sigma:   {\rm Supp}  (\sigma) \subseteq {\rm Supp}  (\rho)}  \,    {\sf Erg}(\sigma)$ we then obtain Eq. (\ref{eq:lem-work}).

To prove Eq. (\ref{eq:lem-work-low}), consider a convex decomposition $\tau   = \frac {  \rho  +  t\,  \eta}{1+t}$,  where  $\tau$ is some passive state and $\eta$ is a suitable state.    
    Let  $U_*$ be a unitary operator such that ${\sf Erg}  ( \rho)  =  \langle  H\rangle_\rho    - \langle  H\rangle_{  U_*\rho  U_*^\dag}$.         Denoting by $P$ the projector on ${\sf Span}  \{  |1\rangle,  |2\rangle,  \dots,  |i_{\max} \rangle  \}$, we  must have that $U_*  P  U_*^\dag  =  P$ in order  for $\langle  H\rangle_{  U_*\rho  U_*^\dag}$ to be minimum.    Moreover, we have 
  \begin{align}
    \langle  H\rangle_\tau   -     \langle  H\rangle_{U_*  \tau  U_*^\dag}    &   =  \frac{ {\sf Erg}(\rho)     +  t\,     \left[     \langle  H\rangle_\eta   -     \langle  H\rangle_{U_*  \eta  U_*^\dag}  \right]}{1+t} \, .
  \end{align}
  Since $\tau$ is passive, the l.h.s. is negative.  Hence, we have  the inequality 
  \begin{align}
  {\sf Erg}(\rho)  \le  t  \,     \left[       \langle  H\rangle_{U_*  \eta  U_*^\dag}   -   \langle  H\rangle_\eta \right] \, .   
  \end{align}
Picking a decomposition   $\tau_*   = \frac {  \rho  +  t_*\,  \eta_*}{1+t_*}$ for which $t_*=  \roa \rho$, we then obtain  
  \begin{align}
  {\sf Erg}(\rho)  \le   \roa \rho \,     \left[       \langle  H\rangle_{U_*  \eta_*  U_*^\dag}   -   \langle  H\rangle_{\eta_*} \right] \, .   
  \end{align}
Now,   using the relation $P\rho  P=  \rho$,   we obtain  the relation  $P  \tau_*  P =  \frac {  \rho  +  t_*\,  P\eta_* P}{1+t_*}$, or equivalently, $\tau'  =  \frac {  \rho  +  t'\,  \eta' }{1+t'} $ where $\tau':  =  P\tau_* P/\Tr[\tau_*  \,P]$ is a passive state,  $\eta'  :=  P  \eta_*  P/\Tr[  \eta_*\,  P] $, and $t'  =  t_*  \,  \Tr [ \eta_*  \,  P]$.    Since $t_*   =   \roa \rho$  is the minimum over all possible convex decompositions of the form $\tau   = \frac {  \rho  +  t\,  \eta}{1+t}$, with passive $\tau$, we conclude that $t'   \ge  t_* $ and therefore $\Tr  [\eta_* \,  P] \ge  1$ (or equivalently, $\Tr  [\eta_* \,  P] = 1$).  Hence, $\eta_*$ must have support contained in the support of $P$, and the same must hold for $U_* \eta_*  U_*^\dag$, since $U_*$ leaves the support of $P$ invariant.  Hence, we have $ \langle  H\rangle_{U_*  \eta_*  U_*^\dag}  $,  and therefore  $   \langle  H\rangle_{U_*  \eta_*  U_*^\dag}   -   \langle  H\rangle_{\eta_*}  \le  E_{i_{\max}}   -  E_{1} $.    

Summarizing, we obtained the inequality  $  {\sf Erg}(\rho)  \le   \roa \rho \,   ( E_{i_{\max}}  -  E_1)$, which combined with the trivial inequality $ {\sf Erg}(\rho)  \le   E_{i_{\max}}  - E_1$ yields Eq. (\ref{eq:lem-work-low}). \end{proof}

\section{Activity witnesses} 
\label{append:witness-characterization}

Here we provide a characterization of the set of activity witnesses.  

\begin{proposition}
For every positive semidefinite matrix $W$, the following conditions are equivalent: 
\begin{enumerate}
\item  $\Tr [   W   \,  \tau] \le 1$ for every passive state $\tau$.
\item   The diagonal matrix elements of  $W$ satisfy the conditions 
\begin{align}
\nonumber W_{11}  &\le 1  \\
\nonumber W_{11}  +  W_{22}  &  \le 2  \\
\nonumber  & \vdots \\
W_{11} +  \cdots +  W_{dd}  & \le d  \label{witnesses}
 \, .  
 \end{align}
\end{enumerate}\end{proposition}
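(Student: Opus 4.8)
The plan is to exploit the fact—already recorded in the proof of Proposition~\ref{prop:entbreak}—that in the non-degenerate case the convex set $\psh$ of passive states is a simplex whose extreme points are the canonical passive states $\tau_j = \frac{1}{j}\sum_{i=1}^j |i\rangle\langle i|$, $j\in\{1,\dots,d\}$. Since $\tau\mapsto \Tr[W\tau]$ is an affine function, its maximum over $\psh$ is attained at a vertex, so condition~1 is equivalent to the finite family of inequalities $\Tr[W\tau_j]\le 1$ for all $j$. I would then simply compute $\Tr[W\tau_j] = \frac{1}{j}\sum_{i=1}^j \langle i|W|i\rangle = \frac{1}{j}\sum_{i=1}^j W_{ii}$, which converts $\Tr[W\tau_j]\le 1$ into $\sum_{i=1}^j W_{ii}\le j$, i.e.\ exactly condition~2. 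This delivers both implications simultaneously.

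To keep the argument self-contained (not relying on the vertex characterization), I would instead prove the two directions separately. The implication $1\Rightarrow 2$ is immediate: each $\tau_j$ is passive, so inserting $\tau=\tau_j$ into condition~1 gives $\frac{1}{j}\sum_{i=1}^j W_{ii}\le 1$. For $2\Rightarrow 1$, I would write an arbitrary passive state as $\tau=\sum_i p_i\,|i\rangle\langle i|$ with $p_1\ge p_2\ge\cdots\ge p_d\ge 0$ and $\sum_i p_i=1$, so that $\Tr[W\tau]=\sum_{k=1}^d p_k W_{kk}$ depends only on the diagonal of $W$. Setting $S_k:=\sum_{i=1}^k W_{ii}$ and applying Abel (partial) summation gives $\sum_{k=1}^d p_k W_{kk}=\sum_{k=1}^{d-1}(p_k-p_{k+1})\,S_k + p_d\,S_d$. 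Every weight $p_k-p_{k+1}$ and $p_d$ is nonnegative by passivity, and condition~2 reads $S_k\le k$; bounding each $S_k$ by $k$ and observing that $\sum_{k=1}^{d-1}(p_k-p_{k+1})k+p_d\,d=\sum_k p_k=1$ (the same identity with $W_{ii}$ replaced by $1$, for which $S_k=k$) yields $\Tr[W\tau]\le 1$.

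The only subtle point is the $2\Rightarrow 1$ direction, where one must check that controlling the \emph{partial sums} of the diagonal entries is enough to bound $\Tr[W\tau]$ for \emph{all} passive states at once. This is precisely the content of the Abel-summation step, which is the algebraic shadow of the fact that the non-increasing-probability constraint makes $\psh$ a simplex with the $\tau_j$ as vertices: the monotonicity $p_1\ge\cdots\ge p_d$ lets one re-expand $\Tr[W\tau]$ as a nonnegative combination of the partial sums $S_k$, so the vertex bounds propagate to the entire polytope. I note that positive semidefiniteness of $W$ is not actually needed for the equivalence—only the diagonal of $W$ enters, since passive states are diagonal—but it is the natural hypothesis ensuring that $W$ qualifies as an activity witness.
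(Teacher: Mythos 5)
Your proposal is correct, and its first paragraph is essentially the paper's own proof: the paper likewise reduces condition~1 to the finitely many inequalities $\Tr[W\tau_j]\le 1$, $j\in\{1,\dots,d\}$, on the grounds that the $\tau_j=\sum_{i\le j}|i\rangle\langle i|/j$ are the extreme points of $\psh$ and $\tau\mapsto\Tr[W\tau]$ is affine, and then inserts the explicit form of $\tau_j$. Where you genuinely depart from the paper is in the self-contained handling of the nontrivial direction $2\Rightarrow 1$: the paper dispatches it in one line by invoking the extreme-point characterization of $\psh$ (a fact recorded earlier, in the proof of the characterization of activity-breaking channels, but not re-derived), whereas your Abel-summation argument proves it directly from the definition of a passive state, $p_1\ge\cdots\ge p_d\ge 0$, $\sum_k p_k=1$. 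Your identity $\sum_k p_k W_{kk}=\sum_{k=1}^{d-1}(p_k-p_{k+1})S_k+p_d S_d$, with nonnegative weights that sum against $S_k=k$ to exactly $1$, is verified correctly, and it is precisely the scalar form of the explicit convex decomposition $\tau=\sum_{k=1}^{d-1}(p_k-p_{k+1})\,k\,\tau_k+p_d\,d\,\tau_d$; so your route buys self-containedness, in effect proving the vertex fact that the paper takes as given, at the cost of a slightly longer computation. Your closing observation is also accurate: positive semidefiniteness of $W$ plays no role in the equivalence, since passive states are diagonal and both conditions see only the diagonal entries of $W$; it matters only for $W$ to qualify as a legitimate (positive) observable serving as a witness.
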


\begin{proof} The condition   $\Tr[ W\, \tau]  \le 1 \,  ~\forall \tau  \in  \psh$ is equivalent to  
\begin{align}\label{witness}
\Tr [W \,  \tau_j]    \le 1   \qquad \forall  j\in  \{1,\dots, d\} \, ,
\end{align} 
where $\tau_j  =  \sum_{i\le j}  |i\rangle \langle  i|/j$ are the extreme points of $\psh$.     Inserting the explicit expression of the states $\tau_j$ into Eq. (\ref{witness}) we then obtain Eq. (\ref{witnesses}).  
\end{proof}

\medskip  

We say that a positive semidefinite matrix $W$ is a {\em nontrivial} activity witness if $\Tr [W  \,  \tau  ]\le 1$ for every passive state $\tau$ and $\Tr[  W\,\rho]  >1$ for some state $\rho$.  
The above characterization provides a  way to construct  nontrivial activity witnesses.   For example, the matrices 
\begin{align}
W_k=  \,  k\, |  k\rangle \langle k|     
\end{align}
are nontrivial activity witnesses for every $k\ge 2$, as the matrix $W_k$ witnesses the activity of the state $\rho_k  =   |  k\rangle \langle k| $.       Another example of nontrivial activity witness is $W  =  d\,  |\phi_+ \rangle \langle \phi_+|$, which witnesses the activity of the maximally coherent state $|\phi_+\rangle$.

In the following, we will use a connection between activity witnesses and {\em sub-correlation matrices}, that is, positive semi-definite matrices $\xi\ge 0 $ satisfying the condition $\xi_{ii} \le 1$ for every $i\in  \{1,\dots, d\}$.  The connection is established in the following lemma: 
\begin{lemma}\label{lem:witnessescorrelation}
Every sub-correlation matrix $\xi  \ge0 $ is an activity witness.  {\em Vice-versa}, every activity witness $W$ is proportional to a sub-correlation matrix:  specifically, the matrix $W/d$ is a sub-correlation matrix for every activity witness $W$.   
\end{lemma}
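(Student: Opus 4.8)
The plan is to prove the two implications separately, reducing each to a short computation involving the extreme points $\tau_j$ of the set $\psh$. Recall from the proof of Proposition \ref{prop:entbreak} that $\psh$ is a simplex whose extreme points are precisely the states $\tau_j = \sum_{i\le j}|i\rangle\langle i|/j$ of Eq.~(\ref{tauj}). In both directions the essential work is carried by the elementary bounds on diagonal entries, so the argument is short once the reduction to extreme points is in place.

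For the forward implication, I would take an arbitrary sub-correlation matrix $\xi\ge 0$ and show that $\Tr[\xi\tau]\le 1$ for every passive $\tau$. Since the map $\tau\mapsto\Tr[\xi\tau]$ is affine and $\psh$ is the convex hull of the states $\tau_j$, it suffices to verify the inequality on the extreme points. A direct computation gives $\Tr[\xi\tau_j] = \tfrac1j\sum_{i=1}^{j}\xi_{ii}$, which is at most $1$ because each diagonal entry obeys $\xi_{ii}\le 1$ by the defining property of a sub-correlation matrix. Hence $\xi$ is an activity witness.

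For the reverse implication, I would start from an arbitrary activity witness $W$ and establish that $W/d$ is a sub-correlation matrix. Positivity is immediate, since $W\ge 0$ gives $W/d\ge 0$. For the diagonal bound, I would evaluate the witness condition on the maximally mixed state $\tau_d = I/d$, which is passive; this yields $\Tr[W]/d\le 1$, i.e.\ $\Tr[W]\le d$. Because $W\ge 0$ its diagonal entries are nonnegative, and each is therefore bounded by the trace: $0\le W_{ii}\le \Tr[W]\le d$. Dividing by $d$ gives $(W/d)_{ii}\le 1$, so $W/d$ is a sub-correlation matrix.

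Neither direction presents a serious obstacle. The only points requiring care are the appeal to the simplex structure of $\psh$, which lets one replace the continuum of conditions over all passive $\tau$ by the finite check on the extreme points $\tau_j$, and, in the converse, the elementary observation that nonnegative diagonal entries with trace at most $d$ are each individually bounded by $d$.
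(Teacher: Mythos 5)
Your proof is correct in both directions, and it differs from the paper's in a small but genuine way. For the forward implication the paper routes the argument through the characterization of activity witnesses by partial sums, namely the condition $\sum_{i=1}^{j} W_{ii} \le j$ for all $j$ (the proposition preceding the lemma), which is itself proved by evaluating $\Tr[W\,\tau_j]$ on the extreme points $\tau_j$ of $\psh$; you simply inline that extreme-point computation, so this direction is essentially identical. Where you deviate is the converse: the paper again invokes the partial-sum characterization and uses nonnegativity of the diagonal entries to get $W_{jj}\le j\le d$, whereas you test the witness inequality against the single passive state $\tau_d=I/d$ to obtain $\Tr[W]\le d$, and then bound each diagonal entry by the trace, $W_{ii}\le \Tr[W]\le d$. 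Your version is more economical: it needs only one passive state rather than the full finite list of constraints, so the converse does not depend on the preceding proposition at all. The trade-off is that you obtain the weaker per-entry bound $W_{ii}\le d$ instead of the paper's $W_{ii}\le i$, but since the lemma only requires $(W/d)_{ii}\le 1$, this costs nothing for the statement at hand.
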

 
 \begin{proof}
 Clearly, the condition $\xi_{ii}\le 1, ~\forall i$ implies the condition $\sum_{i=1}^j  \xi_{ii}  \le j, ~\forall  j$.    On the other hand, since the matrix elements $W_{ii}$ are nonnegative, the condition $\sum_{i=1}^j  W_{ii} \le j \, ,\forall  j$ implies in particular $W_{jj}  \le  j  \le d,   \, \forall j\in \{1,\dots, d\}$.   Hence, $W/d$ is a sub-correlation matrix.      
  \end{proof}

\section{Operational interpretation of the max relative entropy of activity}
\label{append:optimal-witness-and-operational}

Consider the operational scenario in which an experimenter can set up an energy-preserving interaction between the system $S$ and an auxiliary system with fully degenerate energy levels, measure the auxiliary system, and postselect a subset of the outcomes.   The resulting quantum operations, called {\em energy-preserving} \cite{chiribella2017optimal}, can be characterized as follows:  
\begin{proposition}  
For a quantum system with nondegenerate Hamiltonian $H$, the following are equivalent
\begin{enumerate}
\item The quantum operation $\mc Q$ is energy-preserving.
\item There exists an energy-preserving channel $\mc C$ such that the map $\mc C  -  \mc Q$ is completely positive.
\item   $\mc Q$ has a Kraus representation of the form $\mc Q  (\rho)   =  \sum_{j}  Q_j \rho  Q_j^\dag$, with $[Q_j,  H]  =  0  $ for every $j$.
\item The action of $\mc Q$ on an input state $\rho$ is given by $\mc Q  (\rho)  =   \xi  \odot  \rho$, where $\xi$ is a sub-correlation matrix, that is, a positive matrix satisfying the condition  $\xi_{ii}    \le 1$ for every $i \in \{1,\dots, d\}$.  
\end{enumerate}
\end{proposition}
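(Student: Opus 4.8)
The plan is to prove the four conditions equivalent by establishing the cycle $1\Rightarrow 2\Rightarrow 4\Rightarrow 3\Rightarrow 1$. The backbone of the whole argument is the following characterization of Schur multipliers in terms of their Choi matrices: a linear map $\Phi$ satisfies $\Phi(\rho)=\xi\odot\rho$ for some matrix $\xi$ if and only if its Choi matrix $J(\Phi):=\sum_{ij}\ket{i}\bra{j}\otimes\Phi(\ket{i}\bra{j})$ is supported on the diagonal subspace $V:=\operatorname{span}\{\ket{i}\otimes\ket{i}\}_{i=1}^{d}$, in which case $J(\Phi)$ restricted to $V$ equals $\xi$. A one-line Choi--Jamio{\l}kowski computation gives $J(\Phi)=\sum_{ij}\xi_{ij}\,\ket{i}\ket{i}\bra{j}\bra{j}$ and its inverse, so this correspondence is exact. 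Combined with Theorem~\ref{theo:schur}, which already identifies energy-preserving \emph{channels} with correlation matrices, it reduces the proposition to manipulations of positive matrices.

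The two ``physical'' ends of the cycle are the easiest. For $1\Rightarrow 2$ I would write a generic energy-preserving operation as $\mc{Q}(\rho)=\sum_{j\in J}\bra{j}U(\rho\otimes\sigma_A)U^{\dagger}\ket{j}$, where $U$ commutes with $H\otimes I_A$, the ancilla $A$ has fully degenerate Hamiltonian, and $J$ is the postselected set of outcomes; taking the sum over \emph{all} outcomes yields the energy-preserving channel $\mc{C}(\rho)=\Tr_A[U(\rho\otimes\sigma_A)U^{\dagger}]$, and $\mc{C}-\mc{Q}=\sum_{j\notin J}\bra{j}U(\,\cdot\otimes\sigma_A)U^{\dagger}\ket{j}$ is manifestly completely positive. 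For the closing step $3\Rightarrow 1$, given $H$-commuting Kraus operators $Q_j$ with $\sum_j Q_j^{\dagger}Q_j\le I$, I would adjoin $Q_0=\sqrt{I-\sum_j Q_j^{\dagger}Q_j}$, which is again diagonal in the energy eigenbasis and hence commutes with $H$, invoke the energy-preserving Stinespring dilation of Ref.~\cite{chiribella2017optimal} to realize the completed channel as an $H$-preserving interaction with a degenerate ancilla, and recover $\mc{Q}$ by measuring the ancilla and discarding the outcome associated to $Q_0$.

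The algebraic step $4\Rightarrow 3$ is a direct computation. Given a sub-correlation matrix $\xi\ge 0$, any decomposition $\xi=\sum_j\ket{q_j}\bra{q_j}$ yields the diagonal operators $Q_j=\sum_m\braket{m}{q_j}\,\ket{m}\bra{m}$, which commute with $H$ precisely because $H$ is non-degenerate and $Q_j$ is diagonal in its eigenbasis. Comparing matrix elements gives $\sum_j Q_j\rho Q_j^{\dagger}=\xi\odot\rho$, while $\sum_j Q_j^{\dagger}Q_j=\sum_m\xi_{mm}\ket{m}\bra{m}\le I$ is equivalent to $\xi_{mm}\le 1$, so trace non-increase and the sub-correlation property are the same condition.

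The step I expect to be the crux is $2\Rightarrow 4$, where the abstract domination hypothesis must be turned into the concrete Schur form. Here I am given an energy-preserving channel $\mc{C}$, so by Theorem~\ref{theo:schur} its Choi matrix $J(\mc{C})$ is supported on $V$, together with complete positivity of $\mc{C}-\mc{Q}$, i.e. $0\le J(\mc{Q})\le J(\mc{C})$. The engine is a support-containment lemma: if $0\le A\le B$ and $\operatorname{ran}(B)\subseteq V$, then $\operatorname{ran}(A)\subseteq V$, since for $\ket{v}\in\ker(B)$ one has $0\le\bra{v}A\ket{v}\le\bra{v}B\ket{v}=0$, forcing $A\ket{v}=0$, so $\ker(B)\subseteq\ker(A)$ and hence $\operatorname{ran}(A)\subseteq\operatorname{ran}(B)\subseteq V$. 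Applying this with $A=J(\mc{Q})$ and $B=J(\mc{C})$ shows that $\mc{Q}$ is a Schur map with some $\xi\ge 0$, and restricting $J(\mc{Q})\le J(\mc{C})$ to $V$ gives $\xi\le\eta$ for the correlation matrix $\eta$ of $\mc{C}$, whence $\xi_{ii}\le\eta_{ii}=1$. Verifying that positivity domination passes cleanly to the diagonal subspace is the only genuinely delicate point; the rest is bookkeeping.
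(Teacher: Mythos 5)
Your proof is correct, and at its crux it takes a genuinely different route from the paper's. The paper proves the cycle $1\Rightarrow 2\Rightarrow 3\Rightarrow 4\Rightarrow 1$ and handles the key step $2\Rightarrow 3$ by importing a rigidity result from Ref.~\cite{chiribella2017optimal}: \emph{every} Kraus representation of an energy-preserving channel consists of operators commuting with $H$. Concretely, it glues a Kraus representation of $\mc Q$ to one of $\mc C - \mc Q$ to obtain a Kraus representation of the energy-preserving channel $\mc C$, and reads off $[Q_j,H]=0$ from that imported fact. You instead run $2\Rightarrow 4$ directly at the level of Choi matrices: Theorem~\ref{theo:schur} places $J(\mc C)$ on the diagonal subspace $V$, the domination $0\le J(\mc Q)\le J(\mc C)$ together with your support-containment lemma forces $J(\mc Q)$ onto $V$ as well, and restricting the operator inequality to $V$ gives $\xi\le\eta$, hence $\xi_{ii}\le\eta_{ii}=1$. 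This replaces the external input about Kraus representations with a self-contained positive-semidefinite argument, and your lemma is sound: $\langle v|A|v\rangle=0$ with $A\ge 0$ does force $A|v\rangle=0$, and Hermiticity converts the kernel containment into range containment. The remaining legs are essentially the paper's: your $1\Rightarrow 2$ is the paper's instrument-completion argument written out through the dilation; your $4\Rightarrow 3$ inverts the paper's computation for $3\Rightarrow 4$ (decomposing $\xi=\sum_j|q_j\rangle\langle q_j|$ rather than assembling it from Kraus coefficients); and your $3\Rightarrow 1$ mirrors the paper's $4\Rightarrow 1$, completing the operation with the diagonal Kraus operator $Q_0=\sqrt{I-\sum_j Q_j^\dag Q_j}$ instead of the diagonal correction matrix $\xi'$, then invoking the instrument-realization theorem of Ref.~\cite{chiribella2017optimal}. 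What each approach buys: the paper's is shorter but rests on the stronger cited fact about all Kraus representations of energy-preserving channels, while yours needs only Theorem~\ref{theo:schur} plus Choi--Jamio{\l}kowski, and the inequality $\xi\le\eta$ it produces is slightly more informative than the bare sub-correlation property. One cosmetic point: in $1\Rightarrow 2$ you model the ancilla measurement as rank-one projective; strictly, arbitrary POVMs and coarse-grained postselection should be covered (e.g.\ via a Naimark dilation onto a larger degenerate ancilla), which is the same level of detail the paper itself leaves implicit.
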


\begin{proof}
$1  \Rightarrow 2$.  By definition, an energy-preserving quantum operation $\mc Q$ is contained in an energy-preserving instrument  \cite{chiribella2017optimal}, that is, a set of  quantum operations  $(\mc Q_x)_{x\in\sf X}$ that sum up to an energy-preserving channel  $\mc C  :  =  \sum_x  \, \mc Q_x$.   Hence,  one  $\mc Q  =  \mc Q_{x_0}$ for some $x_0  \in  \sf X$. Now, $\mc C  :  =  \sum_x  \, \mc Q_x$ is an energy-preserving channel and $\mc C -  \mc Q  = \sum_{x\not =  x_0}\,\mc Q_x$ is a completely positive map.  

$2  \Rightarrow 3$.    Since the map $\mc C -\mc Q$ is completely positive, it has a Kraus representation, say $(\mc C  - \mc Q)  (\rho)  =  \sum_k  \,   R_k \rho  R_k^\dag$.  Then, one has the Kraus representation  $\mc C  (\rho)    =  \sum_j   Q_j \, \rho  \, Q_j^\dag  +  \sum_k  R_k\,  \rho  \,  R_k^\dag$, where   $   \mc Q (\rho)  =  \sum_j \,  Q_j  \rho  Q_j^\dag$ is a Kraus representation for $\mc Q$.  Now,  Ref. \cite{chiribella2017optimal} showed that every Kraus representation of an energy-preserving channel $\mc C$ is of the form $\mc C  (\rho)  =  \sum_l\,  C_l  \rho  C_l^\dag$ with $[C_l,H]=0$ for every $l$. Applying this condition to the above Kraus representation, we obtain in particular $[Q_j,  H ]=0$.  

$3 \Rightarrow 4$.    Let $   \mc Q (\rho)  =  \sum_j \,  Q_j  \rho  Q_j^\dag$ be a Kraus representation with $[Q_j,  H]=   0$ for every $j$.  Writing $Q_j  = \sum_{i}   \alpha_{ij}\,  |i\rangle \langle i|$ for suitable coefficients $\alpha_{ij}$, we then obtain  $\mc Q  (\rho)  =  \sum_{i,i',j}  \alpha_{ij}  \, \overline \alpha_{i'j} \,   |i\rangle \langle i|  \rho  |i'\rangle \langle i'|  =  \sum_{i,i'}   \xi_{ii'} \rho_{ii'} |i\rangle \langle i'| \,  $, with $\xi_{ii'}   := \sum_{j}  \,  \alpha_{ij}  \overline \alpha_{i'j}$.   By construction, the matrix $\xi$ is positive semidefinite and $\xi_{ii}   =  \Tr[  \mc Q  (|i\rangle \langle i|)] \le 1$ for every $i$.    
 
$4 \Rightarrow 1$.   Suppose that $\mc Q  (\rho)   =  \xi \odot \rho$ for some sub-correlation matrix $\xi$.  
 Let $\xi'$ be the diagonal matrix with $\xi_{ii}'  :  =  1-\xi_{ii}$.  Then,   $\xi  +  \xi'$ is a correlation matrix and the map $\mc E$ defined by $\mc E  (\rho ):  =   (\xi  +  \xi')  \odot  \rho$ is an energy-preserving channel by Theorem \ref{theo:schur}.     Hence, the quantum operations $\mc Q$ and $\mc Q'  :  \rho  \mapsto  \mc Q'  (\rho)  =  \xi'\odot \rho$ form an energy-preserving instrument, in the sense of Ref. \cite{chiribella2017optimal}. There, it was shown that every energy-preserving instrument can be realized by coupling the system with an auxiliary system with fully degenerate energy levels, and by performing measurements on the auxiliary system.   Hence, the quantum operation $\mc Q$ is energy-preserving.  
\end{proof}

Using the above characterization we can evaluate the maximum advantage of a state $\rho$ in the task of generating the canonical maximally coherent state through energy-preserving operations: 
\begin{align}
\nonumber &\max_{\mc Q  \in  {\sf EPO}}   \min_{\tau  \in \psh}  \frac{  \langle \phi_+  |  \mc Q  (\rho) |\phi_+ \rangle}{  \langle \phi_+  |  \mc Q  (\tau) |\phi_+ \rangle}   \\
  \nonumber   & \qquad    =  \max_{\xi \ge 0,  \xi_{ii}  \le 1 \forall i}    \min_{\tau \in  \psh}   ~\frac{ \langle  \phi_+  |    \xi  \odot \rho \,  |\phi_+
\rangle}{\langle  \phi_+  |    \xi  \odot \tau \,  |\phi_+
\rangle}\\
 \nonumber   & \qquad    =  \max_{\xi \ge 0,  \xi_{ii}  \le 1 \forall i}    \min_{\tau \in  \psh}   ~\frac{ \Tr[    ( \xi \odot   |\phi_+
\rangle  \langle  \phi_+  |) \,  \rho ] }{ \Tr[    ( \xi \odot   |\phi_+
\rangle  \langle  \phi_+  |) \,  \tau ] }\\
 & \qquad    =  \max_{\xi \ge 0,  \xi_{ii}  \le 1 \forall i}    \min_{\tau \in  \psh}   ~\frac{ \Tr[    \xi  \,  \rho ] }{ \Tr[      \xi  \,  \tau ] } \, ,\label{fff}
\end{align}
the last equality following from the condition  $   \xi \odot   |\phi_+
\rangle  \langle  \phi_+  |    =  \xi/d$.

Now, Lemma \ref{lem:witnessescorrelation} implies that the maximum  over all sub-correlation matrices can be replaced  by a maximum over all activity witnesses. Hence, we obtain     
\begin{align}
\nonumber &\max_{\mc Q  \in  {\sf EPO}}   \min_{\tau  \in \psh}  \frac{  \langle \phi_+  |  \mc Q  (\rho) |\phi_+ \rangle}{  \langle \phi_+  |  \mc Q  (\tau) |\phi_+ \rangle}   \\
\nonumber &  \qquad   =  \max_{\xi \ge 0,  \xi_{ii}  \le 1 \forall i}    \min_{\tau \in  \psh}   ~\frac{ \Tr[    \xi  \,  \rho ] }{ \Tr[      \xi  \,  \tau ] } \\
 \nonumber   & \qquad    =  \max_{W \ge 0,    \Tr [W  \tau] \le 1  ~ \forall   \tau  \in \psh}    \min_{\tau \in  \psh}   ~\frac{ \Tr[    W  \,  \rho ] }{ \Tr[      W  \,  \tau ] } \\
 \nonumber   & \qquad    =  \max_{W \ge 0,    \Tr [W  \tau] \le 1  ~ \forall   \tau  \in \psh}      ~ \Tr[    W  \,  \rho ] \\
 &  \qquad =  2^{R_{\max}^{\rm act}  (\rho)} \, ,  \label{ggg}
\end{align}
where the last equality follows from Eq. (\ref{robustnessSDPconceptual}), and the third equality follows by restricting without loss of generality the maximization to the {\em normalized} activity witnesses, that is, the activity witnesses satisfying $\Tr[ W  \tau_*] =1$ for some passive state $\tau_*$.

\section{Max relative entropy of coherence and energy-preserving channels}
\label{append:operational-coherence}

Here we provide an operational characterization of the max relative entropy of coherence 
\begin{align}
R_{\max}^{\rm coh}  (\rho)  :  =  \min  \{  D_{\max}  (\rho \|  \gamma)~|~  \gamma \ge 0  ,  \Tr[\gamma]=1 \,  [  \gamma,  H]   =  0 \} \, .
 \end{align}  
Ref. \cite{Bu2017} showed the relation   
\begin{align}\label{set}
2^{R_{\max}^{\rm coh}  (\rho)}    =  \max_{\mc C \in  \sf Set}   d\,  \langle \phi_+  |   \mc C (\rho)  |\phi_+\rangle \, ,  
\end{align} 
where the maximization is over a set of  quantum channels $\sf Set$, which can be either the set of all incoherent channels, the set of strictly incoherent channels, or the set of dephasing covariant channels.   Here, we show that Eq.  (\ref{set})  remains true even if the maximization is restricted to the strictly smaller set of energy-preserving channels.  
The proof is simple:  using the characterization of the energy-preserving channels (Theorem \ref{theo:schur}), we obtain 
\begin{align}
 \nonumber \max_{\mc E \in  \sf  EPC}   d\,  \langle \phi_+  |   \mc E (\rho)  |\phi_+\rangle   
 &  =  \max_{\xi \ge 0  \, , \xi_{ii}  \ge 1 \forall i }   d\,  \langle \phi_+  |    \xi  \odot \rho    |\phi_+\rangle   \\
\nonumber  &  =  \max_{\xi \ge 0  \, , \xi_{ii} =  1 \forall i }     \Tr  [ d\,( \xi \odot  |\phi_+ \rangle \langle \phi_+  |   )    \,\rho ]  \\
   &  = \max_{\xi  \ge 0 \, ,\xi_{ii}  = 1  \, \forall i} \Tr  [ \xi \, \rho]    \, ,  
\end{align}
where the third equality follows from the relation $\xi \odot  |\phi_+ \rangle \langle \phi_+  |     = \xi/d $.

Now, the maximization over all correlation matrices $\xi$ can be equivalently expressed as  
\begin{align}\label{aaa}
\max_{\xi  \ge 0 \, ,\xi_{ii}  = 1  \, \forall i} \Tr  [ \xi \, \rho]  & =  \max_{\xi \ge 0 \, , \Delta  (\xi) =   I}  \,  \Tr  [  \xi \rho] \, ,
\end{align}
where $\Delta$ is the completely dephasing channel $\Delta (\rho):=  \sum_{i}  \langle i|  \rho  |i\rangle  \,  |i\rangle \langle i|  $.   At this point, the duality of semidefinite programming implies the bound 
\begin{align}\label{bbb}
\nonumber  \max_{\xi \ge 0 \, , \Delta  (\xi) =   I}  \,  \Tr  [  \xi \rho]     &   \le \min_{\Delta  (\sigma)   \ge   \rho,\, \sigma \ge 0 }  \, \Tr  [\sigma]   \\
 \nonumber   &   =  \min_{\Gamma  \ge   \rho, \,   [  \Gamma,  H]  =  0   } \, \Tr [\Gamma]\\
 \nonumber   &   =  \min_{t\,  \gamma  \ge   \rho, \,   [  \gamma,  H]  =  0 \, , \Tr[\gamma]=1  } \, t \\
   & =   2^{ R_{\max}^{\rm coh}  (\rho) } \,,
  \end{align}
 where the second line comes from the fact that  $\Gamma  := \Delta  (\sigma)$ is an arbitrary diagonal matrix,  or equivalently, an arbitrary matrix satisfying $[ \Gamma,  H]  =  0$. 
 
The bound is achieved with the equality sign because the above semidefinite program satisfies the condition for strong duality \cite{Watrous_lecture_notes_2018}.

Summarizing, we have proven the equality 
\begin{align}
  \max_{\mc E \in  \sf  EPC}   d\,  \langle \phi_+  |   \mc E (\rho)  |\phi_+\rangle     =   2^{ R_{\max}^{\rm coh}  (\rho) }   \, .
\end{align}

\section{Passivization-covariant operations}
\label{append:pcos}

An approach to construct a resource theory is to pick a prototype of resource-destroying operation, and then to define the free operations as those that commute with it. 
 For example, a  resource theory of coherence can be built from the complete dephasing channel $\Delta:  \rho \mapsto  \Delta (\rho)  =  \sum_i   \langle i| \rho  |i\rangle   \,  |i\rangle \langle i|$,  by taking the free operations to be  {\em dephasing covariant operations}  \cite{Marvian2016,Chitambar2016}, that is, the operations $\mc C$ such that $\mc C  \circ \Delta  = \Delta \circ \mc C$.   In the case of activity, following this scheme would require fixing a prototype of activity-breaking channel.

 \subsection{The  passivization channel}
Ideally, the prototype of activity-breaking channel should leave all passive states invariant.  
Unfortunately, however, this condition cannot be satisfied: 
 \begin{proposition}\label{prop:nocanonicalpass}
 No activity-breaking map  $\mc P$ can satisfy the condition $\mc P  (\tau)= \tau$ for every passive state $\tau$.  
 \end{proposition}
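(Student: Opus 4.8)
The plan is to combine the structural form of activity-breaking channels (Proposition~\ref{prop:entbreak}) with the elementary fact that the diagonal entries of a positive operator are nonnegative. The argument is short and the contradiction is immediate, so there is no deep obstacle; the only points requiring care are the reduction to the extreme passive states and the correct invocation of the structure theorem. Throughout I would assume $d\ge 2$, since for $d=1$ every state is passive and the statement is vacuous.

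First I would invoke Proposition~\ref{prop:entbreak}: any activity-breaking channel has the entanglement-breaking form $\mc P(\rho)=\sum_{j=1}^d\Tr[P_j\,\rho]\,\tau_j$ for some POVM $(P_j)_{j=1}^d$, with $\tau_j=\frac1j\sum_{i=1}^j|i\rangle\langle i|$ the extreme points of $\psh$. Since $\mc P$ is affine and the passive states form a simplex spanned by $\tau_1,\dots,\tau_d$, the hypothesis $\mc P(\tau)=\tau$ for all passive $\tau$ is equivalent to $\mc P(\tau_k)=\tau_k$ for every $k$. Writing $\mc P(\tau_k)=\sum_j\Tr[P_j\,\tau_k]\,\tau_j$ and using the linear independence of $\{\tau_j\}_{j=1}^d$, this reduces to the condition
\begin{align}
\Tr[P_j\,\tau_k]=\frac1k\sum_{i=1}^k (P_j)_{ii}=\delta_{jk}\qquad\forall\,j,k,
\end{align}
where $(P_j)_{ii}=\langle i|P_j|i\rangle\ge 0$ because $P_j\ge 0$.

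Finally I would read off the contradiction from just two of these constraints. Taking $k=1$ forces $(P_1)_{11}=1$. Taking $k=2$ and $j=1$ then forces $(P_1)_{11}+(P_1)_{22}=0$, hence $(P_1)_{22}=-1<0$, contradicting the nonnegativity of the diagonal entries of the positive operator $P_1$. Therefore no activity-breaking map can fix every passive state. (An alternative route is to note that the fixed-point hypothesis makes $\mc P$ surjective onto $\psh$, so that Proposition~\ref{prop:surjectiveactbreak} gives the sharper form $\mc P(\rho)=\sum_j\langle\psi_j|\rho|\psi_j\rangle\,\tau_j$; the same two constraints then yield $1+|\langle\psi_1|2\rangle|^2=0$, which is the identical contradiction.)
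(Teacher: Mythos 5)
Your proof is correct, but it takes a genuinely different route from the paper's. You invoke the structure theorem for activity-breaking channels (Proposition \ref{prop:entbreak}) to write $\mc P(\rho)=\sum_j\Tr[P_j\,\rho]\,\tau_j$, reduce the fixed-point hypothesis to $\Tr[P_j\,\tau_k]=\delta_{jk}$ via linear independence of the extreme points, and then contradict positivity of $P_1$, whose $(2,2)$ entry would have to equal $-1$. The paper argues more elementarily: from the telescoping decomposition $\tau_i=\frac{i-1}{i}\,\tau_{i-1}+\frac1i\,|i\rangle\langle i|$ and linearity, the hypothesis $\mc P(\tau_i)=\tau_i$ for all $i$ forces $\mc P(|i\rangle\langle i|)=|i\rangle\langle i|$ for every $i$; since $|i\rangle\langle i|$ is active for $i\ge2$, this directly contradicts the defining property of an activity-breaking map (all outputs passive), with no structure theorem needed. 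The trade-off is one of generality versus explicitness: the paper's argument uses only linearity and output-passivity, so it applies verbatim to any linear activity-breaking \emph{map} as in the statement, whereas your route rests on Proposition \ref{prop:entbreak}, which is stated for quantum channels and for non-degenerate Hamiltonians; conversely, your argument exhibits a concrete quantitative obstruction in the POVM elements, and your parenthetical alternative via Proposition \ref{prop:surjectiveactbreak} is also sound, since the fixed-point hypothesis combined with activity-breaking does imply $\mc P({\sf St}(S))=\psh$. Your explicit restriction to $d\ge 2$ is a point in your favor, as the paper leaves this implicit.
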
 
 \begin{proof}
 Suppose that the map $\mc P$ satisfies the condition  $\mc P  (\tau)= \tau, \, \forall \tau \in \psh$. In particular, this condition must apply to the extreme points of $\psh$: one must have  $\mc P (\tau_i) =  \tau_i$ for every $i\in \{1,\dots,  d\}$.  On the other hand, Eq. (\ref{tauj}) shows that every extremal point $\tau_i$ with $i  \ge 2$ can be decomposed as $\tau_i  =  \frac{i-1}i   \, \tau_{i-1}  +     \frac 1i  \,   |i\rangle\langle i|$.   Then, the condition  $\mc P (\tau_i) =  \tau_i$ for every $i\in  \{1,\dots, d\}$ implies  $\mc P (|i\rangle \langle i| )  = |i\rangle \langle i|$ for every $i    \in  \{1,\dots, d\}$. Since $|i\rangle \langle i|$ is non-passive for every $i \ge 2$, the map $\mc P$ cannot be activity breaking. 
  \end{proof}

\medskip 

In fact, an even stronger no-go result holds: no activity-breaking channel can preserve the set of passive states. 
\begin{proposition}
 For every activity-breaking channel $\mc P$,  the inclusion  $\mc P  (\psh)\subset \psh$ is strict.  In other words, there is no activity breaking channel $\mc P$ such that $\mc P  (\psh)= \psh$.  
 \end{proposition}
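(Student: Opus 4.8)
The plan is to combine the explicit form of activity-breaking channels from Proposition \ref{prop:entbreak} with the simplex structure of $\psh$, whose extreme points are the states $\tau_j$ of Eq. (\ref{tauj}). I would argue by contradiction: assuming that some activity-breaking channel $\mc P$ satisfies $\mc P(\psh) = \psh$, I would show that $\mc P$ is forced to be a constant channel, which is absurd for $d \ge 2$.

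First I would recall that, by Proposition \ref{prop:entbreak}, every activity-breaking channel has the form $\mc P(\rho) = \sum_{j=1}^d \Tr[P_j\rho]\,\tau_j$ for some POVM $(P_j)_{j=1}^d$; in particular $\mc P$ is affine and maps $\psh$ into itself. Since $\psh$ is a $(d-1)$-dimensional simplex with the $d$ affinely independent extreme points $\tau_1,\dots,\tau_d$, and since an affine map sends a convex hull to the convex hull of the images, the assumption $\mc P(\psh)=\psh$ gives $\mathrm{conv}\{\mc P(\tau_1),\dots,\mc P(\tau_d)\}=\psh$. A set of $d$ points whose convex hull is a $(d-1)$-simplex must coincide with the vertex set of that simplex, because the extreme points of the hull of a finite set belong to that set; hence $\{\mc P(\tau_1),\dots,\mc P(\tau_d)\}=\{\tau_1,\dots,\tau_d\}$, i.e. $\mc P$ permutes the extreme points, say $\mc P(\tau_k)=\tau_{\pi(k)}$.

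The decisive step is to apply this to the maximally mixed extreme point $\tau_d=I/d$. Writing $\mc P(\tau_d)=\frac1d\sum_j \Tr[P_j]\,\tau_j$, I note that the coefficients $\Tr[P_j]/d$ are nonnegative and sum to $\Tr[I]/d=1$, so $\mc P(\tau_d)$ is a genuine convex combination of the $\tau_j$. Since $\mc P(\tau_d)$ must equal the extreme point $\tau_{\pi(d)}$, affine independence of the $\tau_j$ forces the distribution to be degenerate: $\Tr[P_{\pi(d)}]=d$ and $\Tr[P_j]=0$ for $j\neq\pi(d)$. Positivity then yields $P_j=0$ for $j\neq\pi(d)$ and hence $P_{\pi(d)}=\sum_j P_j=I$, so that $\mc P(\rho)=\tau_{\pi(d)}$ for every state $\rho$. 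This constant channel sends all extreme points to the single state $\tau_{\pi(d)}$, contradicting the fact that $\mc P$ permutes the $d\ge 2$ extreme points, and completing the argument.

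I expect the main obstacle to be the geometric lemma that an affine surjection of a simplex onto itself must permute its vertices; although elementary, it needs to be stated with some care (finitely many extreme points, and extreme points of a convex hull being a subset of the generating set). Everything else reduces to bookkeeping of the POVM elements $P_j$ together with the observation that $I/d$, the unique maximally mixed passive state, is the vertex whose image pins down the entire channel.
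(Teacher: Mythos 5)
Your proof is correct, but it takes a genuinely different route from the paper's. Both arguments start identically: since $\mc P$ is affine and surjective onto the simplex $\psh$, it must permute the extreme points $\tau_1,\dots,\tau_d$ (the paper phrases this via preimages, choosing extreme $\rho_i$ with $\mc P(\rho_i)=\tau_i$; you phrase it via forward images — same content). From there you diverge. The paper does \emph{not} invoke the measure-and-prepare form of Proposition~\ref{prop:entbreak}; instead it pins down the permutation, using the fact that any positive map preserves support inclusions, so the nested chain $\supp(\tau_1)\subseteq\cdots\subseteq\supp(\tau_d)$ forces the permutation to be the identity; then $\mc P$ would fix every passive state, contradicting Proposition~\ref{prop:nocanonicalpass}. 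You instead use the POVM form $\mc P(\rho)=\sum_j\Tr[P_j\rho]\,\tau_j$ from Proposition~\ref{prop:entbreak} and exploit that the maximally mixed state $\tau_d=I/d$ is itself a vertex: evaluating there, the coefficients $\Tr[P_j]/d$ form a probability distribution, uniqueness of convex representations in a simplex forces all but one of them to vanish, positivity then kills the corresponding $P_j$, and completeness makes $\mc P$ a constant channel — contradicting injectivity of the permutation for $d\ge 2$. Your route is shorter once Proposition~\ref{prop:entbreak} is granted, never needs to identify which permutation occurs, and avoids both the support-ordering argument and Proposition~\ref{prop:nocanonicalpass}; the paper's route needs only positivity of $\mc P$ (not the explicit measure-and-prepare structure), proves the stronger intermediate fact that such a channel would have to fix $\psh$ pointwise, and ties the result back to the earlier no-go proposition. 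One small point worth making explicit in your write-up: the whole argument (simplex structure of $\psh$, affine independence of the $\tau_j$, and Proposition~\ref{prop:entbreak} itself) rests on the standing assumption that the Hamiltonian is non-degenerate, as in the rest of the appendix.
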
 
 \begin{proof} The proof is by contradiction. Suppose that there existed a channel  $\mc P$ such that $\mc P  (\psh)= \psh$.    Then, there must exist some passive states $(\rho_i)_{i=1}^d$ such that $\mc P(\rho_i)   =  \tau_i$ for every $i$.     Since each $\tau_i$ is an extreme point of $\psh$ and $\mc P$ is surjective on $\psh$, each $\rho_i$ must be an extreme point of $\psh$.    Hence, there must exist a permutation $\pi$ such that $  \rho_i  =  \tau_{\pi(i)}$.   In fact, the permutation must be the identity. Indeed, the  inclusion relation  ${\sf Supp}  (\rho_i)  \subseteq  {\sf Supp}  (\rho_j)$ implies ${\sf Supp}   (\mc P  (\rho_i))   \subseteq {\sf Supp}   (\mc P  (\rho_j))$ for all  $i$ and $j$ such that $i\le j$.   Hence, the relation   
 \begin{align}{\sf Supp}  (\tau_1)  \subseteq  {\sf Supp}  (\tau_2)  \subseteq  \cdots \subseteq  {\sf Supp}  (\tau_d) 
 \end{align}  
 implies     the condition
 \begin{align}{\sf Supp}  (\tau_{\pi^{-1} (1)})  \subseteq  {\sf Supp}  (\tau_{\pi^{-1} (2)})  \subseteq  \cdots \subseteq  {\sf Supp}  (\tau_{\pi^{-1}  (d)}) \, ,
 \end{align}  
which is satisfied only if $\pi$ is the identity permutation.  Hence,  the channel $\mc P$ must leave all the extreme points of $\psh$ invariant, which in turn means that $\mc P$ leaves the whole set $\psh$ invariant.    Then, Proposition \ref{prop:nocanonicalpass} implies that $\mc P$ cannot be activity-breaking
   \end{proof}
  
\medskip

As a weaker requirement, one can consider activity-breaking channels that are surjective on the set of passive states, {\em i.e.} the set of quantum channels $\mc P$ such that $\mc P  ({\sf St}  (S))   =   \psh$.  The possible candidates are characterized  in Proposition \ref{prop:surjectiveactbreak}: 
all such channels are of the form  
\begin{align}\label{Psurjective}
\mc P (\rho)   =  \sum_{i=1}^d  \langle \psi_i|  \rho  |\psi_i\rangle\,  \tau_i \, ,
\end{align}
for some orthonormal basis $\{|\psi_i\rangle\}_{i=1}^d$.  

We now select a privileged channel of the form (\ref{Psurjective}), by imposing some additional requirements.  First, we require  the channel $\mc P$ to commute with the time evolution generated by the system's Hamiltonian, namely  
\begin{align}\label{timecov}
\mc U_t  \circ \mc P  =  \mc P \circ \mc U_t\qquad \forall t\in  {\mathbb R} \,,
\end{align}
 where $\mc U_t$ is the unitary channel associated to the unitary operator $U_ t  :=  \exp  [- i   t  H]$. In other words, we require the channel $\mc P$ to be covariant with respect to the time evolution generated by the system's Hamiltonian.

\begin{proposition}
A  quantum channel  $\mc P$ of the form (\ref{Psurjective}) satisfies the covariance condition  (\ref{timecov})  if and only if it is of the form 
\begin{align}
\label{def:act-break}
  \mc{P}(\rho) = \sum_{i=0}^{d-1}\Tr[\rho \Op{i}] \, \tau_{\pi (i)},
\end{align}
where $\pi$ is a permutation of the set $\{1,\dots,  d\}$.
\end{proposition}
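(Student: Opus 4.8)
The plan is to first show that the covariance condition (\ref{timecov}) collapses to a one-sided invariance, and only then to invoke the non-degeneracy of $H$ to pin down the basis $\{|\psi_i\rangle\}$ in Eq.~(\ref{Psurjective}).

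First I would observe that the output of \emph{any} channel of the form (\ref{Psurjective}) is automatically invariant under the time evolution: each $\tau_i$ is diagonal in the energy eigenbasis and hence commutes with $U_t=\exp[-itH]$, so that $U_t\,\tau_i\,U_t^\dagger=\tau_i$ for all $t$. Consequently $\mc U_t\circ\mc P=\mc P$ identically, and the covariance condition (\ref{timecov}) reduces to the single requirement $\mc P\circ\mc U_t=\mc P$ for every $t$.

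Next I would unfold this requirement. Writing out $\mc P(\mc U_t(\rho))=\mc P(\rho)$ gives
\begin{align}
\sum_{i=1}^d \langle\psi_i|\,U_t\rho U_t^\dagger\,|\psi_i\rangle\,\tau_i=\sum_{i=1}^d \langle\psi_i|\,\rho\,|\psi_i\rangle\,\tau_i \qquad \forall\rho,\ \forall t.
\end{align}
Since the states $\{\tau_i\}_{i=1}^d$ are the extreme points of the simplex $\psh$ and therefore linearly independent, I can match the coefficient of each $\tau_i$, obtaining $\Tr[(U_t^\dagger|\psi_i\rangle\langle\psi_i|U_t)\,\rho]=\Tr[|\psi_i\rangle\langle\psi_i|\,\rho]$ for every state $\rho$ and every $t$. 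As this holds for all $\rho$, the two Hermitian operators coincide, i.e.\ $U_t^\dagger|\psi_i\rangle\langle\psi_i|U_t=|\psi_i\rangle\langle\psi_i|$ for all $t$ and all $i$; differentiating at $t=0$ then shows that each rank-one projector $|\psi_i\rangle\langle\psi_i|$ commutes with $H$. At this point non-degeneracy does the work: since $H$ has distinct eigenvalues, the operators commuting with it are exactly those diagonal in the energy eigenbasis, and a rank-one projector of this kind must equal $|k\rangle\langle k|$ for a single energy eigenstate. Hence for each $i$ there is an index $k_i$ with $|\psi_i\rangle\langle\psi_i|=|k_i\rangle\langle k_i|$, and orthonormality of $\{|\psi_i\rangle\}$ forces $i\mapsto k_i$ to be a permutation. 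Substituting back and relabelling the summation index yields the claimed form (\ref{def:act-break}), with $\pi$ the inverse of this permutation. The converse is immediate: for $\mc P$ of the form (\ref{def:act-break}) the energy eigenprojectors obey $U_t^\dagger|i\rangle\langle i|U_t=|i\rangle\langle i|$, so $\mc P\circ\mc U_t=\mc P$, while $\mc U_t\circ\mc P=\mc P$ by the first step, establishing covariance.

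The one genuinely delicate point is this final use of non-degeneracy. If $H$ had a degenerate eigenvalue, a rank-one projector commuting with $H$ could be supported anywhere in a multidimensional eigenspace without being an energy eigenprojector, and the identification of $\{|\psi_i\rangle\}$ with the energy eigenbasis (up to phases and a permutation) would break down; everything upstream of that step, by contrast, is purely formal and uses only the diagonality of the $\tau_i$ and their linear independence.
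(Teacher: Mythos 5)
Your proof is correct and takes essentially the same route as the paper's: reduce the two-sided covariance to the one-sided condition $\mathcal{P}\circ\mathcal{U}_t=\mathcal{P}$ (since the outputs $\tau_i$ are invariant under time evolution), use linear independence of the $\tau_i$ to extract $U_t^\dagger\,|\psi_i\rangle\langle\psi_i|\,U_t=|\psi_i\rangle\langle\psi_i|$ for all $t$, and conclude that the $|\psi_i\rangle$ are energy eigenstates up to a permutation. The only difference is that you spell out the last step --- differentiating at $t=0$ to get $[\,|\psi_i\rangle\langle\psi_i|,H]=0$ and invoking non-degeneracy of the spectrum --- which the paper asserts without detail; this is a useful clarification, not a departure in method.
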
 
\begin{proof}   It is immediate to see that every map of the form (\ref{def:act-break}) is of the form (\ref{Psurjective}) and is covariant with respect to time evolution.  To prove the converse, let $\mc P$ be  a channel of the form (\ref{Psurjective}).    Since $\mc P$ is activity-breaking, its output is invariant under time evolution. Hence, the covariance condition (\ref{timecov}) becomes $  \mc P \circ \mc U_t  = \mc P$, $\forall t \in \mathbb R$, or explicitly
\begin{align}
\nonumber 
&\sum_i  \langle \psi_i|  \,  \mc U_t  (  \rho)\,    |\psi_i\rangle~   \tau_i  =  \sum_i  \langle \psi_i|  \,   \rho\,    |\psi_i\rangle~   \tau_i   \\
& \qquad \qquad \qquad \qquad \qquad  \forall \rho  \in  {\sf St}  (S)  \, ,\forall t\in  \mathbb R\, .
\end{align}
    Since the states $\tau_i$ are linearly independent, this condition implies  
    \begin{align}
 \nonumber &   \langle \psi_i|  \,  \mc U_t  (  \rho)\,    |\psi_i\rangle  = \langle \psi_i|  \,   \rho\,    |\psi_i\rangle \\
 &  \qquad \quad \qquad   \qquad \forall  i\in  \{1,\dots, d\}\, , \rho  \in  {\sf St}  (S) \,,\forall t\in\mathbb R
    \end{align}
    or equivalently  
    \begin{align}
    \mc U_t^\dag  (|\psi_i\rangle \langle \psi_i|)   =  |\psi_i\rangle \langle \psi_i|  \qquad \forall   i\in  \{1,\dots, d\} \,,\forall t\in\mathbb R
   \, .
   \end{align}
   Hence, the states $\{|\psi_i\rangle\}_{i=1}^d$ should be the eigenstates of the Hamiltonian $H$, that is $|\psi_i\rangle  =    |\sigma  (i) \rangle$ for some permutation $\sigma $. Defining $\pi   =  \sigma^{-1}$ one then has  Eq.   (\ref{def:act-break}). 
\end{proof}
 
 \medskip 
To conclude, we require  that the canonical passivization channel $\mc P$ should respect the ordering of the energy eigenstates, 
 namely 
\begin{align}
\Tr  [ H\, \mc P( |i\rangle \langle i|)   ]  \le \Tr [ H \,\mc P(|j\rangle \langle j| )]  \qquad \forall i\le j \, .
\end{align}  
  This condition requires the permutation  $\pi$ in Eq.  (\ref{def:act-break}) to be the identity permutation. Summarizing, we have shown the following theorem: 
  \begin{theorem}
  For a quantum system with non-degenerate  Hamiltonian $H   =  \sum_{i=1}^d  \,  E_i \,  |i\rangle \langle i|$, the quantum channel $\Pi$ defined by 
  \begin{align}
  \Pi (\rho)   =  \sum_{i=1}^d \, \langle i|  \rho  |i\rangle \,  \tau_i   
  \end{align} is the only channel that 
   \begin{enumerate}
  \item is activity-breaking,
  \item  is surjective on the set of passive states,
  \item  is covariant with respect to time evolution generated by the Hamiltonian $H$, and
  \item preserves the energy ordering of the energy eigenstates.
  \end{enumerate}
  \end{theorem}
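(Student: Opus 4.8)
The plan is to prove the theorem by chaining together the characterizations established in the preceding propositions and then eliminating the remaining freedom with a short energy computation. The \emph{existence} part---that $\Pi$ satisfies all four listed properties---needs only direct verification: its outputs are convex mixtures of the passive states $\tau_i$, so it is activity-breaking; it sends $\Op{i}$ to the extreme point $\tau_i$ of $\psh$, so it is surjective on $\psh$; its output is diagonal in the energy eigenbasis and therefore invariant under $\mc U_t$, so it is time-covariant; and $\Tr[H\tau_i]$ is increasing in $i$, so it respects the energy ordering. The substance of the theorem is \emph{uniqueness}, which I would establish by imposing the four conditions one at a time, each step shrinking the family of admissible channels.

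First I would note that conditions 1 and 2 together are exactly the requirement $\mc P({\sf St}(S)) = \psh$: condition 1 supplies the inclusion $\mc P({\sf St}(S)) \subseteq \psh$ and condition 2 supplies surjectivity. Proposition \ref{prop:surjectiveactbreak} then forces $\mc P$ into the form (\ref{Psurjective}), namely $\mc P(\rho) = \sum_i \langle \psi_i| \rho |\psi_i\rangle \, \tau_i$ for some orthonormal basis $\{|\psi_i\rangle\}_{i=1}^d$. Next I would invoke condition 3: by the preceding proposition (the one yielding Eq.~(\ref{def:act-break})), time-covariance forces the basis vectors to coincide with the energy eigenstates up to a relabeling, so that $\mc P(\rho) = \sum_i \Tr[\rho\,\Op{i}]\,\tau_{\pi(i)}$ for some permutation $\pi$ of $\{1,\dots,d\}$. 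At this point the only surviving degree of freedom is the permutation $\pi$.

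The final step pins down $\pi$ using condition 4. Since $\mc P(\Op{i}) = \tau_{\pi(i)}$, I would compute $\Tr[H\,\mc P(\Op{i})] = \Tr[H\,\tau_{\pi(i)}] = \overline E_{\pi(i)}$, where $\overline E_j := \tfrac1j \sum_{k=1}^j E_k$ is the average of the $j$ lowest energies. The key arithmetic fact is that $(\overline E_j)_{j=1}^d$ is strictly increasing: from the recursion $\overline E_{j+1} - \overline E_j = (E_{j+1} - \overline E_j)/(j+1)$ together with $E_{j+1} > E_k$ for all $k \le j$ (non-degeneracy, $E_1 < \cdots < E_d$), one obtains $\overline E_{j+1} > \overline E_j$. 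Condition 4 demands $\overline E_{\pi(i)} \le \overline E_{\pi(j)}$ whenever $i \le j$; by strict monotonicity of $\overline E$ this is equivalent to $\pi(i) \le \pi(j)$ whenever $i \le j$, i.e. $\pi$ is order-preserving. An order-preserving permutation of $\{1,\dots,d\}$ is the identity, whence $\mc P = \Pi$.

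The only step that is more than bookkeeping is the strict monotonicity of the truncated averages $\overline E_j$, and this is precisely where the non-degeneracy assumption enters essentially: a degenerate spectrum could make consecutive averages coincide and leave room for a nontrivial $\pi$. Everything else amounts to applying the earlier propositions in the correct order.
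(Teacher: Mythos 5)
Your proof is correct and follows essentially the same route as the paper: conditions 1 and 2 combine to give $\mc P({\sf St}(S)) = \psh$ and hence the measure-and-prepare form of Proposition \ref{prop:surjectiveactbreak}, condition 3 then forces the measurement basis to be the energy eigenbasis up to a permutation $\pi$ (the paper's Eq.~(\ref{def:act-break})), and condition 4 eliminates $\pi$. In fact your final step is more complete than the paper's, which merely asserts that the energy-ordering condition forces $\pi$ to be the identity, whereas you supply the needed argument via the strict monotonicity of the truncated averages $\overline E_j = \tfrac1j\sum_{k\le j} E_k$.
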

  In the following, we will call $\Pi$ the {\em canonical passivization}.

\subsection{Passivization-covariant operations}

We now define the set of {\em passivization-covariant operations (PCOs)} as the set of quantum operations $\mc Q$ that commute with the canonical passivization $\Pi$, namely
\begin{align}
    \mc Q \circ \Pi = \Pi \circ \mc Q  \, .
\end{align}
When the PCO $\mc Q$ is  trace-preserving, we call it a {\em passivization-covariant channel}.  

\begin{proposition}
Every passivization covariant operation  is  passivity-preserving. 
\end{proposition}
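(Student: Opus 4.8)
The plan is to exploit two structural facts about the canonical passivization $\Pi$ and then let the covariance relation $\mc Q \circ \Pi = \Pi \circ \mc Q$ do the work. The first fact is that $\Pi$ sends every positive operator into the cone generated by the extreme passive states $\{\tau_i\}$; the second is that every passive state lies in the image of $\Pi$, so that it has a convenient preimage. Feeding such a preimage through the commutation relation will force $\mc Q(\tau)$ to be a (possibly sub-normalized) passive state for every passive input $\tau$.

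First I would record the elementary identity $\Pi(|i\rangle\langle i|) = \tau_i$, which follows immediately from $\langle k|(|i\rangle\langle i|)|k\rangle = \delta_{ik}$ together with the definition $\Pi(\rho) = \sum_i \langle i|\rho|i\rangle\,\tau_i$. Since the passive states form a simplex with the $\{\tau_i\}_{i=1}^d$ as extreme points, an arbitrary passive state can be written as $\tau = \sum_i c_i \tau_i$ with a probability vector $(c_i)$. Setting $\omega := \sum_i c_i\,|i\rangle\langle i|$ (a diagonal, hence valid, state whose entries are the simplex coefficients of $\tau$, not the diagonal of $\tau$), linearity gives $\Pi(\omega) = \sum_i c_i \tau_i = \tau$. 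Thus every passive state admits a diagonal preimage under $\Pi$.

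I would then invoke passivization-covariance to rewrite the action of $\mc Q$ on $\tau$ as
\[
\mc Q(\tau) = \mc Q(\Pi(\omega)) = \Pi(\mc Q(\omega)).
\]
Because $\mc Q$ is completely positive, $\mc Q(\omega)$ is a positive operator, and applying $\Pi$ to any positive $X$ yields $\Pi(X) = \sum_i \langle i|X|i\rangle\,\tau_i$ with all coefficients $\langle i|X|i\rangle \ge 0$. Hence $\mc Q(\tau)$ is a non-negative combination of the extreme passive states, i.e. a non-negative multiple of a passive state, which is exactly passivity-preservation. When $\mc Q$ is trace-preserving the coefficients sum to one and $\mc Q(\tau)$ is a genuine element of $\psh$.

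The step requiring the most care is the trace-non-increasing case: a general PCO need not be a channel, so $\mc Q(\omega)$ may have trace strictly below one and $\mc Q(\tau)$ need not be normalized. I would therefore phrase the conclusion in terms of the passive cone (non-negative multiples of states in $\psh$) and note that this is the appropriate notion of passivity-preservation for operations rather than channels. Beyond this bookkeeping I do not expect a genuine obstacle, since the argument is a direct consequence of the image/preimage structure of $\Pi$ combined with its defining commutation with $\mc Q$.
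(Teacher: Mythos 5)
Your proof is correct and follows essentially the same route as the paper's: write the passive state $\tau$ as $\Pi(\omega)$ for some preimage state $\omega$, commute $\mc Q$ past $\Pi$ via covariance, and conclude from the activity-breaking property of $\Pi$ that $\mc Q(\tau)$ is proportional to a passive state. The only difference is cosmetic — the paper simply invokes the surjectivity of $\Pi$ on $\psh$ (established in its characterization of the canonical passivization), whereas you reconstruct that surjectivity explicitly via the diagonal preimage $\omega = \sum_i c_i\,|i\rangle\langle i|$, and you also spell out the sub-normalization bookkeeping for trace-non-increasing operations, which the paper handles with the phrase ``proportional to a passive state.''
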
 

\begin{proof}  Let $\mc Q$ be an arbitrary passivization-covariant operation.  We now show that $\mc Q(\tau)$ is proportional to a passive state whenever $\tau$ is a passive state.   Since the canonical passivization $\Pi$ is surjective on the set of passive states, every  passive state $\tau\in \psh$ can be written as  $\tau  =  \Pi  (\rho)$ for some suitable state $\rho  \in {\sf St}  (S)$.  Then, one has 
\begin{align}
\mc Q (\tau)  =  \mc Q  \circ \Pi  (\rho)  =  \Pi  \circ   \mc Q (\rho)  \, .
\end{align}
Since $\Pi$ is activity breaking, we conclude that $\mc Q (\tau) $ is proportional to a passive state.
\end{proof}

\medskip

\begin{proposition}
Every energy-preserving channel is  passivization-covariant.
\end{proposition}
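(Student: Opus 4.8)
The plan is to combine the characterization of energy-preserving channels as Schur multipliers (Theorem~\ref{theo:schur}) with the explicit form of the canonical passivization $\Pi$, and to show that \emph{both} composites $\mc E \circ \Pi$ and $\Pi \circ \mc E$ reduce to $\Pi$ itself, so that they trivially coincide. Let $\mc E$ be an energy-preserving channel and write $\mc E(\rho) = \xi \odot \rho$ for a correlation matrix $\xi$, recalling that $\xi_{ii} = 1$ for every $i$.

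First I would isolate two elementary facts. On one hand, the output of $\Pi$ is always diagonal in the energy eigenbasis and depends only on the diagonal of its argument, since $\Pi(\rho) = \sum_i \langle i | \rho | i \rangle\, \tau_i$ with each $\tau_i$ diagonal. On the other hand, an energy-preserving channel preserves the diagonal matrix elements, because $\langle i | \mc E(\rho) | i \rangle = \xi_{ii}\, \rho_{ii} = \rho_{ii}$. The second fact immediately gives $\Pi \circ \mc E = \Pi$: inserting $\mc E(\rho)$ into the definition of $\Pi$ and using $\langle i | \mc E(\rho) | i \rangle = \langle i | \rho | i \rangle$ yields $\Pi(\mc E(\rho)) = \sum_i \langle i | \rho | i \rangle\, \tau_i = \Pi(\rho)$.

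For the reverse composite I would use the first fact: since $D := \Pi(\rho)$ is diagonal, I compute $\mc E(D) = \xi \odot D$ and note that a Schur product with a diagonal matrix annihilates the off-diagonal entries and rescales the diagonal entry $D_{jj}$ by $\xi_{jj} = 1$; hence $\xi \odot D = D$ and $\mc E(\Pi(\rho)) = \Pi(\rho)$, i.e. $\mc E \circ \Pi = \Pi$. Combining the two identities gives $\mc E \circ \Pi = \Pi = \Pi \circ \mc E$, which is exactly passivization-covariance. I do not anticipate a genuine obstacle; the only point requiring care is to invoke the defining property $\xi_{ii} = 1$ of correlation matrices at both steps — once so that $\mc E$ leaves diagonals untouched, and once so that the Schur product acts as the identity on the (diagonal) output of $\Pi$.
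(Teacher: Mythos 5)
Your proof is correct and follows essentially the same route as the paper's: both verify that $\Pi \circ \mc E$ and $\mc E \circ \Pi$ each reduce to $\Pi$ itself, using the facts that an energy-preserving channel preserves diagonal entries and acts as the identity on the diagonal states $\tau_i$. The only cosmetic difference is that you justify both facts via the Schur-multiplier representation of Theorem~\ref{theo:schur}, whereas the paper invokes the defining condition $\langle i|\mc E(\rho)|i\rangle = \langle i|\rho|i\rangle$ directly for one composite and the invariance $\mc E(\tau_i) = \tau_i$ for the other.
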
 

\begin{proof}  For every energy-preserving channel $\mc E$ and for every state $\rho  \in {\sf St}  (S)$, one has
\begin{align}
\nonumber \Pi   \circ \mc E   (\rho)    &   =  \sum_i   \langle i|  \mc E (\rho)  |i\rangle  \,  \tau_i  \\
&  =     \sum_i   \langle i| \rho  |i\rangle  \,  \tau_i    \, ,
\end{align}
the second equality following from the energy-preserving condition.  On the other hand,   one has
\begin{align}\mc E \circ  \Pi  (\rho)   =  \sum_i   \langle i|  \rho  |i\rangle  \,  \mc E (\tau_i)  =  \sum_i   \langle i|  \rho  |i\rangle  \, \tau_i  \, .
\end{align}
Hence, we obtained  the relation $\Pi   \circ \mc E   (\rho)  =   \mc E \circ  \Pi  (\rho)  \, , \forall \rho \in {\sf St}  (S)$, that is, $\Pi   \circ \mc E  =  \mc E  \circ\Pi$.    
\end{proof}

\medskip  

\begin{proposition}\label{prop:channel}
For every passivization-covariant operation $\mc Q$ there exists a passivization-covariant channel $\mc C$ such  that the map $\mc C -  \mc Q$ is completely positive.  
\end{proposition}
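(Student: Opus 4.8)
The plan is to follow the same completion strategy used for energy-preserving operations in the implication $1\Rightarrow 2$ above: given the operation $\mc Q$, I would add a suitable completely positive ``remainder'' that turns it into a channel while preserving commutation with $\Pi$. The key preliminary observation is that passivization covariance rigidly constrains the way in which $\mc Q$ fails to be trace-preserving.

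First I would take the adjoint of the covariance relation $\mc Q\circ\Pi=\Pi\circ\mc Q$, obtaining $\Pi^\dagger\circ\mc Q^\dagger=\mc Q^\dagger\circ\Pi^\dagger$, and evaluate it on the identity. Since $\Pi$ is a channel one has $\Pi^\dagger(I)=I$, so this yields $\Pi^\dagger\big(\mc Q^\dagger(I)\big)=\mc Q^\dagger(I)$; that is, $\mc Q^\dagger(I)$ is a fixed point of $\Pi^\dagger$. Adjointing the defining formula $\Pi(\rho)=\sum_i\langle i|\rho|i\rangle\,\tau_i$ gives $\Pi^\dagger(A)=\sum_i \Tr[A\,\tau_i]\,|i\rangle\langle i|$, whose image consists of diagonal matrices only, so $\mc Q^\dagger(I)$ must be diagonal, say $\mc Q^\dagger(I)=\sum_i f_i\,|i\rangle\langle i|$ with $0\le f_i\le 1$. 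Writing out the fixed-point condition with $\tau_i=\sum_{j\le i}|j\rangle\langle j|/i$ gives the recursion $(i-1)\,f_i=\sum_{j<i}f_j$, which forces $f_i=f_1$ for every $i$. Hence $\mc Q^\dagger(I)=c\,I$ for some $c\in[0,1]$, so that $\Tr[\mc Q(\rho)]=c\,\Tr[\rho]$ for every $\rho$: the operation scales the trace by a single constant.

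With this in hand the completion is immediate: I would set $\mc C:=\mc Q+(1-c)\,\Pi$ and verify the four required properties. Complete positivity of $\mc C$ is clear, being a nonnegative combination of the completely positive maps $\mc Q$ and $\Pi$. Trace preservation follows from $\Tr[\mc C(\rho)]=c\,\Tr[\rho]+(1-c)\,\Tr[\rho]=\Tr[\rho]$, using that $\Pi$ is a channel. Covariance holds because both summands commute with $\Pi$---$\mc Q$ by hypothesis and $\Pi$ trivially with itself---so $\mc C\circ\Pi=\Pi\circ\mc C$, making $\mc C$ a passivization-covariant channel. Finally $\mc C-\mc Q=(1-c)\,\Pi$ is completely positive because $1-c\ge 0$ and $\Pi$ is completely positive, which is exactly the claim.

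The only nontrivial point is the rigidity step, namely establishing that $\mc Q^\dagger(I)$ must be a scalar multiple of the identity. Once this is secured, the choice of remainder $(1-c)\,\Pi$ makes trace preservation, covariance, and positivity of $\mc C-\mc Q$ all fall out simultaneously; in particular, taking the remainder proportional to $\Pi$ itself is what guarantees covariance for free, since $\Pi$ commutes with $\Pi$. I expect this rigidity to be the main obstacle only in the sense of spotting it, as it reduces to the elementary fixed-point analysis of $\Pi^\dagger$ on the diagonal matrices carried out above.
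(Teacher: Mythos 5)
Your proof is correct, but it takes a genuinely different route from the paper's. The paper only extracts the weaker fact that $Q:=\mc Q^\dag(I)$ is a fixed point of $\Pi^\dag$, and then completes $\mc Q$ with the ground-state remainder $\mc Q'(\rho)=\left(\Tr[\rho]-\Tr[Q\rho]\right)|1\rangle\langle 1|$, verifying covariance of $\mc Q'$ by hand using $\Pi(|1\rangle\langle 1|)=|1\rangle\langle 1|$ and $\Pi^\dag(Q)=Q$. You instead push the fixed-point condition to its conclusion: since the image of $\Pi^\dag$ consists of diagonal matrices, $Q$ is diagonal, and the recursion $(i-1)f_i=\sum_{j<i}f_j$ forces all diagonal entries equal, so $\mc Q^\dag(I)=c\,I$ with $c\in[0,1]$ (the bound $c\le 1$ coming from trace non-increase). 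This rigidity lemma is a strictly stronger structural statement than the paper uses --- every passivization-covariant operation shrinks the trace by a single universal constant --- and it is of independent interest. Once you have it, your completion $\mc C=\mc Q+(1-c)\,\Pi$ is arguably cleaner: covariance of the remainder is automatic because $\Pi$ commutes with itself, whereas the paper must check it explicitly. The trade-off is that the paper's argument is more economical (no classification of the fixed points of $\Pi^\dag$ is needed), while yours buys a reusable structural fact and a remainder that is a natural free operation in its own right. All steps in your argument check out: the adjoint of the covariance relation, the formula $\Pi^\dag(A)=\sum_i\Tr[A\,\tau_i]\,|i\rangle\langle i|$, the induction $f_i=f_1$, and the four verifications for $\mc C$.
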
  

\begin{proof}
Define  $Q  :  =  \mc Q^\dag  (I)$.  Since the passivization map $\Pi$ is trace-preserving, we have $\Pi^\dag (I)  =  I$, and therefore  $Q   =   \mc Q^\dag  (  \Pi^\dag (I) ) =  \Pi^\dag  \,  (\mc Q^\dag (I))  =   \Pi^\dag  (Q)$.       Now, define the quantum operation $\mc Q' $ as $\mc Q'   (\rho)  = (\Tr [\rho]-   \Tr [Q \rho]   )\,  |1\rangle \langle 1|$. Note that $\mc Q'$ is passivization-covariant, since one has  
\begin{align}
\nonumber \Pi  \circ \mc Q'  (\rho)     &= \left(  \Tr  [\rho] -  \tr  [  Q \rho] \right) \,  \Pi  (|1\rangle \langle 1|)   \\
\nonumber &= \left(\Tr [\rho]-   \tr  [  Q \rho]\right)  \, |1\rangle \langle 1| \\
\nonumber &=   \left(\Tr [\rho]-   \tr  [  \Pi^\dag (Q) \rho] \right) \, |1\rangle \langle 1| 
\\
\nonumber &  = \left(  \Tr  [  \Pi (\rho)] -  \tr  [ Q  \Pi( \rho)] \right) \, |1\rangle \langle 1| \\
&  =  \mc Q'  \circ \Pi  (\rho) 
\end{align}
for every quantum state $\rho$.  Moreover, $\mc C:  = \mc Q  +  \mc Q'$  is trace-preserving, since one has $\Tr  [  \mc C  (\rho)]   =   \Tr  [  \mc Q   (\rho) ]  +  \Tr  [ \mc Q'  (\rho)]  =   \Tr  [Q  \rho ]  + \left( \Tr  [\rho]  -\Tr[Q  \rho]  \right)   =  \Tr  [\rho]$ for every $\rho$.  Hence, $\mc C$   is a passivization-covariant channel.  
\end{proof}

\medskip

A notable property of passivization-covariant channels  is provided by the following lemma: 
\begin{lemma}\label{lem:ppccorrelation}
For every passivization-covariant channel $\mc C$, the matrix $   d\, \mc C^\dag  (|\phi_+\rangle\langle  \phi_+|)$  is a correlation matrix.  
\end{lemma}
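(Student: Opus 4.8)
The plan is to invoke the characterization of correlation matrices in Proposition~\ref{characterizationcorrelation}, which states that a positive matrix $\xi$ is a correlation matrix precisely when $\mc P^\dag(\xi) = I$ for some channel $\mc P$ satisfying $\mc P({\sf St}(S)) = \psh$. I would set $\xi := d\,\mc C^\dag(|\phi_+\rangle\langle\phi_+|)$ and choose $\mc P$ to be the canonical passivization $\Pi$, which indeed satisfies $\Pi({\sf St}(S)) = \psh$. Positivity of $\xi$ is immediate: since $\mc C$ is a channel, its adjoint $\mc C^\dag$ is completely positive, so $\mc C^\dag(|\phi_+\rangle\langle\phi_+|) \ge 0$ and hence $\xi \ge 0$. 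It therefore remains only to verify that $\Pi^\dag(\xi) = I$.

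For the computation, I would first exploit the passivization-covariance of $\mc C$. Taking adjoints of the identity $\mc C \circ \Pi = \Pi \circ \mc C$ and using $(\mc A \circ \mc B)^\dag = \mc B^\dag \circ \mc A^\dag$ yields the commutation relation $\Pi^\dag \circ \mc C^\dag = \mc C^\dag \circ \Pi^\dag$. Consequently,
\begin{align}
\Pi^\dag(\xi) = d\,\big(\Pi^\dag \circ \mc C^\dag\big)(|\phi_+\rangle\langle\phi_+|) = d\,\big(\mc C^\dag \circ \Pi^\dag\big)(|\phi_+\rangle\langle\phi_+|)\,.
\end{align}
Next I would evaluate $\Pi^\dag(|\phi_+\rangle\langle\phi_+|)$ directly from the definition $\Pi(\rho) = \sum_i \langle i|\rho|i\rangle\,\tau_i$, whose adjoint is $\Pi^\dag(A) = \sum_i \Tr[A\,\tau_i]\,|i\rangle\langle i|$. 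Using the identity $\langle\phi_+|\tau_i|\phi_+\rangle = 1/d$, which holds for every passive state, I obtain $\Pi^\dag(|\phi_+\rangle\langle\phi_+|) = I/d$.

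Substituting this back gives $\Pi^\dag(\xi) = d\,\mc C^\dag(I/d) = \mc C^\dag(I)$, and since $\mc C$ is trace-preserving its adjoint is unital, so $\mc C^\dag(I) = I$. Hence $\Pi^\dag(\xi) = I$, and Proposition~\ref{characterizationcorrelation} concludes that $\xi$ is a correlation matrix. The argument is essentially mechanical once the right characterization is selected; the only genuinely substantive observations are that $\Pi$ itself can serve as the channel $\mc P$ in Proposition~\ref{characterizationcorrelation}, and that covariance lets $\Pi^\dag$ slide through $\mc C^\dag$ onto $|\phi_+\rangle\langle\phi_+|$, collapsing it to $I/d$. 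I do not anticipate any real obstacle beyond careful bookkeeping of the adjoints.
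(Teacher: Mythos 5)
Your proof is correct and follows essentially the same route as the paper's: both invoke Proposition~\ref{characterizationcorrelation} with $\Pi$ as the reference channel, pass $\Pi^\dag$ through $\mc C^\dag$ via the (adjoint of the) covariance relation, evaluate $\Pi^\dag(|\phi_+\rangle\langle\phi_+|) = I/d$, and finish with unitality of $\mc C^\dag$. The only differences are cosmetic — you make explicit the adjoint commutation step and the positivity of $\xi$, which the paper leaves implicit.
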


\begin{proof}  The proof uses Proposition \ref{characterizationcorrelation}, which guarantees that the matrix  $  \xi: =   d\, \mc C^\dag  (|\phi_+\rangle\langle  \phi_+|)$  is a correlation matrix  if   $\Pi^\dag  (\xi)=  I$.   This condition is indeed satisfied:  
\begin{align}
\nonumber \Pi^\dag  (\xi)   &=  d\,   \Pi^\dag    \circ \mc C^\dag  (|\phi_+\rangle\langle  \phi_+|)  \\
\nonumber &=  d\,       \mc C^\dag \circ \Pi^\dag  (|\phi_+\rangle\langle  \phi_+|)\\
\nonumber &  =  \mc C^\dag  (  I) \\
 &=  I\, ,
 \end{align} 
where the fourth equality follows from the fact that $\mc C$ is trace-preserving (and therefore $\mc C^\dag (I)  = I$) and the third equality follows from the condition $\Pi^\dag (|\phi_+\rangle\langle  \phi_+|)  = \sum_i  \,  \langle \phi_+|  \tau_i|\phi_+\rangle   \,  |i\rangle\langle i|  =  \sum_i  |i\rangle\langle i|/d = I/d $.  
\end{proof}

\medskip

Using the above property, we can derive an alternative operational interpretation of the max relative entropy of activity and the max relative entropy of coherence: \\

\begin{proposition}
The following equalities hold: 
\begin{align}\label{ccc}
2^{R_{\max}^{\rm coh}  (\rho)}   =  \max_{\mc C \in  {\sf PCC}}   \min_{\tau  \in  \psh} \frac{   \langle \phi_+|   \mc C  (\rho) | \phi_+\rangle}{ \langle \phi_+|   \mc C  (\tau) | \phi_+\rangle}\, ,
\end{align}
and
\begin{align}\label{ddd}
2^{R_{\max}^{\rm act}  (\rho)}   =  \max_{\mc Q \in  {\sf PCO}}   \min_{\tau  \in  \psh} \frac{   \langle \phi_+|   \mc Q  (\rho) |\phi_+\rangle}{ \langle \phi_+|   \mc Q  (\tau) | \phi_+\rangle}
\end{align} 
where the maxima are over the set of  all passivization-covariant channels ($\sf PCC$) and over the set of all all passivization-covariant quantum operations ($\sf PCO$), respectively.
\end{proposition}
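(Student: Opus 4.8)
The plan is to reduce both optimizations to a single linear-algebraic quantity. For any operation $\mc Q$ and any state $\sigma$ I would write $\langle\phi_+|\mc Q(\sigma)|\phi_+\rangle = \Tr[\sigma\,\mc Q^\dagger(\Op{\phi_+})] = \Tr[V\sigma]/d$, where $V:=d\,\mc Q^\dagger(\Op{\phi_+})$. The overlap with $|\phi_+\rangle$ thus factors through $V$, and since the ratio is scale-invariant the inner minimum becomes $\min_{\tau\in\psh}\langle\phi_+|\mc Q(\rho)|\phi_+\rangle/\langle\phi_+|\mc Q(\tau)|\phi_+\rangle = \Tr[V\rho]/\max_{\tau\in\psh}\Tr[V\tau]$. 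Both claimed equalities then turn into statements about the range of $V$ as $\mc Q$ runs over $\sf PCC$ or $\sf PCO$.

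For the upper bounds I would invoke the structural results already established. If $\mc C$ is a passivization-covariant channel, Lemma \ref{lem:ppccorrelation} gives that $V=d\,\mc C^\dagger(\Op{\phi_+})$ is a correlation matrix, so $\Tr[V\tau]=\sum_i V_{ii}\,\tau_{ii}=1$ for every passive $\tau$; the inner minimum collapses to $\Tr[V\rho]\le\max_{\xi}\Tr[\xi\rho]=2^{R_{\max}^{\rm coh}(\rho)}$ over correlation matrices (Appendix \ref{append:operational-coherence}). If instead $\mc Q$ is a passivization-covariant operation, Proposition \ref{prop:channel} supplies a passivization-covariant channel $\mc C$ with $\mc C-\mc Q$ completely positive; applying the completely positive adjoint to $\Op{\phi_+}\ge0$ yields $0\le V\le d\,\mc C^\dagger(\Op{\phi_+})$, a correlation matrix, so $V$ is a sub-correlation matrix and hence an activity witness (Lemma \ref{lem:witnessescorrelation}). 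Rescaling $V$ by $\max_\tau\Tr[V\tau]$ gives a normalized witness and the bound $\Tr[V\rho]/\max_\tau\Tr[V\tau]\le\max_W\Tr[W\rho]=2^{R_{\max}^{\rm act}(\rho)}$ via Eq. (\ref{robustnessSDPconceptual}). This delivers ``$\le$'' in both (\ref{ccc}) and (\ref{ddd}).

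The lower bound for (\ref{ccc}) is then immediate: energy-preserving channels are passivization-covariant, and through the Schur-product characterization (Theorem \ref{theo:schur}) they realize every correlation matrix, so $\max_{\sf EPC}\subseteq\max_{\sf PCC}$ already attains $2^{R_{\max}^{\rm coh}(\rho)}$. Thus (\ref{ccc}) closes cleanly, using only the correlation-matrix structure and the coherence identity of Appendix \ref{append:operational-coherence}.

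The hard part will be the lower bound for (\ref{ddd}): I must exhibit a passivization-covariant operation whose effective witness $V$ reaches $2^{R_{\max}^{\rm act}(\rho)}$. The naive route—postselecting an energy-preserving operation, i.e. the Schur map $\mc Q(\rho)=\xi\odot\rho$ with $\xi$ a sub-correlation matrix—fails, since such a map commutes with $\Pi$ only when the diagonal of $\xi$ is constant. More intrinsically, commutation with $\Pi$ forces the diagonal of $V$ to be constant: $\Pi^\dagger$ has the simple eigenvalue $1$ with eigenvector $I$ and kernel equal to the off-diagonal matrices, while the diagonal part of $\Op{\phi_+}$ is $I/d$, so $V=d\,\mc Q^\dagger(\Op{\phi_+})=cI+(\text{off-diagonal})$. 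By the scale-invariance above, this constant-diagonal constraint threatens to collapse the $\sf PCO$ optimization back to the correlation-matrix value $2^{R_{\max}^{\rm coh}(\rho)}$. I would therefore concentrate all effort here, either by constructing a genuinely trace-decreasing passivization-covariant operation from the optimal activity witness that circumvents the constant-diagonal obstruction, or—if no such operation exists—by revisiting whether the right-hand side of (\ref{ddd}) should in fact read $2^{R_{\max}^{\rm coh}(\rho)}$; this is the step where I expect the argument to be most delicate.
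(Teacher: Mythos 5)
Your proof of Eq.~(\ref{ccc}) is correct and is essentially the paper's: the reduction to $V := d\,\mc Q^\dag(\Op{\phi_+})$, Lemma \ref{lem:ppccorrelation} for the upper bound, and attainability via energy-preserving channels together with Theorem \ref{theo:schur}. You are in fact more explicit than the paper, whose chain for (\ref{ccc}) cites Lemma \ref{lem:ppccorrelation} alone even though that lemma gives only one inclusion; the converse inclusion is exactly your EPC argument. Likewise, your upper bound for Eq.~(\ref{ddd}), via Proposition \ref{prop:channel} and Lemma \ref{lem:witnessescorrelation}, coincides with the paper's.

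The obstruction you flag in your last paragraph is genuine, and it is the paper's proof, not your attempt, that contains the fatal gap. Your constant-diagonal argument closes as follows: covariance applied to $I$ gives $\mc Q^\dag(I) = \Pi^\dag(\mc Q^\dag(I))$, and the only fixed points of $\Pi^\dag$ are multiples of the identity (on diagonal matrices $\Pi^\dag$ is the running-average map, lower triangular with diagonal entries $1/i$), so $\mc Q^\dag(I) = c\,I$ with $c\in[0,1]$; covariance applied to $\Op{\phi_+}$, together with $\Pi^\dag(\Op{\phi_+}) = I/d$, gives $\Pi^\dag(V) = c\,I$, and since $\Pi^\dag$ annihilates off-diagonal matrices and is injective on diagonal ones, ${\rm diag}(V) = c\,I$. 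Hence (excluding the degenerate case $c=0$, where $V=0$ and the figure of merit is $0/0$) every passivization-covariant operation yields the ratio $\Tr[(V/c)\,\rho]$ with $V/c$ a correlation matrix, so the PCO maximum collapses to $2^{R_{\max}^{\rm coh}(\rho)}$, a value already attained by PCC. The paper's proof of Eq.~(\ref{ddd}) breaks at its second equality: Proposition \ref{prop:channel} and Lemma \ref{lem:ppccorrelation} show only that every PCO produces a sub-correlation matrix (the direction $\le$), while the converse needed for equality --- that every sub-correlation matrix, e.g.\ $\xi = \Op{d}$, arises from some PCO --- is false by your argument. Concretely, for $\rho = \Op{d}$ one has $2^{R_{\max}^{\rm act}(\rho)} = d$, while every PCO achieves ratio $V_{dd}/c = 1$; so Eq.~(\ref{ddd}) fails for every $d\ge 2$, and your suspected correction, replacing the right-hand side of (\ref{ddd}) by $2^{R_{\max}^{\rm coh}(\rho)}$, is the true value of the PCO maximization.
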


\begin{proof}
Let us start from the proof of Eq. (\ref{ccc}).  We have 
\begin{align}
\nonumber  \max_{\mc C \in  {\sf PCC}}   \min_{\tau  \in  \psh} \frac{   \langle \phi_+|   \mc C  (\rho) | \phi_+\rangle}{ \langle \phi_+|   \mc C  (\tau) | \phi_+\rangle}  &  =  \max_{\mc C \in  {\sf PCC}}   \min_{\tau  \in  \psh} \frac{ \Tr [  \mc C^\dag  ( | \phi_+\rangle \langle \phi_+|)   \rho ] }  { \Tr [  \mc C^\dag  ( | \phi_+\rangle \langle \phi_+|)   \tau ] }\\
\nonumber  &  =  \max_{\xi\ge 0 \,  ,  \xi_{ii}  = 1 \, \forall i}  \min_{\tau  \in  \psh} \frac{ \Tr [  \xi    \rho ] }  { \Tr [  \xi    \tau ] }  \\
\nonumber  &  =  \max_{\xi\ge 0 \,  ,  \xi_{ii}  = 1 \, \forall i}  \Tr [  \xi    \rho ]   \, ,
\end{align}
where the second equality follows from Lemma \ref{lem:ppccorrelation} and the third equality follows from the relation $\Tr  [\xi  \tau]  =1$ valid for every correlation matrix $\xi$ and every passive state $\tau$.   Using Eqs. (\ref{aaa}) and (\ref{bbb}) we then conclude   $ \max_{\xi\ge 0 \,  ,  \xi_{ii}  = 1 \, \forall i}  \Tr [  \xi    \rho ]  =  2^{R_{\max}^{\rm coh}  (\rho)}$.  

To prove Eq.  (\ref{ddd}), we observe that
\begin{align}
\nonumber  \max_{\mc Q \in  {\sf PCO}}   \min_{\tau  \in  \psh} \frac{   \langle \phi_+|   \mc Q  (\rho) | \phi_+\rangle}{ \langle \phi_+|   \mc Q  (\tau) | \phi_+\rangle}  &  =  \max_{\mc Q \in  {\sf PCO}}   \min_{\tau  \in  \psh} \frac{ \Tr [  \mc Q^\dag  ( | \phi_+\rangle \langle \phi_+|)   \rho ] }  { \Tr [  \mc Q^\dag  ( | \phi_+\rangle \langle \phi_+|)   \tau ] }\\
 &  =  \max_{\xi\ge 0 \,  ,  \xi_{ii}  \le 1 \, \forall i}  \min_{\tau  \in  \psh} \frac{ \Tr [  \xi    \rho ] }  { \Tr [  \xi    \tau ] }\, .
\end{align}
The second equation follows from  Proposition \ref{prop:channel}, which guarantees that there exists a passivization-covariant channel $\mc C$ such that $\mc C-  \mc Q$ is completely positive. Hence, one has $d\,  \mc Q^\dag   (  |\phi_+\rangle\langle \phi_+|)   \le d\,   \mc C^\dag  (I) $.  Since $d\,   \mc C^\dag  (I)$  is a correlation matrix (Lemma \ref{lem:ppccorrelation}),  $d\,  \mc Q^\dag   (  |\phi_+\rangle\langle \phi_+|)$ is a sub-correlation matrix.   Eqs.  (\ref{fff}) and (\ref{ggg}) then conclude the proof.
\end{proof}



\end{document}